\documentclass{article}
\usepackage{graphicx} 
\usepackage{fullpage}
\usepackage{parskip}
\usepackage{natbib}

\usepackage{amsmath,eqnarray,mathtools, amsthm}
\usepackage{amsfonts, dsfont, bm, courier}
\usepackage{graphicx,xcolor,placeins,caption}
\usepackage{enumitem}
\usepackage{subcaption,afterpage}
\usepackage{natbib}
\usepackage{xcolor}
\usepackage[colorlinks,allcolors=teal]{hyperref}
\usepackage{cleveref}
\usepackage{algorithm}
\usepackage{algpseudocode}
\usepackage{booktabs}
\usepackage{longtable}
\usepackage{tikz-cd}
\usepackage{array}
\usepackage{comment}
\usepackage{thm-restate}
\usepackage{xr-hyper}

\usepackage{nicefrac}

\DeclareRobustCommand{\rauth}{%
  \raisebox{0.9ex}{%
    \tikz[baseline=(r.base)]{
      \node[
        draw,
        circle,
        inner sep=0.15pt,
        minimum size=0.7em,
        line width=0.35pt
      ] (r) {\fontsize{5}{5}\selectfont r};
    }%
  }%
}

\usepackage{tikz}
\usetikzlibrary{arrows.meta, shapes.geometric, calc,positioning,fit}
\usepackage[most]{tcolorbox}
\definecolor{boxblue}{RGB}{100, 160, 200}
\definecolor{boxbluedark}{RGB}{60, 120, 160}
\definecolor{boxred}{RGB}{210, 105, 105}
\definecolor{boxreddark}{RGB}{180, 75, 75}
\definecolor{boxgreen}{RGB}{130, 175, 120}
\definecolor{boxgreendark}{RGB}{90, 135, 80}
\definecolor{lightpurple}{HTML}{F4EEFF}
\definecolor{darkpurple}{HTML}{5B2A86}
\definecolor{softpurple}{HTML}{FAF7FF}
\definecolor{midpurple}{HTML}{8A5FBF}
\usepackage{placeins}

\definecolor{procIback}{HTML}{F2F8FC}
\definecolor{procIframe}{HTML}{A8CCE6}
\definecolor{procItitle}{HTML}{B9D9EF}
\definecolor{procItitletext}{HTML}{244A66}

\definecolor{procIIback}{HTML}{E0EDF7}
\definecolor{procIIframe}{HTML}{5E95C1}
\definecolor{procIItitle}{HTML}{4F88B7}

\definecolor{procIIIback}{HTML}{CCDDEA}
\definecolor{procIIIframe}{HTML}{285A7D}
\definecolor{procIIItitle}{HTML}{1F4E70}
\tcbset{
    procedurestyle/.style={
        enhanced,
        boxrule=0.7pt,
        arc=2.5mm,
        outer arc=2.5mm,
        left=8pt,
        right=8pt,
        top=6pt,
        bottom=5pt,
        width=0.92\linewidth,
        center,
        before skip=8pt,
        after skip=8pt,
        coltitle=white,
        fonttitle=\bfseries\sffamily,
        before upper={
            \setlength{\parskip}{2pt}
            \setlength{\parindent}{0pt}
        },
        attach boxed title to top left={
            yshift=-2.5mm,
            xshift=8mm
        },
        boxed title style={
            boxrule=0pt,
            arc=1.5mm,
            outer arc=1.5mm,
            left=6pt,
            right=6pt,
            top=2pt,
            bottom=2pt
        }
    }
}

\newtcolorbox{SimulatethenRoutebox}{
    procedurestyle,
    colback=procIback,
    colframe=procIframe,
    colbacktitle=procItitle,
    coltitle=procItitletext,
    title={I.\ Simulate--then--Route}
}

\newtcolorbox{RoutethenSimulatebox}{
    procedurestyle,
    colframe=procIIframe,
    colback=procIIback,
    colbacktitle=procIItitle,
    title={II.\ Route--then--Simulate}
}

\newtcolorbox{PredictRouteSimulatebox}{
    procedurestyle,
    colframe=procIIIframe,
    colback=procIIIback,
    colbacktitle=procIIItitle,
    title={III.\ Predictive Route--then--Simulate}
}

\newtcolorbox{promptbox}{
    enhanced,
    breakable,
    colback=softpurple,
    colframe=midpurple!80!black,
    boxrule=0.6pt,
    arc=2mm,
    outer arc=2mm,
    left=8pt,
    right=8pt,
    top=7pt,
    bottom=7pt,
    before skip=8pt,
    after skip=8pt,
    fontupper=\small
}

\DeclareMathOperator{\VCdim}{VCdim}

\theoremstyle{plain}
\newtheorem{theorem}{Theorem}
\newtheorem{proposition}{Proposition}
\newtheorem{lemma}{Lemma}
\newtheorem{corollary}{Corollary}

\theoremstyle{definition}
\newtheorem{definition}{Definition}
\newtheorem{assumption}{Assumption}
\newtheorem{example}{Example}

\theoremstyle{remark}

\newtheorem{claim}{Claim}

\begin{document}

\title{Social Choice Foundations for \\ Simulation-Augmented Generation}

\author{%
Sonja Kraiczy\textsuperscript{5,6}\rauth,
Smitha Milli\textsuperscript{1}\rauth,
Ratip Emin Berker\textsuperscript{2},
Avinandan Bose\textsuperscript{1,3}, \\
Brandon Amos\textsuperscript{1},
Jamelle Watson-Daniels\textsuperscript{1},
Maximilian Nickel\textsuperscript{1}, \\
Edith Elkind\textsuperscript{4},
Ariel D. Procaccia\textsuperscript{1,7}
\\[1ex]
\small
\textsuperscript{1}FAIR at Meta,
\textsuperscript{2}Carnegie Mellon University,
\textsuperscript{3}University of Washington, \\
\small
\textsuperscript{4}Northwestern University,
\textsuperscript{5}University of Oxford,
\textsuperscript{6}PrincInt,
\textsuperscript{7}Harvard University
\\[0.5ex]
\small \rauth \ Equal contribution; author order randomized.
}

\maketitle
\begin{abstract}
Simulation-augmented generation (SAGE) \citep{sage2026} is a recent technical proposal in which models simulate individuals' viewpoints at inference time in order to provide more representative answers to contentious user queries. A core challenge for SAGE is making inference-time simulation efficient without sacrificing representation quality. We introduce the first formalization of this problem, based upon an axiom from proportional clustering known as metric proportional justified representation+ (mPJR+) which is the strongest proportionality axiom known to always be satisfiable by centroid-based clustering. We prove that to proportionally represent the viewpoints of a population of $n_H$ humans on a given prompt, we need only create simulations of $n \ll n_H$ individuals, and at inference time, need only dynamically route to $k \ll n$ of those simulations based upon the prompt. This twofold reduction still yields approximate proportional representation guarantees for the entire population. Empirically, across two domains—political questions and personal advice—our proposed routing algorithm achieves higher mPJR+ satisfaction rates than $k$-means-based or random selection baselines.

\end{abstract}


\section{Introduction}

\citet{sage2026} recently propose \textbf{simulation-augmented generation} (SAGE) as a means of producing representative model responses to user queries on socially-contentious topics. At inference time, given a user prompt, the model queries generative simulations of individuals in a context-dependent target population and synthesizes their answers into a final response that is representative of the viewpoints of that population. For example, given the prompt, ``Should we have mandatory national military service?'', a model might identify an appropriate population, such as the user's country, and ask simulated individuals in that population what considerations matter most. Their responses might highlight concerns such as the need for defense preparedness or the opportunity cost of mandatory military service. The model then synthesizes these perspectives into a final response that reflects the range of viewpoints present in the target population.

Here we focus on a key technical challenge for achieving the vision of SAGE. Specifically, while much work in machine learning focuses on improving simulation fidelity~\citep{wu2026humanlm,naous2026flipping}, the shift to using simulations as an inference-time tool introduces an additional requirement: \emph{scalability}. A user-facing AI model must satisfy three constraints: \emph{latency}, responding to requests within acceptable timeframes; \emph{compactness}, managing context windows to support extended multi-turn conversations; and \emph{computational cost}, ensuring efficient resource utilization. To meet these constraints, \citet{sage2026} include a \emph{routing} step in which only a small number of individual simulations are actually queried at inference time (cf.~\Cref{fig:sage-architecture}), which in turn improves latency, reduces computational
costs, and yields a response set compact enough to fit in the context window.

The central challenge, though, is how to make the routing step efficient without sacrificing representativeness with respect to the full population. Concretely, we formulate the routing problem as follows. We are given a population of $n_H$ humans (e.g., millions, such as the population of a country) from which we construct simulations of a sample of $n \ll n_H$ individuals (e.g., $n = 1000$). Routing is the task of dynamically selecting, for a given prompt, $k \ll n$ of these individuals (e.g., $k = 10$) to simulate such that they satisfy formal representation guarantees with respect to the overall population of $n_H$ humans.\footnote{In this paper, we focus solely on the routing step, and leave the investigation of how these responses should subsequently be synthesized into a final answer for future work.}

\subsection{Overview of Approach}

\begin{figure}[t]
\includegraphics[width=\linewidth]{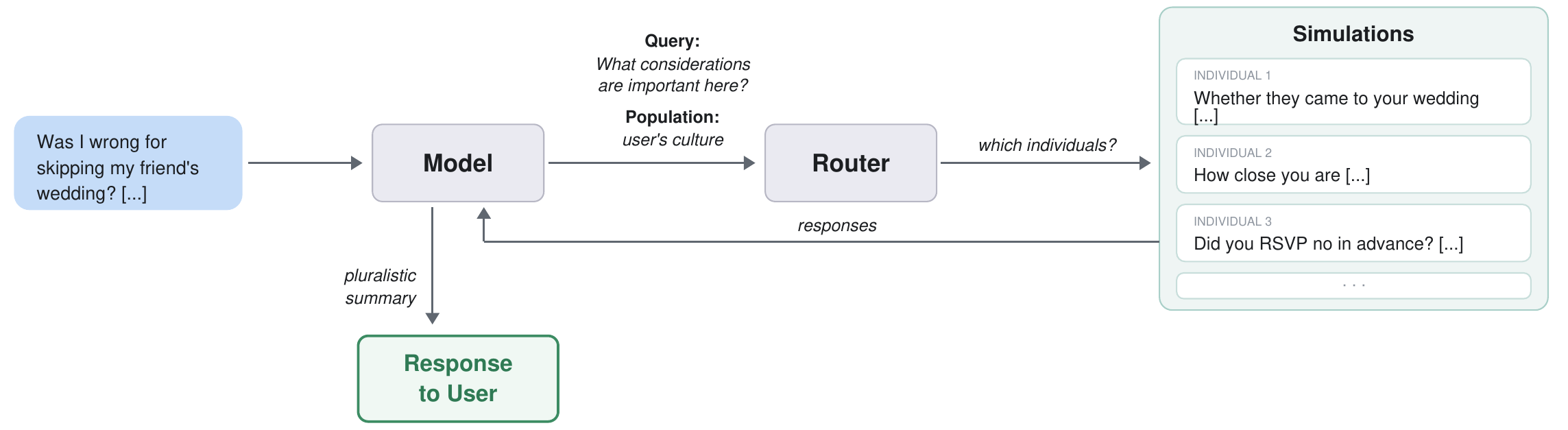}
\caption{\textbf{SAGE architecture.} Given a user query, the model identifies a prompt-appropriate target population and uses a \emph{router} to select which individual simulations to query. The selected simulations return responses that the model can synthesize into a final pluralistic answer.}\label{fig:sage-architecture}
\end{figure}
\textbf{Formalizing representation.} To formulate such guarantees, we turn to \emph{social choice theory}---a field which has long studied collective decision-making under diverse preferences. However, while classical social choice theory largely focuses on discrete and finite sets of alternatives, in our setting, simulations generate open-ended responses to a large and essentially unconstrained set of prompts. Preferences are therefore most naturally represented at the individual level via reward functions: each human $i$ has a reward function $r_i(p,o)$ that yields their reward from a response $o$ for prompt $p$.

The representation guarantee we then seek is: given a prompt $p$, we aim to select $k$ simulations to query so that their responses \emph{proportionally represent} the viewpoints of the overall population of humans. Informally, when a large group of humans has similar reward for some responses to this prompt, then a proportionate fraction of the $k$ simulated responses should deliver appropriately high reward to this group. Moreover, the reward level that is guaranteed to the group depends on how cohesive their preferences are: groups with more aligned preferences can be guaranteed higher reward.

\textbf{Efficiently guaranteeing representation.} We prove that to represent a population of $n_H$ humans, we need only simulate a random sample of $n \ll n_H$ individuals where $n$ grows logarithmically in $n_H$. We further prove that at inference time, querying just $k \ll n$ prompt-dependent simulations suffices to achieve approximate representation guarantees. To accomplish this, we adapt proportional clustering algorithms to our setting and use them to route to $k$ simulations given a prompt. The degree of representation achieved by their responses depends on the fidelity gap between all simulations and the humans they represent. Empirically, when we have black-box simulation access and do not have access to rewards, the degree of representation additionally depends on the error that arises from predicting the simulations' viewpoint embeddings.

\subsection{Our Contributions}

Below we state our technical contributions.

\begin{enumerate}[leftmargin=15pt]
    \item \textbf{Formal representation guarantees.} We provide the first formalization of representative routing in SAGE through the careful adaptation of axioms from the computational social choice literature, chiefly \emph{metric proportional justified representation plus (mPJR+)}~\citep{Kellerhals24:Proportional,he2026checkpleaseverifiablyfair}.

    \item \textbf{Efficient proportional clustering.} We develop a faster implementation of the spatial expanding approval rule (SEAR),\footnote{More precisely, we consider a variant of this algorithm, discussed in \Cref{sec:geom-rep} and \Cref{app:geom-rep}.} a proportional clustering algorithm which satisfies the representation axiom we study. Our implementation achieves an $O(n^2 \log n)$ amortized running time, improving upon the previous $O(k^2 n^4)$\footnote{Versions of this paper have been presented at several workshops, including EEAMO '25, IASEAI '26, and ICML '26. The cited result refers to the version of the paper available at the time we established our result. In July 2026, the authors updated their paper with a runtime of \(O(n^2(\log n + k))\).} bound, which makes it practical for inference-time deployment.
    
    \item \textbf{Efficient querying of simulations.} Under the assumption of a factorized reward structure across individuals, we prove that an adaptation of proportional clustering yields an efficient inference-time procedure that only requires querying $k \ll n$ simulations for a given prompt. The responses from these $k$ simulations are approximately representative of the full population of simulations, with the level of approximation depending on the fidelity gap between simulations and their corresponding humans.

    \item \textbf{Scaling from simulations to populations.} Our ultimate target is proportional representation of an underlying human population of size $n_H$, which may be far larger than the simulated population size $n$. We prove that it suffices to run our method on a population of $n \ll n_H$ simulations that simulates a random (possibly stratified) sample of the true population: specifically, $n$ need grow only logarithmically in $n_H$, and when the reward model dimension $d$ is small, $n$ can depend only on $d$ and be independent of $n_H$ and $k$.
    
    \item \textbf{Empirical validation.} We evaluate representative routing in two domains: simulating political viewpoints on Remesh~\citep{konya2023democratic} and personal advice from Reddit's ``Am I the Asshole?'' subreddit. Since we only have black-box simulation access, and no access to human rewards, our experiments evaluate representativeness with respect to the pool of $n$ simulations. Across both domains, our proposed algorithm achieves higher mPJR+ satisfaction than $k$-means or uniform random selection baselines.
    
\end{enumerate}

\subsection{Related Work} \label{sec:related_work}

\textbf{Social choice.} Beginning with the work of \citet{aziz2017justified}, an influential line of research in social choice has formalized notions of proportional representation in committee elections and participatory budgeting~\citep{AEHL+18,PPS21}. Broadly speaking, these notions ensure that every sufficiently large subset of voters with sufficiently cohesive preferences is represented in the outcome. 
A similar approach to proportional representation has also been explored in the context of clustering~\citep{Chen20:Proportionally,prf,Kellerhals24:Proportional}, where the same underlying idea is given a geometric formulation: agents and representatives lie in a metric space, and proportional representation is defined in terms of the distances between them. We adopt this geometric perspective as well. Our treatment is closely related to the proportional representative fairness (PRF) property, introduced by \citet{prf}, and in particular builds on mPJR+ ~\citep{Kellerhals24:Proportional,he2026checkpleaseverifiablyfair}.

Our work is similar in spirit to that of \citet{gensocialchoice} and \citet{boehmer2025generative}. Their goal, too, is to generate a slate of viewpoints that is formally representative of the views of a given set of simulations. Their approach uses \emph{discriminative queries} and \emph{generative queries} as building blocks. Realizing the latter, however, requires numerous calls to an LLM. Thus, in the context of SAGE, their approach provides an alternative way to satisfy the representation criterion, but does not satisfy the inference-time scalability criterion.

In general, most works on social choice for AI alignment focus on aggregation of preferences or rewards prior to deployment~\citep{ge2024,
conitzer2024,golz2025distortion,hokr2026,maurarivero2025jackpotalignmentmaximallottery}, rather than at inference time as our approach does. A recent exception is  concurrent work by \citet{freedman2026adaptivepluralisticalignmentpipeline}, who proposes generating diverse candidate responses at inference time and then applying a social choice voting procedure with heterogeneous reward functions to select the winner.

\textbf{Pluralism in LLMs.} Simulation-augmented generation~\citep{sage2026} is a means of creating model responses that are representative of a range of perspectives, which aligns with several threads in alignment, such as Overton pluralism in pluralistic alignment~\citep{sorensen2024}, reasonable pluralism~\citep{fisher2025position}, and work on mitigating disempowerment~\citep{sharma2026s}. While LLM-based simulations have been developed for other purposes—such as replicating social science experiments~\citep{hewitt2024predicting,aher2023,kolluri-etal-2025-finetuning} or serving as user simulators for model evaluation~\citep{yao2025taubench}—SAGE introduces unique scalability requirements by deploying simulations as inference-time calls in user-facing systems. A closely related precursor is \emph{modular pluralism} by \citet{feng-etal-2024-modular}, which queries $k$ ``community LLMs'' at inference time to surface diverse perspectives. However, their approach relies on a fixed, pre-specified set of community models, which cannot adapt to the full diversity of user queries. Our work addresses this limitation: we maintain the scalability of querying only $k$ simulations while enabling adaptive, prompt-specific selection from a larger population of $n$ simulated individuals, with formal guarantees that the selected simulations proportionally represent the broader population.

\section{Inference-Time Social Choice with Simulations}
\label{sec:reward-commutation}
In this section, we study representative routing when human reward functions
are available at inference time. For each human, we assume access to both a
reward function over responses and a simulation that generates responses on
that human's behalf. We first formalize what it means for a slate of simulated
responses to represent a population at inference time. We then introduce
\emph{Simulate--then--Route}, a conceptually complete approach that queries every
simulation and subsequently selects a representative slate, drawing on notions of
proportional representation from computational social choice \citep{aziz2017justified}.
This exhaustive approach exposes three scalability bottlenecks. First,
representative selection itself must be computationally efficient, so in \Cref{sec:simulate-then-route} we give an
implementation with running time nearly linear in the input
size. Second, maintaining a simulation for every member of a large target
population is unrealistic. In \Cref{sec:geom-rep} we show that, under shared structure in the reward
functions, representation with respect to a much smaller (random) simulation
pool transfers to the underlying population with high probability. Finally,
querying every simulation at inference time can be prohibitively expensive and
slow. We show that routing and simulating approximately commute in \Cref{sec:route-then-simulate}, allowing us
to first select $k\ll n$ representative simulations and query only those,
while preserving proportional representation approximately.

\subsection{Representation under Reward Access}
\label{sec:simulate-then-route}

Fix a prompt $p$ and a response space $\mathcal O$. Let $\mathcal{N}$ be a set of
$n$ humans for which we have access to their reward function
$r_i(p,\cdot):\mathcal O\to\mathbb R$ and a simulation $A_i$, which given prompt $p$ simulates human $i$'s response $o_i:=A_i(p)$. We use these to define a candidate set of responses, namely the indexed collection
$C:=(o_i)_{i\in \mathcal{N}}$ so that even if simulated responses coincide they remain distinct candidates. Our goal is to select a slate of $k$ from these candidates.

For $\ell\in[k]$, we say that a group $S\subseteq \mathcal{N}$ is \emph{$\ell$-large} if $|S|\ge \ell n/k$; since such a group is an $\ell/k$ fraction of the entire population, it has a proportional claim to $\ell$ of the $k$ slate positions. The strength of this claim depends on how strongly the group can agree on $\ell$ responses. In particular, if for an unselected response $c$ every member of $S$ assigns $c$ reward at least $t$, then reward PJR+ (formalized below) requires the slate to contain at least $\ell$ responses that give reward at least $t$ to some member of $S$. Thus, larger groups are entitled to proportionally more responses, while groups that share a highly valued common response are guaranteed representation at a correspondingly high reward level.

\begin{definition}
A slate $Y\subseteq C$ of $k$ responses satisfies
\emph{reward proportionally justified representation}  (rPJR+) on
prompt $p$ if, for every $\ell\in[k]$, every unselected response
$c\in C\setminus Y$, and every $\ell$-large group $S\subseteq \mathcal N$, at least
$\ell$ responses $o\in Y$ satisfy $\max_{i\in S}r_i(p,o)\ge
\min_{i\in S}r_i(p,c)$.
\label{def:it-prf}
\end{definition}
Equivalently, reward PJR+ rules out a simple form of justified deviation: an
$\ell$-large group cannot point to an unselected response $c$ that they all
value at least $t$ while fewer than $\ell$ selected responses attain value
$t$ for any member of the group.
Moreover, adding further responses to $C$ only strengthens reward PJR+ as it gives the groups more flexibility to deviate.

\emph{Simulate--then--Route} provides a direct approach to achieve this representation objective: first query every simulation to obtain a response, then select a representative slate of $k$ responses from the resulting candidate set. Here, routing determines which generated responses to retain. Access to the realized responses and the human reward functions allows this selection to be based directly on the rewards $r_i(p,o_j)$ that each human assigns to each candidate. Specifically, \textsc{ResponseRouter} takes the reward functions $(r_i)_{i\in\mathcal N}$, the generated responses $(o_i)_{i\in\mathcal N}$, and the target slate size $k$, and selects the slate using the Reward Expanding Approvals Rule (REAR), described below.

\begin{SimulatethenRoutebox}
\label{proc:simulate-then-route}
\raggedright

\textbf{Input:} prompt \(p\), reward functions
\(\{r_i\}_{i\in\mathcal{N}}\),
simulations \(\{A_i\}_{i\in\mathcal{N}}\), slate size \(k\).

\textbf{Output:} slate \(Y\) of \(k\) simulated viewpoints.

\medskip

\begin{enumerate}[leftmargin=*, itemsep=0.5em]

    \item \textbf{Simulate all responses.}
    Query every simulation \(i\in\mathcal{N}\) to obtain response
    \(
        o_i := A_i(p).
    \)

    \item \textbf{Select $k$ responses to route to.} Return \(Y:= \textsc{ResponseRouter}((r_i)_{i\in \mathcal{N}},(o_i)_{i\in \mathcal{N}},k).\)
\end{enumerate}
\end{SimulatethenRoutebox}

\paragraph{Reward Expanding Approvals Rule (REAR).}
Given reward functions $(r_i)_{i\in\mathcal N}$, an indexed collection of
 responses $O=(o_1,\ldots,o_m)$, and a target slate size $k$, each
human $i$ starts with budget $b_i=1$, and the slate is initialized as empty.

We continuously decrease the reward threshold; at reward threshold $\tau$, human $i$ approves candidate $o_j$ whenever
$r_i(p,o_j)\ge\tau$. Whenever the approvers of an unselected candidate have
total remaining budget at least $n/k$, that candidate is added to the slate and
$n/k$ units of budget are deducted in total from its approvers. Ties are
resolved according to a fixed rule. The procedure terminates after selecting
$k$ responses.

\begin{restatable}[Representation Guarantee]{theorem}{gtsitprf}
\label{thm:gts-it-prf}
Simulate--then--Route returns
a slate satisfying reward PJR+.
\end{restatable}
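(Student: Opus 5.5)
The plan has two parts. First I reduce the theorem to a statement about REAR on the candidate set that the procedure itself generates. Then I prove that REAR satisfies reward PJR+ on that set by a budget argument.

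\textbf{Reduction.} Step~1 of Simulate--then--Route queries every $A_i$ and produces exactly the indexed collection $C=(o_i)_{i\in\mathcal N}$ used in \Cref{def:it-prf}. Step~2 returns $Y=\textsc{ResponseRouter}((r_i),(o_i),k)$, which by definition runs REAR on these candidates with the same reward functions $r_i(p,\cdot)$ that appear in the axiom. So the output satisfies reward PJR+ on $p$ if and only if the REAR output on $C$ does. Because $C$ is indexed, coinciding responses remain distinct candidates, which matches the definition.

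\textbf{Well-definedness of REAR.} I check that REAR always returns exactly $k$ responses. The total budget is $n$, and each purchase deducts exactly $n/k$, so after $j<k$ purchases a total of $n(k-j)/k\ge n/k$ remains. There are $n\ge k$ candidates, so some candidate is still unselected. Once $\tau$ drops below $\min_{i,j} r_i(p,o_j)$, every human approves every candidate, so that candidate's approvers hold budget at least $n/k$ and it gets bought. Hence REAR performs exactly $k$ purchases, and when it terminates all budget is exhausted.

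\textbf{Main argument.} Suppose the axiom fails. Then there are $\ell\in[k]$, an unselected candidate $c\in C\setminus Y$, and an $\ell$-large group $S$ such that, with $t=\min_{i\in S}r_i(p,c)$, fewer than $\ell$ slate members $o$ satisfy $\max_{i\in S}r_i(p,o)\ge t$. Call such slate members \emph{good}. I track $S$'s budget up to the moment the threshold reaches $t$ and all purchases triggered at $\tau=t$ have been processed.

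\begin{itemize}
\item Budget is only ever deducted from approvers of the purchased candidate.
\item Any purchase up to this moment happens at some threshold $\tau\ge t$. If it charges a member $i\in S$, then $r_i(p,o)\ge\tau\ge t$, so that candidate is good.
\item There are fewer than $\ell$ good candidates, and each purchase charges $n/k$ in total. So $S$ has spent at most $(\ell-1)n/k$ and retains at least $\ell n/k-(\ell-1)n/k=n/k$.
\end{itemize}

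There are now two cases at this moment.

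\begin{itemize}
\item If $k$ candidates are already selected, the total remaining budget is $0$ by the well-definedness step. This contradicts $S$ holding at least $n/k>0$.
\item Otherwise REAR is still running at threshold $t$. Every member of $S$ approves $c$ there, so $c$'s approvers hold at least $n/k$. The rule would then add $c$, contradicting $c\notin Y$ (an unselected candidate can only be selected later, never un-selected).
\end{itemize}

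\textbf{Main obstacle.} The delicate step is the timing of the continuous process. I have to make the \emph{closure} at threshold $t$ precise: at $\tau=t$, REAR keeps purchasing until no unselected candidate is affordable, regardless of how ties are broken. I also have to check that purchases at thresholds below $t$ are irrelevant, since the contradiction is obtained before they occur. Fixing this closure convention explicitly is what makes the budget count above rigorous.
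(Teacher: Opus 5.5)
Your proposal is correct and is essentially the paper's argument. The paper also fixes $\tau=\min_{i\in S} r_i(p,c)$, notes that every member of $S$ approves $c$ by threshold $\tau$, and concludes from $c\notin Y$ and $|S|\ge \ell n/k$ that $S$ paid into at least $\ell$ selected responses at thresholds at least $\tau$; your version just makes explicit what the paper leaves implicit, namely that REAR exhausts all budget after exactly $k$ purchases (covering termination before threshold $\tau$) and that all purchases at $\tau$ itself are processed first.
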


From an algorithmic perspective, the main issue is efficiency, since we
ultimately want to enforce representation guarantees at inference time while
respecting latency constraints.
\citet{aziz2020expanding} show that the Expanding Approvals Rule (EAR), which
corresponds to the special case in which rewards encode ordinal rankings on a
common scale,\footnote{Their tie-breaking rule differs from ours, but this
choice does not affect the axiomatic guarantees that we consider.}
can be implemented in $O(nm^2)$ time.
Moreover, \citet{prf} analyze a closely related algorithm, the Spatial
Expanding Approvals Rule (SEAR), in which voter--candidate scores are induced
by distances and are therefore more general than ordinal rankings. Their
implementation runs in $O(k^2n^4)$ time in the case where $m=n$.

For our application, these running times would be prohibitive even for
simulated populations of moderate size, such as $n\approx 10^3$. We show that
REAR admits a substantially faster and more streamlined implementation,
which is crucial in our inference-time setting and makes this algorithmic
approach practical in our framework.
The improvement comes from more efficient bookkeeping, together with choices of
tie-breaking and payment rules that permit a better amortized running time.

\begin{restatable}{theorem}{searruntime}
\label{thm:rear-runtime}
REAR can be implemented in
\(O(mn\log n)\) time.
\end{restatable}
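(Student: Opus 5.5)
The plan is to implement REAR as a single sweep over the reward threshold $\tau$, processed as a sequence of discrete events. REAR's definition says that $n/k$ units are deducted ``in total from its approvers'' but does not say how the deduction is split among them. Together with the fixed tie-breaking rule, this leaves the split to be instantiated. I would fix one admissible instantiation: the approvers of a selected candidate pay in a fixed order (say, by index), each paying as much of its remaining budget as is still owed, until $n/k$ has been collected. This is an instance of REAR as stated, not a modification of it, so \Cref{thm:gts-it-prf} continues to apply. The consequence I will exploit is that in each selection every paying approver except at most the last one is exhausted to budget $0$.

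\textbf{Step 1 (event list).} Between consecutive distinct reward values nothing happens, so the continuous decrease of $\tau$ only matters at the $nm$ values $r_i(p,o_j)$. I would sort all $nm$ pairs $(i,j)$ by decreasing reward, breaking ties by the fixed rule, in $O(nm\log(nm))$ time. In the Simulate--then--Route setting $m=n$, and more generally whenever $m\le\mathrm{poly}(n)$, this is $O(mn\log n)$. Alternatively, one can sort each candidate's $n$ rewards, at cost $O(mn\log n)$, and then merge the $m$ sorted lists.

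\textbf{Step 2 (bookkeeping).} For every unselected candidate $j$ I would maintain $B_j=\sum_{i\text{ approves } j} b_i$, together with the list of its current approvers that still have positive budget. Processing an approval event $(i,j)$ takes $O(1)$ time: append $i$ to $j$'s list, add $b_i$ to $B_j$, and test whether $B_j\ge n/k$. The key observation is that $B_j$ only increases at approval events. Payments only decrease the values $B_{j'}$, so a candidate can become eligible only at its own approval event, and no global rescan is ever needed. When $j$ is selected, I would walk its approver list in order and charge payments, which costs $O(n)$ per selection and $O(kn)$ in total. Every voter whose budget changes must then have that change propagated to $B_{j'}$ for each candidate $j'$ it currently approves, at cost $O(m)$ per change.

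\textbf{Step 3 (amortization), the main obstacle.} The danger is that budget changes occur $\Theta(kn)$ times, each costing $m$ to propagate, which would give $O(knm)$ in total. The fixed-order payment instantiation resolves this. Each selection produces at most one partial payer, and every other budget change drives a voter to $0$, which can happen at most once per voter. So there are at most $n+k\le 2n$ budget-change events, costing $O(nm)$ overall. Exhausted voters are removed lazily from approver lists, and this cleanup is charged to the list insertions.

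Summing the costs gives $O(nm\log(nm))$ for sorting, $O(nm)$ for approval events, $O(nm)$ for propagating budget changes, and $O(kn)$ for payments. Since $k\le m$, this is $O(mn\log n)$ in the regime $m\le\mathrm{poly}(n)$ relevant here. The sweep simulates exactly the continuous-threshold process, so the output coincides with REAR's.
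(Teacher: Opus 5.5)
Your amortization is the paper's: a fixed-order payment leaves at most one partial payer per round, so there are at most $n+k\le 2n$ budget changes, each propagated to the $m$ candidates in $O(m)$. For $m\le \mathrm{poly}(n)$ your sweep is a correct implementation of REAR. The theorem, however, is stated for arbitrary $m$, and your argument only yields $O(mn\log(mn))$.

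The gap is the global event order. Sorting all $mn$ pairs costs $\Theta(mn\log(mn))$. Your alternative does not help: merging $m$ sorted lists of length $n$ needs $\Omega(mn\log m)$ comparisons, since there are $(mn)!/(n!)^m$ possible interleavings. So no implementation that materializes the global order of approval events can reach $O(mn\log n)$ when $m$ is superpolynomial in $n$. The restriction you impose at the end therefore leaves a case of the statement unproved; it is not a harmless simplification.

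The missing idea is that you never need to interleave different candidates' events. The paper keeps, for each candidate $j$:
\begin{enumerate}
\item its own list $L_j$ of voters sorted by reward, costing $O(mn\log n)$ in total, together with an inverse rank map;
\item a pointer $i(j)$;
\item the prefix budget $B(j)=\sum_{h\le i(j)}b_{L_j[h]}$.
\end{enumerate}
Budgets only decrease, so the minimal prefix reaching $n/k$ only grows. Hence $i(j)$ is monotone and all pointer advances cost $O(mn)$ overall. After advancing, the reward that voter $L_j[i(j)]$ assigns to $o_j$ is exactly the threshold at which $j$ becomes funded under the current budgets.

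Each round then picks the winner by a linear scan over the unselected candidates. This costs $O(m)$ per round and $O(mk)\le O(mn)$ overall, since $k\le n$, and it replaces your global ordering of events, removing the $\log m$ factor. Payments are propagated as in your Step 3, using the rank map to test in $O(1)$ whether the paying voter lies in the maintained prefix of each unselected candidate. If you replace your Steps 1--2 with this round-based scheme, the rest of your argument goes through unchanged.
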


Two limitations remain. First, our guarantee currently applies only to the humans for whom we have simulation access, rather than the larger target population, motivating the population-transfer results in \Cref{sec:geom-rep}. Second, Simulate--then--Route requires $n$ expensive simulation queries to return only $k$ responses; when simulations are implemented by large autoregressive models, potentially with inference-time reasoning, this can be prohibitive in both latency and compute, motivating the commutation results in \Cref{sec:route-then-simulate}.
\subsection{Geometric Representation under Factorized Rewards}
\label{sec:geom-rep}

Simulate--then--Route performs routing directly according to the humans'
rewards over the realized responses. 
We now consider reward functions
that factor as an inner product between a shared response embedding
and a prompt-conditioned individual preference embedding. This allows
heterogeneous preferences to be expressed relative to a common response
space. The use of such shared representations
with individual preference parameters is motivated by personalized reward
modeling \citep{chen2025pal,bose2025lorepersonalizingllmslowrank} and by matrix-factorization
approaches to collaborative filtering \citep{koren2009matrix}.
We formalize this structure, together with normalization and realizability
conditions, as follows.

\begin{assumption}[Factorized rewards with a shared response basis]
\label{assump:factorized-reward}
There are maps
\(\psi:\mathcal O\to\mathbb S^{d-1}\) and
\(\phi_i:\mathcal P\to\mathbb S^{d-1}\), \(i\in\mathcal N\), such that, for
every prompt \(p\) and response \(o\),
\[
    r_i(p,o)=\phi_i(p)^\top\psi(o).
\]
Moreover, for every $i\in\mathcal N$ and prompt $p$, there exists a response
$o_i^*\in\mathcal O$ such that $\psi(o_i^*)=\phi_i(p)$.
\end{assumption}

This assumption imposes a shared representation of responses through $\psi$
while still allowing each human to have their own prompt-conditioned reward
embedding $\phi_i(p)$. 
Throughout this section, the maps in Assumption~\ref{assump:factorized-reward} are available for evaluation.

\paragraph{From Sampled Simulations to Populations.}
Simulate--then--Route provides representation guarantees for the $n$
humans in $\mathcal N$ for whom we obtain responses. We now consider a
larger target population $\mathcal H$, of size $n_H:=|\mathcal H|$.
Obtaining a response for every member of a population as large as the
United States may be impractical; instead, we ask whether it suffices to
obtain responses for only a much smaller random sample.

Here we assume that Assumption~1 holds more generally for every human in $\mathcal H$, with
the same response map $\psi$. For each human $i\in\mathcal H$ and prompt
$p$, fix a response $o_i^*(p)\in \mathcal{O}$ satisfying
$\psi(o_i^*(p))=\phi_i(p)$. For a fixed prompt $p$, let
$C_{\mathcal H}:=(o_i^*(p))_{i\in\mathcal H}$ denote the corresponding
indexed full-population candidate collection.

We draw $\mathcal N$ uniformly from the $n$-element subsets of
$\mathcal H$ and obtain only the responses of the sampled humans, giving
$C_{\mathcal N}:=(o_i^*(p))_{i\in\mathcal N}$. Thus, the slate is selected
from $C_{\mathcal N}$, while $C_{\mathcal H}$ serves as the conceptual
population-level benchmark. Importantly, the responses
$o_i^*(p)$ for $i\notin\mathcal N$ need not be obtained in order to
construct the slate.

Exact reward PJR+ need not transfer from the sample to the full
population. We therefore allow slack both in the size of groups that
receive a proportional claim and in the reward level at which that claim
is satisfied. For $\varepsilon\in[0,1]$, we say that a slate
$Y\subseteq C_{\mathcal N}$ satisfies
\emph{$(\varepsilon,\gamma_\beta)$-approximate reward PJR+} with respect
to $(\mathcal H,C_{\mathcal H})$ on prompt $p$ if, for every
$\ell\in[k]$, every comparison response
$c\in C_{\mathcal H}\setminus Y$, every $\beta\in[0,\pi]$, and every
group $S\subseteq\mathcal H$ satisfying
$|S|\geq(\ell/k+\varepsilon)n_H$ and
$r_i(p,c)\geq\cos(\beta)$ for all $i\in S$, at least $\ell$ responses
$o\in Y$ satisfy
$\max_{i\in S}r_i(p,o)\geq\cos(\gamma_\beta)$.
When $\varepsilon=0$, we simply write
$\gamma_\beta$-approximate reward PJR+. When additionally
$\gamma_\beta=\beta$, this recovers exact reward PJR+ with respect to
$(\mathcal H,C_{\mathcal H})$.

The following theorem shows that obtaining responses from only a random
sample of the population is sufficient to guarantee approximate
representation with respect to the full population and its full candidate
collection.

\begin{restatable}[Population Transfer]{theorem}{sampledpop}
\label{thm:sampled-population}
Fix a prompt $p$ and let $\varepsilon,\delta\in(0,1)$. For each
$i\in\mathcal H$, fix a response $o_i^*(p)$ satisfying
$\psi(o_i^*(p))=\phi_i(p)$. Let
$n=\left\lceil
\frac{1}{2\varepsilon^2}\log\frac{2kn_H}{\delta}
\right\rceil$ and draw $\mathcal N$ uniformly from the $n$-element subsets of $\mathcal H$
and let $C_{\mathcal N}:=(o_i^*(p))_{i\in\mathcal N}$. Then, with probability at least $1-\delta$, every size-$k$ slate
$Y\subseteq C_{\mathcal N}$ satisfying reward PJR+ with respect to the
sampled instance $(\mathcal N,C_{\mathcal N})$ satisfies
$(\varepsilon,\gamma_\beta)$-approximate reward PJR+ with respect to
$(\mathcal H,C_{\mathcal H})$, where
$\gamma_\beta=\min\{\pi,4\beta\}$.
\end{restatable}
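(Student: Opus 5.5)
The plan is a uniform concentration argument over a finite family of spherical caps, followed by a deterministic geometric transfer using the triangle inequality for angles on $\mathbb S^{d-1}$. Throughout, write $\angle(u,v)=\arccos(u^\top v)\in[0,\pi]$. This is a metric on the sphere. By Assumption~\ref{assump:factorized-reward}, $r_i(p,o)=\cos\angle(\phi_i(p),\psi(o))$, and for candidates $\psi(o_j^*(p))=\phi_j(p)$. A reward threshold $\cos\beta$ is therefore the same as an angular radius $\beta$, by monotonicity of $\cos$ on $[0,\pi]$. If $4\beta\ge\pi$, then $\gamma_\beta=\pi$ and the claim is trivial, since every reward is at least $\cos\pi=-1$. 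So we may assume $4\beta<\pi$.

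\textbf{Step 1 (finite family of caps and concentration).} For each center $j\in\mathcal H$ and each $\ell\in[k]$, let $\rho_{j,\ell}$ be the smallest radius such that the closed cap $B_{j,\ell}:=\{i\in\mathcal H:\angle(\phi_i(p),\phi_j(p))\le\rho_{j,\ell}\}$ contains at least $(\ell/k+\varepsilon)n_H$ humans. If no such radius exists, the pair $(j,\ell)$ is vacuous. This gives at most $kn_H$ fixed sets, each chosen independently of the sample. For each such set, $|\mathcal N\cap B_{j,\ell}|$ is hypergeometric. Hoeffding's inequality for sampling without replacement then gives
\[
\Pr\bigl[|\mathcal N\cap B_{j,\ell}|<\ell n/k\bigr]\le\exp(-2n\varepsilon^2)\le \delta/(2kn_H),
\]
by the choice of $n$. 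A union bound over all $kn_H$ pairs shows that, with probability at least $1-\delta$, every cap satisfies $|\mathcal N\cap B_{j,\ell}|\ge \ell n/k$. From here on I fix such a good sample and argue deterministically, which is why the conclusion holds for \emph{every} rPJR+ slate simultaneously.

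\textbf{Step 2 (reduce an arbitrary deviating group to a cap).} Take $\ell$, a comparison candidate $c=o_j^*(p)\in C_{\mathcal H}\setminus Y$, an angle $\beta$, and a group $S$ with $|S|\ge(\ell/k+\varepsilon)n_H$ and $r_i(p,c)\ge\cos\beta$ for all $i\in S$. Then $S$ lies in the cap of radius $\beta$ around $\phi_j(p)$, so $\rho_{j,\ell}\le\beta$. Set $T:=\mathcal N\cap B_{j,\ell}$. By Step 1, $T$ is $\ell$-large in the sampled instance. Any two members of $T$ are at angle at most $2\rho_{j,\ell}\le 2\beta<\pi$ from each other.

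\textbf{Step 3 (invoke rPJR+ on the sample and transfer back).} There are two cases.
\begin{itemize}
\item Some $m\in T$ has $o_m^*(p)\notin Y$. Every $i\in T$ has $r_i(p,o_m^*)\ge\cos(2\beta)$. Applying rPJR+ (Definition~\ref{def:it-prf}) to $T$ with candidate $o_m^*$ yields $\ell$ responses $o\in Y$, each with some $i_o\in T$ satisfying $\angle(\phi_{i_o},\psi(o))\le 2\beta$.
\item Every response of $T$ is selected. Then $Y$ contains $|T|\ge \ell n/k\ge\ell$ responses (using $n\ge k$), each of which is at angle $0$ from some member of $T$.
\end{itemize}
In either case, fix any $s\in S$ and one of these $\ell$ responses $o$. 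The triangle inequality gives
\[
\angle(\phi_s,\psi(o))\le\angle(\phi_s,\phi_j)+\angle(\phi_j,\phi_{i_o})+\angle(\phi_{i_o},\psi(o))\le\beta+\beta+2\beta=4\beta.
\]
Hence $\max_{i\in S}r_i(p,o)\ge\cos(4\beta)=\cos\gamma_\beta$, as required.

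The main obstacle is Step 1: the groups $S$ and the angles $\beta$ range over uncountably or exponentially many possibilities, so a union bound over them directly is impossible. The key trick is replacing each $S$ by the minimal cap around the comparison point $\phi_j$. This makes the relevant events depend only on $(j,\ell)$, which yields exactly the $kn_H$ count appearing in $n$. The cost of this replacement is the factor-$4$ angular blowup, which Steps 2 and 3 account for.
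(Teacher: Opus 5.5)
Your proof is correct and follows essentially the same route as the paper. It builds caps around each comparison point $\phi_j(p)$ for every pair $(j,\ell)$, applies Hoeffding's inequality for sampling without replacement with a union bound over the $kn_H$ pairs, invokes sampled rPJR+ on $T$ (using an unselected member of $T$, or else $T$'s selected members directly), and closes with the triangle inequality to get $4\beta$. The differences are cosmetic: you use closed minimal-radius caps instead of the $q_\ell$ nearest points with fixed tie-breaking, a one-sided Hoeffding bound (leaving slack $\delta/2$), and a case split on whether some member of $T$ is unselected rather than on whether fewer than $\ell$ are selected.
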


For example, taking $n_H=8\text{ billion}$, $k=10$, $\delta=0.05$, and $\varepsilon=0.03$ gives a sample size of roughly $9\times10^3$. Thus, the number of simulations needed to represent a large population can grow only logarithmically with the population size. In \Cref{app:sampling}, we also give a dimension-dependent bound for arbitrary groups where the required sample size is independent of both $n_H$ and $k$.

\paragraph{Proportional Clustering}
Assumption \ref{assump:factorized-reward} allows us to relate the reward-based routing problem from \Cref{sec:simulate-then-route} to the proportional clustering literature in social choice \citep{Chen20:Proportionally, prf, he2026checkpleaseverifiablyfair}. This insight will ultimately allow us to query only $k\ll n$ simulations while preserving representation in \Cref{sec:route-then-simulate}.

For a fixed prompt \(p\), let
\( v_i:=\phi_i(p)\) and \(
    c_j:=\psi(o_j).\)
Thus \(v_i\) is human \(i\)'s prompt-conditioned preference embedding, while
\(c_j\) is the embedding of candidate response \(o_j\). Let
\(V=(v_1,\ldots,v_n)\), let \(C=(c_1,\ldots,c_m)\), and let \(\Delta\)
denote Euclidean distance. We treat voters and candidates as indexed objects,
so distinct objects may occupy the same point.
The geometric analogue of reward PJR+ is the following proportional
representation criterion \citep{he2026checkpleaseverifiablyfair}.

\begin{definition}[mPJR+]
\label{def:prf}
A slate \(X\subseteq C\) of size \(k\) satisfies \emph{metric proportional
justified representation plus} (mPJR+) if, for every \(\ell\in[k]\), every
unselected candidate \(c\in C\setminus X\), and every \(\ell\)-large group
\(S\subseteq\mathcal N\), at least \(\ell\) members \(x\in X\) satisfy
\[
    \min_{i\in S}\Delta(v_i,x)
    \le
    \max_{i\in S}\Delta(v_i,c).
\]
\end{definition}

If we want to specify the candidate set and voter set we say $X$ satisfies mPJR+ w.r.t.\ $(V,C)$, or w.r.t.\ $V$ if $V=C$.

In the metric setting, to route $k$ points satisfying our representation guarantees, we use a
\emph{proportional} clustering algorithm--- a variant of \emph{spatial expanding
approval rule} (SEAR)\footnote{The proportional clustering literature contains
several closely related variants of this procedure of greedily picking groups to
be represented and deactivating them, sometimes referred to as \emph{greedy
capture}. Although these variants share the same basic structure, the precise
implementation matters for the axiomatic guarantees obtained. We discuss these
algorithmic differences in the appendix.} \citep{prf}.
Unlike (balanced) $k$-means, SEAR (\Cref{alg:sear}) provably returns $k$ centroids satisfying
mPJR+ \citep{he2026checkpleaseverifiablyfair}. For examples showing that $k$-means and balanced $k$-means fail mPJR+, see \Cref{app:geom-rep}.
\begin{algorithm}[h] \caption{Spatial Expanding Approval Rule (SEAR)} \label{alg:sear} \begin{algorithmic}[1] \Require Voters $V=(v_1,\ldots,v_n)$, candidates $C=(c_1,\ldots,c_m)$, slate size $k$ \State Initialize budgets $b_i\gets 1$ for all $i\in[n]$ and $X\gets\emptyset$ \State Grow balls of radius $\rho$ around all candidates at a common rate \While{$|X|<k$} \State Increase $\rho$ until some $c_j\notin X$ satisfies $\sum_{i:\Delta(v_i,c_j)\le\rho} b_i\ge n/k$ \State Add $c_j$ to $X$ and reduce the budgets of its approving voters by $n/k$ in total \EndWhile \State \Return $X$ \end{algorithmic} \end{algorithm}

Under the factorized reward structure, reward PJR+ and mPJR+ are intimately
connected: reward thresholds translate exactly into distance thresholds in the
shared embedding space. The same correspondence also holds at the algorithmic
level, relating REAR to SEAR. Formally:
\begin{restatable}[Reward--metric equivalence]{proposition}{rewardmetricequivalence}
\label{prop:reward-metric-equivalence}
Fix a prompt \(p\) and a collection of candidate responses
\(O=(o_1,\ldots,o_m)\). Under
Assumption~\ref{assump:factorized-reward}, let \(V=(\phi_i(p))_{i\in\mathcal N}\) and \(C=(\psi(o_j))_{j\in[m]}\).
Then:
\begin{enumerate}
    \item a response slate \(Y\subseteq O\) satisfies reward PJR+ if and only if
    the corresponding indexed embedding slate satisfies mPJR+ with respect to
    \((V,C)\); and
    \item under corresponding tie-breaking and payment rules, REAR on
    \((O,(r_i)_{i\in\mathcal N},k)\) and SEAR on \((V,C,k)\) select the same
    indexed candidates.
\end{enumerate}
\end{restatable}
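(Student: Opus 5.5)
The whole proof rests on one identity for unit vectors. For $u,w\in\mathbb S^{d-1}$ we have $\Delta(u,w)^2=\|u\|^2+\|w\|^2-2u^\top w=2-2u^\top w$. So under Assumption~\ref{assump:factorized-reward},
\[
\Delta(v_i,c_j)=\sqrt{2-2\,r_i(p,o_j)}=:g\bigl(r_i(p,o_j)\bigr),
\]
where $g(t)=\sqrt{2-2t}$ is a strictly decreasing bijection from $[-1,1]$ onto $[0,2]$. Every reward of the form $\phi_i(p)^\top\psi(o)$ lies in $[-1,1]$, so every reward comparison $r_i(p,o)\ge r_{i'}(p,o')$ holds exactly when $\Delta(v_i,\psi(o))\le\Delta(v_{i'},\psi(o'))$. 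Because $g$ is strictly decreasing, it also turns maxima of rewards into minima of distances and minima into maxima:
\[
g\Bigl(\max_{i\in S} r_i\Bigr)=\min_{i\in S} g(r_i),\qquad g\Bigl(\min_{i\in S} r_i\Bigr)=\max_{i\in S} g(r_i).
\]
Both parts of Proposition~\ref{prop:reward-metric-equivalence} follow from this.

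\textbf{Part 1.} Fix $\ell\in[k]$, an unselected response $c\in O\setminus Y$, and an $\ell$-large group $S$. I will use the indexed correspondence $o_j\leftrightarrow c_j=\psi(o_j)$, so the selected and unselected sets match index by index even when embeddings coincide. Applying $g$ to both sides of the rPJR+ condition $\max_{i\in S}r_i(p,o)\ge\min_{i\in S}r_i(p,c)$ gives exactly the mPJR+ condition $\min_{i\in S}\Delta(v_i,\psi(o))\le\max_{i\in S}\Delta(v_i,\psi(c))$. Hence, for each $(\ell,c,S)$, the number of slate members satisfying one condition equals the number satisfying the other. The quantifiers over $\ell$, $c$, and $S$ are identical in the two definitions, so the two properties are equivalent.

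\textbf{Part 2.} Reparametrize time by $\rho=g(\tau)$. As the REAR threshold $\tau$ decreases from $1$ to $-1$, the SEAR radius $\rho$ increases from $0$ to $2$. At matched times the approval sets coincide: $r_i(p,o_j)\ge\tau$ holds if and only if $\Delta(v_i,c_j)\le\rho$. I then argue by induction on the number of selections that the two runs stay in the same state, meaning the same budget vectors and the same selected index sets. If the states agree, then for every unselected index the approving voters agree at every matched time, and so do their total remaining budgets. The first time the threshold $n/k$ is reached is therefore the same event in both runs. Corresponding tie-breaking (a fixed order on indices) then picks the same candidate, and a corresponding payment rule (the same function of the approver set and its budgets) deducts identical amounts. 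This preserves the invariant until $k$ candidates have been selected.

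\textbf{Main obstacle.} The only delicate point is making precise in Part 2 what \emph{corresponding} tie-breaking and payment means. The rules may depend on the current threshold itself; for example, a payment rule might favour voters with higher reward, which is the same as favouring those closer in distance. In that case I will state the correspondence explicitly through the map $g$: a rule on the REAR side is paired with the SEAR rule obtained by replacing each reward by $g$ of it and each threshold $\tau$ by $\rho=g(\tau)$. Because $g$ is strictly monotone, every order-based rule is carried over exactly by this substitution.
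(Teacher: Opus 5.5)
Your proof is correct, and Part~1 is exactly the paper's argument: the unit-vector identity $v_i^\top c_j=1-\tfrac12\Delta(v_i,c_j)^2$ turns each rPJR+ inequality $\max_{i\in S}r_i(p,o_a)\ge\min_{i\in S}r_i(p,o_j)$ into the corresponding mPJR+ inequality, index by index. The paper's written proof stops there and never argues Part~2; it only gestures at it in the runtime proof, which says REAR orders voters by decreasing reward instead of increasing distance. Your reparametrization $\rho=g(\tau)$, with induction on budgets and selected index sets, and your transport of order-based tie-breaking and payment rules through the decreasing map $g$, supply the argument the paper leaves implicit.
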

 Algorithmically, \Cref{thm:rear-runtime} then immediately yields a
substantially faster implementation of SEAR than the previous
\(O(k^2n^4)\)-time implementation of \citet{prf}.

\begin{corollary}
SEAR can be implemented in time $O(n^2 \log n)$.
\end{corollary}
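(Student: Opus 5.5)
The corollary should follow by reducing SEAR to REAR with Proposition~\ref{prop:reward-metric-equivalence} and then applying Theorem~\ref{thm:rear-runtime} with $m=n$. Take the setting the corollary targets, as in the $O(k^2n^4)$ comparison: voters and candidates coincide, $C=V$, so $m=n$. We also need the geometry of Assumption~\ref{assump:factorized-reward}, with every point on $\mathbb S^{d-1}$. Then every voter $v_i$ and candidate $c_j$ is a unit vector, and
\[
\Delta(v_i,c_j)^2 = 2-2\,v_i^\top c_j .
\]
So $\Delta(v_i,c_j)\le\rho$ holds exactly when $v_i^\top c_j\ge 1-\rho^2/2$. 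The map $\rho\mapsto 1-\rho^2/2$ is strictly decreasing on $[0,2]$. Growing the radius $\rho$ in SEAR is therefore the same as lowering the reward threshold $\tau=1-\rho^2/2$ in REAR. Here the rewards are $r_i(c_j):=v_i^\top c_j$, with candidate $o_j$ identified with $c_j$ and $\psi$ taken to be the identity on the sphere. Realizability holds automatically because $V=C$: each voter's own point is a candidate.

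By part~2 of Proposition~\ref{prop:reward-metric-equivalence}, SEAR on $(V,C,k)$ and REAR on the induced instance select the same indexed candidates. This requires matching tie-breaking and payment rules, and I will fix those to be the ones used in the fast REAR implementation. Building the REAR instance means computing all $nm=n^2$ rewards $v_i^\top c_j$. If the dimension $d$ is a constant, this costs $O(n^2)$ time. Alternatively, the fast REAR implementation only needs, for each voter, the candidates sorted by reward, and that order is the order by distance. So we can feed in the $n^2$ pairwise distances directly, at $O(n^2)$ cost assuming constant-time distance evaluation. Theorem~\ref{thm:rear-runtime} then gives $O(mn\log n)=O(n^2\log n)$, and adding the $O(n^2)$ preprocessing keeps the total at $O(n^2\log n)$.

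\textbf{Main obstacle: the unit-sphere restriction.} The step I expect to need care is that the reduction uses vectors on the unit sphere, while SEAR in general takes an arbitrary metric. I would avoid the issue by not going through inner products at all. I would re-run the argument of Theorem~\ref{thm:rear-runtime} with the score $-\Delta(v_i,c_j)$ in place of $r_i$. That argument only uses, for each voter, the ordering of candidates by score and the event times at which thresholds are crossed. Any real-valued score matrix therefore works, and REAR with $r_i(c_j):=-\Delta(v_i,c_j)$ is literally SEAR with $\tau=-\rho$. This makes the corollary hold for any metric space in which distances can be evaluated in $O(1)$ time, with the same $O(n^2\log n)$ bound.
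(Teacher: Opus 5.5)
Your proposal is correct and takes essentially the paper's approach. The paper's proof of Theorem~\ref{thm:rear-runtime} is already written in SEAR notation for arbitrary voter and candidate multisets, which is exactly your ``score $-\Delta$'' fallback, so the corollary is just the $O(mn\log n)$ bound with $m=n$, and the unit-sphere detour through Proposition~\ref{prop:reward-metric-equivalence} is unnecessary. One small slip: the implementation sorts, for each candidate, the voters by distance (the lists $L_j$), not the candidates for each voter, though with the full score matrix available this does not change the bound.
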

Thus, under Assumption~\ref{assump:factorized-reward},
Simulate--then--Route can equivalently be viewed as first generating all
response embeddings \(C=(\psi(A_i(p)))_{i\in\mathcal N}\) and then running
SEAR with \(V=(\phi_i(p))_{i\in\mathcal N}\) as voters and \(C\) as
candidates. 
\subsection{Simulating and Routing Commute}
\label{sec:route-then-simulate}

Under factorized reward access, we can reverse the order of simulating responses and selecting responses: because the prompt-conditioned preference embeddings $\phi_i(p)$ can
be evaluated without querying the simulations, we can first route to $k$
representative humans and only then simulate their responses. We show that
these two orders commute exactly when the simulations have zero fidelity gap,
and approximately when the fidelity gap is nonzero.

\paragraph{\textsc{SimulationRouter}.}
Given an indexed collection of representations
$Z=(z_i)_{i\in\mathcal N}$, one for each simulation, and a target slate
size $k$, \textsc{SimulationRouter} runs SEAR using $Z$ as both the voter
and candidate multiset and returns the indices of the selected candidates:
\[
    \textsc{SimulationRouter}(Z,k)
    := \{i\in\mathcal N : z_i \text{ is selected by }
        \operatorname{SEAR}(Z,Z,k)\}.
\]
\begin{RoutethenSimulatebox}
\label{proc:route-then-simulate}
\raggedright

\textbf{Input:} prompt \(p\), reward functions
\(\{r_i\}_{i\in\mathcal{N}}\), where
\(
    r_i(p,o)=\phi_i(p)^\top\psi(o),
\)
simulations \(\{A_i\}_{i\in\mathcal{N}}\), slate size \(k\).

\textbf{Output:} slate \(Y\) of \(k\) simulated viewpoints.

\medskip

\begin{enumerate}[leftmargin=*, itemsep=0.5em]

    \item \textbf{Read off preference embeddings.}
    Set
    \(
        v_i:=\phi_i(p)
    \)
    for every \(i\in\mathcal{N}\).

    \item \textbf{Select \(k\) simulations to route to.}
    $R:=\textsc{SimulationRouter}((v_i)_{i\in \mathcal{N}},k)$

    \item \textbf{Simulate selected simulations.}
    Query only the simulations in \(R\) and return
    \[
        Y := \{A_r(p): r\in R\}.
    \]

\end{enumerate}
\end{RoutethenSimulatebox}

We first consider the ideal case in which each simulation returns a reward-maximizing response for its corresponding human. As the next theorem shows, in this ideal case, the costly generations for unselected humans can be skipped without losing proportional representation: any large, cohesive group that would be represented by simulating all responses is already represented after selecting in the prompt-conditioned embedding space, routing to the corresponding simulations, and only simulating them.

\begin{restatable}[Exact Commutation]{theorem}{exactstg}
\label{thm:exact-stg}
Fix a prompt \(p\) and suppose that, for every \(i\in\mathcal N\),
\[
A_i(p)\in\arg\max_{o\in\mathcal O} r_i(p,o).
\]
Then, under a common tie-breaking rule, Simulate--then--Route and
Route--then--Simulate select the same set of simulation indices.
Consequently, Route--then--Simulate satisfies reward PJR+ while querying only
\(k\) simulations.
\end{restatable}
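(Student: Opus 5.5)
The plan is to reduce both procedures to runs of SEAR on the same indexed point multiset, and then invoke \Cref{thm:gts-it-prf} for the representation guarantee.

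\textbf{Step 1: identify the simulated embeddings.} Under Assumption~\ref{assump:factorized-reward}, $r_i(p,o)=\phi_i(p)^\top\psi(o)$ with both vectors on $\mathbb S^{d-1}$. Hence $r_i(p,o)\le 1$, with equality iff $\psi(o)=\phi_i(p)$. The realizability clause provides some $o_i^*$ with $\psi(o_i^*)=\phi_i(p)$, so the maximum value $1$ is attained. Any maximizer $A_i(p)$ must therefore satisfy $\psi(A_i(p))=\phi_i(p)=v_i$ exactly. So the candidate embeddings in Simulate--then--Route are $c_i=\psi(o_i)=v_i$ for every $i\in\mathcal N$.

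\textbf{Step 2: rewrite Simulate--then--Route geometrically.} By \Cref{prop:reward-metric-equivalence}(2), REAR on $((o_i),(r_i),k)$ selects the same indexed candidates as SEAR on $(V,C,k)$, given corresponding tie-breaking and payment rules. Step 1 gives $C=(v_i)_{i\in\mathcal N}=V$ as indexed collections, with candidate $i$ matched to voter $i$. So Simulate--then--Route selects exactly the indices chosen by $\operatorname{SEAR}(V,V,k)$.

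\textbf{Step 3: match with Route--then--Simulate.} $\textsc{SimulationRouter}(V,k)$ is by definition the index set chosen by $\operatorname{SEAR}(V,V,k)$. Fixing a common tie-breaking rule on indices and the same payment rule, both procedures compute the identical deterministic SEAR run. They therefore return the same index set $R$, and hence the same slate $\{A_r(p):r\in R\}$. Since Simulate--then--Route yields a reward PJR+ slate by \Cref{thm:gts-it-prf}, so does Route--then--Simulate. The latter queries only the $k$ simulations in $R$. Reward PJR+ is evaluated against the full candidate collection $C=(o_i)_{i\in\mathcal N}$, which is conceptual only, so it need not be generated.

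\textbf{Main obstacle.} The delicate point is the bookkeeping in Step 2, namely that ``corresponding tie-breaking'' in \Cref{prop:reward-metric-equivalence} really means the same rule as the one SimulationRouter uses. Two things need care. First, duplicate embeddings must stay distinct indexed candidates. Second, the reward threshold $\tau$ maps to the radius via $\Delta^2=2-2\tau$, which is monotone decreasing, so the first time a candidate gathers $n/k$ budget coincides in both parametrizations. I would state the tie-breaking explicitly as a fixed order on indices $i\in\mathcal N$, used by both procedures, so that the two SEAR runs coincide step by step.
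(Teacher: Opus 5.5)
Your proposal is correct and follows the paper's proof: reward-maximality on the unit sphere forces $\psi(A_i(p))=\phi_i(p)$, so both procedures run SEAR on the same indexed multiset $(\phi_i(p))_{i\in\mathcal N}$ as voters and candidates and coincide under a common tie-breaking rule, with reward PJR+ then inherited from \Cref{thm:gts-it-prf}. You make explicit two points the paper leaves implicit: that the realizability clause ensures the maximum reward $1$ is attained, and that \Cref{prop:reward-metric-equivalence}(2) is what justifies viewing REAR in Simulate--then--Route as a run of SEAR.
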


For imperfect simulations, let
$\operatorname{Simgap}(p):=\max_{i\in\mathcal N}
\angle\!\left(\phi_i(p),\psi(A_i(p))\right)$ 
denote the simulation fidelity gap on prompt $p$. Under
Assumption~\ref{assump:factorized-reward},
$\operatorname{Simgap}(p)\le\alpha$ equivalently means
$r_i(p,A_i(p))\ge\cos(\alpha)$ for every
$i\in\mathcal N$. Faithful simulation corresponds to $\alpha=0$. In general, when the fidelity gap is nonzero, the two orders need not coincide exactly and route--then--simulate may fail exact reward PJR+. Nevertheless, they
approximately commute in terms of representation quality. In particular, if
$\operatorname{Simgap}(p)\leq\alpha$, routing before simulation incurs only
the following approximation of the representation guarantee.

\begin{restatable}[Approximate Commutation]{theorem}{approxstg} \label{thm:approx-stg} Fix a prompt $p$ and suppose $\operatorname{Simgap}(p)\le\alpha$. The slate returned by Route--then--Simulate satisfies $\gamma_\beta$-approximate reward PJR+ for $\gamma_\beta=\alpha+\beta+\min\{\alpha,\beta\}$. \end{restatable}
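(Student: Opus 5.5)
The plan is to work entirely with angles on $\mathbb S^{d-1}$ and reduce to the mPJR+ guarantee of SEAR on $(V,V)$, where $V=(v_i)_{i\in\mathcal N}$ with $v_i=\phi_i(p)$.

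\textbf{Setup.} Under Assumption~\ref{assump:factorized-reward}, $r_i(p,o)\ge\cos\theta$ holds if and only if $\angle(\phi_i(p),\psi(o))\le\theta$. Angular distance is a metric on the sphere. Euclidean distance between unit vectors is a strictly increasing function of the angle between them. Hence mPJR+ with respect to $\Delta$ is equivalent to mPJR+ with respect to angular distance, since both sides of the inequality in Definition~\ref{def:prf} transform by the same monotone map.

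Let $R$ be the indices returned by \textsc{SimulationRouter}. By \citet{he2026checkpleaseverifiablyfair}, the selected embeddings $\{v_r\}_{r\in R}$ satisfy mPJR+ with respect to $(V,V)$. The candidate collection for the reward PJR+ claim is $C=(A_j(p))_{j\in\mathcal N}$. Fix $\ell$, an unselected candidate $c=A_j(p)$ with $j\notin R$, an $\ell$-large group $S\subseteq\mathcal N$, and $\beta$ such that $\angle(v_i,\psi(c))\le\beta$ for all $i\in S$. The goal is to find $\ell$ indices $r\in R$ and, for each, some $i\in S$ with $\angle(v_i,\psi(A_r(p)))\le\gamma_\beta$. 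Since $\operatorname{Simgap}(p)\le\alpha$, we have $\angle(v_r,\psi(A_r(p)))\le\alpha$. By the triangle inequality it therefore suffices to find $\ell$ routed indices $r$, each with $\min_{i\in S}\angle(v_i,v_r)\le\alpha+\beta$ or each with this quantity at most $2\beta$. Taking the better of the two gives $\alpha+\min\{\alpha+\beta,2\beta\}=\alpha+\beta+\min\{\alpha,\beta\}$.

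\textbf{First bound, via the simulated individual $j$.} For every $i\in S$, the triangle inequality gives $\angle(v_i,v_j)\le\angle(v_i,\psi(c))+\angle(\psi(c),v_j)\le\beta+\alpha$. Since $j\notin R$, the point $v_j$ is an unselected candidate in the instance $(V,V)$. Applying mPJR+ with this candidate and the same group $S$ yields at least $\ell$ selected $v_r$ with
\[
\min_{i\in S}\angle(v_i,v_r)\le\max_{i\in S}\angle(v_i,v_j)\le\alpha+\beta .
\]

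\textbf{Second bound, via the diameter of $S$.} For $i,i'\in S$ we have $\angle(v_i,v_{i'})\le 2\beta$, since both lie within $\beta$ of $\psi(c)$.

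If some $j'\in S\setminus R$ exists, apply mPJR+ with the unselected candidate $v_{j'}$. This gives $\ell$ selected $v_r$ with $\min_{i\in S}\angle(v_i,v_r)\le\max_{i\in S}\angle(v_i,v_{j'})\le 2\beta$.

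If instead $S\subseteq R$, then because $|S|\ge\ell n/k\ge\ell$, at least $\ell$ routed indices $r$ lie in $S$ itself. Each satisfies $\min_{i\in S}\angle(v_i,v_r)=0$, so $\angle(v_r,\psi(A_r(p)))\le\alpha\le\gamma_\beta$ directly.

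\textbf{Remaining care.} The main subtlety is the bookkeeping of indexed candidates. We must check that an unselected response $c\in C\setminus Y$ corresponds to an index $j\notin R$ even when several simulations produce coincident responses, which holds because candidates are indexed. We must also check that the same ``$\ell$-large'' threshold $\ell n/k$ applies in $(V,V)$ and in the reward instance. Both hold because voters, candidates and groups are all indexed by $\mathcal N$.

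Finally, we convert the angle bound back to rewards. Since $\angle(v_i,\psi(A_r(p)))\le\gamma_\beta$ implies $r_i(p,A_r(p))\ge\cos\gamma_\beta$, we get $\max_{i\in S}r_i(p,o)\ge\cos\gamma_\beta$ for at least $\ell$ responses $o\in Y$. This is exactly $\gamma_\beta$-approximate reward PJR+.
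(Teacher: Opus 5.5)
Your proof follows the paper's proof in all essentials. It makes the same reduction to mPJR+ of $(v_r)_{r\in R}$ on $(V,V)$ in angular distance. It gets the same first bound through the unselected $v_j$, yielding $\beta+2\alpha$ after generation, and the same diameter bound through an unselected member of $S$, yielding $\alpha+2\beta$. Splitting on $S\setminus R\neq\emptyset$ versus $S\subseteq R$, instead of the paper's $|R\cap S|<\ell$ versus $|R\cap S|\ge\ell$, makes no difference.

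The one step that needs repair is the final conversion $\angle(v_i,\psi(A_r(p)))\le\gamma_\beta\Rightarrow r_i(p,A_r(p))\ge\cos\gamma_\beta$. It holds only when $\gamma_\beta\le\pi$, because cosine is decreasing only on $[0,\pi]$, and for larger $\gamma_\beta$ the uncapped claim is actually false. For example, let $e_1,e_2$ be orthonormal and take $n=2$, $k=1$, $\phi_1(p)=\phi_2(p)=e_1$, $\psi(A_1(p))=e_2$ and $\psi(A_2(p))=e_1$. SEAR with index tie-breaking routes to simulation $1$, and $\operatorname{Simgap}(p)=\pi/2$. With $\alpha=\pi/2$, $\beta=\pi$, $S=\{1,2\}$ and $c=A_2(p)$, the hypothesis holds. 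Yet $\gamma_\beta=2\pi$ would require $\max_{i\in S}r_i(p,A_1(p))\ge\cos(2\pi)=1$, while this maximum is $0$.

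The paper's proof avoids this by concluding with $\gamma_\beta:=\min\{\pi,\alpha+\beta+\min\{\alpha,\beta\}\}$. If you adopt the same cap, your argument is complete; your angle bounds are unaffected, since all angles lie in $[0,\pi]$.
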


\section{Representation under Viewpoint Embedding Access}
\label{sec:predicted-response-embeddings}

So far, we have assumed access to the reward maps $r_i$ at inference time. We now consider a weaker setting in which these individual reward functions are unavailable and the only individualized access to human $i$ is through black-box queries to their simulation $A_i$. This reflects an emerging deployment model for commercial simulations, in which simulated individuals or populations are accessed through provider-operated services without access to the personalized models or representations underlying their responses \citep{simile2026simulation}. 

We retain access to the embedding function $\psi$ from Assumption~\ref{assump:factorized-reward}, which we refer to as the \emph{viewpoint embedding (VPE) function}. The defining property of the VPE is that it embeds responses such that those expressing similar human beliefs or preferences are nearby. Access to such an embedding function is a substantially more viable assumption in practice: recent work provides representations of human viewpoint similarity learned from patterns of human belief agreement \citep{lee2025semantic} and, more directly, representations trained so that embedding distance reflects \emph{preferential} rather than merely semantic similarity \citep{prefembeddings2026}; related work similarly learns stance-aware representations that separate agreeing from opposing viewpoints \citep{ghafouri2024love}.
We thus treat the VPE as an off-the-shelf component supplied by existing viewpoint-embedding approaches both in this section and in our experiments.

\paragraph{Exhaustive Viewpoint Embedding Geometry.}Under VPE access, one way to obtain a proxy for the personalized geometry is to query every simulation and compute \[
    c_i(p) := \bigl(\psi \circ A_i\bigr)(p)
\qquad \text{for all  }i\in\mathcal N,\]
and then run \textsc{SimulationRouter} on these embeddings. The zero-fidelity-gap case gives this geometry a particularly useful
interpretation. When $\alpha=0$,
$c_i(p)=\psi(A_i(p))=\phi_i(p)$ for every $i\in\mathcal N$. Thus, in this
case, generating the simulation's response and embedding it with $\psi$
recovers exactly the prompt-conditioned preference embedding $\phi_i(p)$ used under reward
access. 

More generally, routing on the exhaustive VPE geometry yields the following
guarantee even when the fidelity gap is nonzero. 
\begin{proposition}
\label{prop:psipsi}
Fix a prompt $p$ and suppose $\operatorname{Simgap}(p)\leq \alpha$. The slate $Y=\{A_r(p): r\in R\}$ where $R=\textsc{SimulationRouter}((c_i)_{i\in \mathcal{N}},k)$ satisfies 
$\gamma_\beta$-approximate reward PJR+ for
$\gamma_\beta=\beta+2\alpha$.
\end{proposition}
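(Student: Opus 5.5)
The plan is to reduce the statement to the mPJR+ guarantee that SEAR provides on the exhaustive VPE geometry. I will then move between the proxy points $c_i(p)=\psi(A_i(p))$ and the true preference embeddings $v_i=\phi_i(p)$ using the triangle inequality for angular distance on $\mathbb S^{d-1}$.

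Three facts are used throughout. First, by Assumption~\ref{assump:factorized-reward}, $r_i(p,o)=\cos\angle(\phi_i(p),\psi(o))$. Second, Euclidean distance between unit vectors is a strictly increasing function of the angle, so mPJR+ with respect to $\Delta$ is equivalent to the same condition stated with $\angle$. Third, $\angle$ is a metric on the sphere. Under these facts, $\operatorname{Simgap}(p)\le\alpha$ reads $\angle(v_i,c_i)\le\alpha$ for all $i$, and "$r_i(p,o)\ge\cos\gamma$" reads $\angle(v_i,\psi(o))\le\gamma$ whenever $\gamma\le\pi$. If $\beta+2\alpha\ge\pi$ the conclusion is trivial, so I cap at $\pi$.

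\textbf{Step 1 (mPJR+ on the proxy instance).} Since \textsc{SimulationRouter} runs SEAR on $(Z,Z,k)$ with $Z=(c_i)_{i\in\mathcal N}$, the result of \citet{he2026checkpleaseverifiablyfair} quoted above gives that $X=(c_r)_{r\in R}$ satisfies mPJR+ with respect to $Z$. In this instance the voters are the indexed proxy points $c_i$ and the candidates are the same indexed collection.

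\textbf{Step 2 (transfer the cohesion hypothesis).} Fix $\ell\in[k]$, $\beta$, an $\ell$-large $S\subseteq\mathcal N$, and an unselected comparison response $c=A_j(p)$ with $j\notin R$ such that $r_i(p,c)\ge\cos\beta$ for all $i\in S$. Then $\angle(v_i,c_j)\le\beta$ for each $i\in S$. By the triangle inequality,
\[
\angle(c_i,c_j)\le\angle(c_i,v_i)+\angle(v_i,c_j)\le\alpha+\beta .
\]
Hence $\max_{i\in S}\angle(c_i,c_j)\le\alpha+\beta$. Because $c_j$ is an unselected candidate of the proxy instance and $S$ is $\ell$-large there (the index sets coincide), mPJR+ yields at least $\ell$ indices $r\in R$ with $\min_{i\in S}\angle(c_i,c_r)\le\alpha+\beta$.

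\textbf{Step 3 (return to true preferences).} For each such $r$, pick $i\in S$ attaining $\angle(c_i,c_r)\le\alpha+\beta$. Then
\[
\angle(v_i,\psi(A_r(p)))\le\angle(v_i,c_i)+\angle(c_i,c_r)\le 2\alpha+\beta .
\]
Therefore $\max_{i\in S}r_i(p,A_r(p))\ge\cos(\beta+2\alpha)$, and this holds for at least $\ell$ distinct responses in $Y$. This is exactly $\gamma_\beta$-approximate reward PJR+ with $\gamma_\beta=\beta+2\alpha$. Note that $\alpha$ is paid twice: once when pushing the group's cohesion onto the proxy voters, and once when pulling the representative back to a true voter.

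\textbf{Main obstacle.} The delicate point is the bookkeeping of the comparison set and the indexing, not the geometry. The argument needs the comparison response $c$ to range over the candidate collection $(A_j(p))_{j\in\mathcal N}$, with "unselected" meaning $j\notin R$, so that $c$ corresponds to an unselected candidate of the SEAR instance. It must also keep responses as indexed objects, so that coinciding embeddings or a duplicated response do not make $c$ appear selected and do not collapse the $\ell$ distinct representatives. I would state this convention explicitly, matching the one used for \Cref{thm:approx-stg}, before carrying out Steps 1–3.
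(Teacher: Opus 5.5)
Your proof is correct and is essentially the paper's argument: the paper does not prove this proposition separately but recovers it as the $\eta=0$ case of Theorem~\ref{thm:ptsg-reward}, whose proof is exactly your chain ($\angle(c_i,c_j)\le\alpha+\beta$, then mPJR+ for the SEAR slate on $(c_i)_{i\in\mathcal N}$, then $\angle(v_{i_r},c_r)\le\beta+2\alpha$). Your explicit cap at $\pi$, needed because $\cos$ is only decreasing on $[0,\pi]$, handles a point the paper leaves implicit.
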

In particular, when $\alpha=0$, this recovers exact reward PJR+.

\paragraph{Predicting the Viewpoint Geometry.}
The exhaustive procedure above requires querying every simulation, precisely the
complete generation step that we seek to avoid. Instead, we learn a cheaper prompt-conditioned predictor of the
\emph{response embedding} that each simulation would produce. Let $s_i$ denote
a compact context associated with simulation $i$, and let
$g_\omega$ be a prediction model. We write
$\theta_i(p):=g_\omega(s_i,p)\in\mathbb S^{d-1}$.\footnote{The predictor may, e.g., be a lightweight MLP, or a shared neural network
conditioned on \(s_i\). It can be trained offline by querying the
simulations on prompts and embedding their outputs with $\psi$. At inference
time, computing $\theta_i(p)$ requires only a forward pass of the cheap
predictor.}
Thus, $\theta_i(p)$ predicts neither the text response
itself nor the unavailable preference embedding $\phi_i(p)$, but instead the
embedding of the simulation's response.

The resulting procedure \emph{Predictive route--then--simulate (PRS)} performs one comparatively cheap embedding prediction for every
member of the simulation pool, but makes only $k$ expensive generative calls.
Importantly, no direct access to $(\phi_i)_{i\in \mathcal{N}}$ is assumed in this
procedure.

\begin{PredictRouteSimulatebox}
    
\label{proc:predict-route-simulate}
\raggedright

\textbf{Input:} prompt \(p\),
simulations \(\{A_i\}_{i\in\mathcal{N}}\),
predictor \(g_\omega\), slate size \(k\).

\textbf{Output:} slate \(Y\) of \(k\) simulated viewpoints.

\medskip

\begin{enumerate}[leftmargin=*, itemsep=0.5em]

    \item \textbf{Predict response embeddings.}
    Evaluate the cheap predictor \(g_\omega\) to obtain
    \(
        \theta_i(p)
    \)
    for every \(i\in\mathcal{N}\).

    \item \textbf{Select \(k\) simulations to route to.}
   $R:=\textsc{SimulationRouter}((\theta_i(p))_{i\in \mathcal{N}},k)$

    \item \textbf{Query selected simulations.}
    Query only the simulations in \(R\) and return
    \[
        Y := \{A_r(p): r\in R\}.
    \]

\end{enumerate}
\end{PredictRouteSimulatebox}

Predictive Route--then--Simulate has two sources of error: (1) the simulation fidelity gap $\alpha$ discussed in the previous section and (2) the error of $g_\omega$ in predicting the embeddings of simulations' responses, which we call the 
\emph{VPE prediction error}, defined as
\(
    \eta:=\max_{i\in\mathcal N}
    \angle\!\left(\theta_i(p),\psi(A_i(p))\right).
\)

\begin{restatable}[PRS Approximate Representation]{theorem}{ptsgreward} \label{thm:ptsg-reward} Fix a prompt $p$ and suppose $\operatorname{Simgap}(p)\leq\alpha$ and the VPE prediction error is at most $\eta$. Then predictive Route--then--Simulate returns a slate satisfying $\widetilde\gamma_\beta$-approximate reward PJR+ for \( \widetilde\gamma_\beta=\beta+2\alpha+4\eta. \) \end{restatable}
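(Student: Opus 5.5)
We work on the sphere $\mathbb S^{d-1}$ with the angular metric $\angle(\cdot,\cdot)$. Under Assumption~\ref{assump:factorized-reward}, $r_i(p,o)\ge\cos\beta$ is equivalent to $\angle(\phi_i(p),\psi(o))\le\beta$, and Euclidean distance on the sphere is a monotone function of angle. So SEAR on unit vectors behaves identically whether run with $\Delta$ or with $\angle$, and mPJR+ can be read in angles. Write $v_i=\phi_i(p)$, $c_i=\psi(A_i(p))$ and $\theta_i=\theta_i(p)$. The hypotheses then say $\angle(v_i,c_i)\le\alpha$ and $\angle(\theta_i,c_i)\le\eta$ for all $i$; by the triangle inequality, $\angle(v_i,\theta_i)\le\alpha+\eta$.

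First fix $\ell$, a comparison response $c\in C\setminus Y$, $\beta$, and an $\ell$-large group $S$ with $\angle(v_i,\psi(c))\le\beta$ for all $i\in S$. Since $c$ is in the candidate collection, it is $A_j(p)$ for some $j$, so $\psi(c)=c_j$. The comparison point in $\theta$-space is $\theta_j$, with $\angle(v_i,\theta_j)\le\beta+\eta$ for every $i\in S$. For each $i\in S$, the triangle inequality gives
\[
\angle(\theta_i,\theta_j)\le\angle(\theta_i,c_i)+\angle(c_i,v_i)+\angle(v_i,\psi(c))+\angle(\psi(c),\theta_j)\le \beta+\alpha+2\eta .
\]
There is one subtlety: $j$ might itself lie in $R$. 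In that case $A_j(p)=c\in Y$, so $c\notin C\setminus Y$ and we are done. Otherwise $\theta_j$ is an unselected candidate for SEAR$(\Theta,\Theta,k)$.

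Next apply the mPJR+ guarantee of SEAR \citep{he2026checkpleaseverifiablyfair} to the instance $(\Theta,\Theta)$ with group $S$ and unselected candidate $\theta_j$. It gives at least $\ell$ selected indices $r\in R$ such that some $i_r\in S$ satisfies $\angle(\theta_{i_r},\theta_r)\le\max_{i\in S}\angle(\theta_i,\theta_j)\le\beta+\alpha+2\eta$. Translating back, the returned response $A_r(p)$ has embedding $c_r$, and
\[
\angle(v_{i_r},c_r)\le\angle(v_{i_r},c_{i_r})+\angle(c_{i_r},\theta_{i_r})+\angle(\theta_{i_r},\theta_r)+\angle(\theta_r,c_r)\le \alpha+\eta+(\beta+\alpha+2\eta)+\eta=\beta+2\alpha+4\eta .
\]
So $\max_{i\in S}r_i(p,A_r(p))\ge\cos(\widetilde\gamma_\beta)$ for at least $\ell$ slate elements. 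If $\widetilde\gamma_\beta$ exceeds $\pi$, the statement is trivial.

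The main care point is the conversion between rewards, angles and the Euclidean $\Delta$ used by SEAR. Because $\Delta=2\sin(\angle/2)$ is monotone on $[0,\pi]$, the mPJR+ inequality $\min\Delta\le\max\Delta$ is equivalent to the same inequality in angles; Proposition~\ref{prop:reward-metric-equivalence} handles the same translation. The constant $4\eta$ arises because $\eta$ is paid four times: once at the comparison candidate, once at each group member in the hypothesis, once at the representing member, and once at the selected point. The constant $2\alpha$ arises because $\alpha$ is paid once on each side of the $\theta$-space step.
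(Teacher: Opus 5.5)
Your proposal is correct and follows the paper's proof essentially step for step. Both use the same monotone angle/Euclidean conversion and the same two four-term triangle-inequality chains: first $\angle(\theta_i,\theta_j)\le\beta+\alpha+2\eta$, then mPJR+ of SEAR on $(\Theta,\Theta)$ with the unselected candidate $\theta_j$, then transporting back to $\angle(v_{i_r},c_r)\le\beta+2\alpha+4\eta$. One small caveat: the case $\widetilde\gamma_\beta>\pi$ is trivial only if $\cos(\widetilde\gamma_\beta)$ is read as $\cos(\min\{\pi,\widetilde\gamma_\beta\})$, a cap that the paper's proof of this theorem also leaves implicit.
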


When $\eta=0$, the predicted embeddings obtain the simulation-response
embeddings exactly, $\theta_i(p)=c_i(p)$, recovering the exhaustive VPE-access
case in \Cref{prop:psipsi}. If the fidelity gap is also zero, then
$c_i(p)=\phi_i(p)$, so this geometry additionally coincides with the
personalized reward geometry used by route--then--simulate.

\begin{figure}[h]
\centering
\[
\begin{array}{rcl}
\text{\textbf{General theory:}}
&
\phi_i(p)
\;\xleftrightarrow{\;\alpha\;}\;
\psi(A_i(p))
\;\xleftrightarrow{\;\eta\;}\;
\theta_i(p)
\\[1.2em]
\text{\textbf{Empirically observable:}}
&
\psi(A_i(p))
\;\xleftrightarrow{\;\eta\;}\;
\theta_i(p)
\end{array}
\]
\caption{
In practice, we can evaluate the VPE prediction error $\eta$ by exhaustively
generating simulation responses and embedding them with $\psi$. The fidelity
gap $\alpha$ depends on the unavailable prompt-conditioned preference embedding
$\phi_i(p)$. When $\alpha=0$, $\psi(A_i(p))=\phi_i(p)$.
}
\label{fig:vpe-observable}
\end{figure}
\paragraph{Observable Viewpoint-Space Evaluation.}
Unlike the fidelity gap $\alpha$, which depends on the unavailable reward
embeddings $\phi_i(p)$, the exhaustive VPE representations can be computed offline
and provide a directly measurable target for empirical evaluation.
We therefore state a guarantee directly in this viewpoint space.

The following corollary transfers representation from the predicted geometry to the exhaustive VPE geometry.

\begin{corollary}[Observable Viewpoint-Space Guarantee]
\label{cor:vpe-observable}
Let
\(\eta:=\max_{i\in\mathcal N}\angle\!\left(\theta_i(p),c_i(p)\right)\).
For every \(j\notin R\) and every \(\ell\)-large group
\(S\subseteq\mathcal N\) such that
\(\max_{i\in S}\angle\!\left(\theta_i(p),\theta_j(p)\right)\le \beta\),
there are at least \(\ell\) indices \(r\in R\) such that, for each such \(r\),
some \(i\in S\) satisfies
\[
\angle\!\left(c_i(p),c_r(p)\right)\le \beta+2\eta.
\]
\end{corollary}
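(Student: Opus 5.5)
The plan is to combine the exact mPJR+ guarantee of SEAR in the predicted geometry with the triangle inequality for the angular metric on \(\mathbb S^{d-1}\).

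\textbf{Step 1: invoke mPJR+ in the predicted geometry.} By construction, \(R=\textsc{SimulationRouter}((\theta_i(p))_{i\in\mathcal N},k)\) is the output of \(\operatorname{SEAR}(\Theta,\Theta,k)\) with \(\Theta=(\theta_i(p))_{i\in\mathcal N}\). By the result of \citet{he2026checkpleaseverifiablyfair}, this slate satisfies mPJR+ with respect to \(\Theta\). On the unit sphere, Euclidean distance is the strictly increasing function \(\Delta(x,y)=2\sin(\angle(x,y)/2)\) of the angle on \([0,\pi]\). Hence mPJR+ can be restated verbatim with angles in place of Euclidean distances, since this monotone reparametrisation preserves every comparison \(\min\le\max\).

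\textbf{Step 2: apply it to the given group.} Fix \(j\notin R\) and an \(\ell\)-large \(S\) with \(\max_{i\in S}\angle(\theta_i(p),\theta_j(p))\le\beta\). The candidate \(\theta_j(p)\) is unselected as an indexed object. So mPJR+ yields at least \(\ell\) indices \(r\in R\) such that some \(i_r\in S\) satisfies
\[
\angle(\theta_{i_r}(p),\theta_r(p))\le\max_{i\in S}\angle(\theta_i(p),\theta_j(p))\le\beta .
\]

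\textbf{Step 3: transfer to the observable geometry.} The angle is a metric on the sphere, so for each such \(r\) and its witness \(i=i_r\),
\[
\angle(c_i(p),c_r(p))\le\angle(c_i(p),\theta_i(p))+\angle(\theta_i(p),\theta_r(p))+\angle(\theta_r(p),c_r(p))\le\eta+\beta+\eta .
\]
The middle term is bounded by \(\beta\) from Step 2, and each outer term is bounded by \(\eta\) by definition of \(\eta\). This gives the claimed bound \(\beta+2\eta\).

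The only step needing care is Step 1. There I must check that mPJR+ from \citet{he2026checkpleaseverifiablyfair} applies to indexed multisets, so that coincident points are handled correctly. I must also confirm that the Euclidean-to-angle conversion is monotone on \([0,\pi]\), which it is because \(\sin(\cdot/2)\) is increasing there. The remaining steps are direct.
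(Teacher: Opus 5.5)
Your proposal is correct and is essentially the paper's argument. The paper gives no separate proof of this corollary, but it is the middle step of the proof of Theorem~\ref{thm:ptsg-reward} without the $\alpha$ terms: apply mPJR+ of SEAR on the predicted embeddings, in its equivalent angular form, to the unselected candidate $\theta_j(p)$, then transfer $\angle(\theta_{i_r}(p),\theta_r(p))\le\beta$ to the observable geometry by the triangle inequality, paying $\eta$ at each end.
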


\section{Experiments}\label{sec:exps}
Our goal is to enable a critical component of simulation-augmented generation: routing simulation  queries to simulations in a way that is both \emph{scalable} and \emph{representative}. Thus far, we have developed the necessary theoretical components. In this section, we now empirically measure the level of representation (i.e., the mPJR+ satisfaction rate) attained by our proposed method \emph{Predictive Route-then-Simulate}. 

\subsection{Experimental set-up}
Since we do not have access to reward functions, in our experiments we evaluate how well responses selected by PRS represent the viewpoints expressed by the simulation pool. This lets us evaluate the role of prediction error in PRS empirically \footnote{Their relation to human preferences depends additionally on the simulation fidelity gap.}.
In the language of the previous section, this means for each prompt \(p\), we treat the generated response embeddings \(c_i(p)=\psi(A_i(p))\) as if they were the true preference embedding of human $i$, corresponding to \(v_i=c_i(p)\) and \(\alpha=0\). Our proportional representation metric, mPJR+ relative to the simulation pool, is thus evaluated using the generated embeddings as both voters and candidates.

\textbf{Methods.} We compare the mPJR+ satisfaction rate of the following four methods.

\begin{enumerate}
    \item \emph{Simulate-then-Route (oracle).} This oracle method, which queries all $n$ simulations, always satisfies mPJR+.  It runs SEAR using the generated response embeddings \(\{\psi(A_i(p))\}\) as candidates and voters.
    \item \emph{Predictive Route-then-Simulate (ours).} Our proposed method runs SEAR with the proxy embeddings $\{\theta_i(p)\}$ as both the voters and candidates to select which $k$ simulations to query.
    \item \emph{Predictive Route-then-Simulate, with $k$-means in place of SEAR.} This baseline method runs \(k\)-means on the proxy embeddings \(\{\theta_i(p)\}\) and selects the \(k\) simulations closest to the cluster centers.
    \item \emph{Random.} This baseline selects $k$ simulations uniformly at random.
\end{enumerate}

\paragraph{Domains.} We consider the following two domains, which are representative of two of the highest-impact domains for the deployment of SAGE: (i) contentious political issues, and (ii) personal advice.

\begin{enumerate}
    \item \emph{Simulation of open-ended viewpoints on Remesh.} Remesh~\citep{konya2023democratic,konya2025} is a collective response platform where participants share their open-ended viewpoints on contentious questions and vote on whether they agree or disagree with other participants' viewpoints. We specifically focus on an open-source Remesh session in which $n=310$ representative participants\footnote{Participants were recruited via stratified sampling to be demographically representative of the general U.S. population. While not the focus of these experiments, our stratified-sampling results in \Cref{app:sampling} also bound how representation guarantees extend beyond the \(n\) participants to the U.S. population more broadly.} shared their viewpoints regarding the U.S. campus protests of 2024.\footnote{The data are available at \url{https://github.com/akonya/polarized-issues-data} under a CC-By-4.0 license.}  Because the Remesh dataset only contains four questions, we generate 1{,}000 new prompts related to the original questions to use for training and evaluation. See \Cref{app:remesh-dataset} for details.
    \item \emph{Simulation of responses on Reddit's ``Am I the Asshole?'' (AITA) subreddit.} On the Reddit AITA subreddit, users post personal scenarios and seek guidance on whether they were in the wrong in the situation. Specifically, we use the dataset of AITA posts and responses collected by \citet{wu2026humanlm} (``\texttt{humanual-opinion}'') which comprises $992$ posts and $n = 4567$ users.
\end{enumerate}

\paragraph{Simulators.}  We test two different types of simulators, each of which adapts to an individual through what user context is provided in the system prompt.

\begin{enumerate}
    \item \emph{Remesh.} For Remesh, we construct one simulation per participant. Each participant's context contains their demographic profile\footnote{Gender, age, political affiliation, religious affiliation, education, urbanicity, and household income.}, along with their written viewpoints and binary agree/disagree votes on other participants’ opinions as few-shot context (see \Cref{app:agent-prompt} for the full prompt). The simulator is based on Llama-3.1-8B-Instruct~\citep{grattafiori2024llama} and simply provides the user context as a system prompt with no additional fine-tuning.
    \item \emph{Reddit.} For Reddit, we use the HumanLM simulator which was trained by \citet{wu2026humanlm} on the \texttt{humanual-opinion} dataset. The user context is a summary of the user's Reddit history. The simulator was trained via RL to take in the user context and generate responses that are aligned with latent states (e.g., emotion and stance) of the user's ground-truth responses.
\end{enumerate}

\textbf{The viewpoint embedding model $\psi$ and proxy $\theta$.} For the viewpoint embedding function \(\psi\), we use the model trained by \citet{prefembeddings2026}. The embedding model is fine-tuned on several human preference datasets so that distances in \(\psi\)-space predict preferences rather than merely reflecting surface-level similarity, specifically, so that a human \(i\) is more likely to agree with responses that are closer to their own embedded response. To train the cheap proxy embedding model $\theta$, we fine-tune a copy of the $\psi$ embedding model with LoRA to take in the user context and prompt and output the ground-truth response embedding. See training details in \Cref{app:training-details}.

\subsection{Results}

\textbf{Quantitative results.} For both datasets, \Cref{fig:mpjrp-results} shows the percentage of test prompts for which each method satisfies mPJR+.\footnote{To verify mPJR+, we use the fixed-parameter tractable verification algorithm given by \citet{he2026checkpleaseverifiablyfair} (Algorithm 1, \citet{he2026checkpleaseverifiablyfair}). The algorithm has a runtime of $O(mn\log n \cdot 2^k)$. The exponential dependence on $k$ can make verification slow, especially when $n$ is also large, so for evaluation on Reddit, we subsample $300$ users out of the $n=4567$ total users in the dataset.} In both domains, our method (Route-then-Simulate, SEAR) outperforms the two other Route-then-Simulate methods ($k$-means and random) for all $k \in \{3, \dots, 10\}$. However, the performance of all methods degrades as a function of $k$, in contrast to the oracle Simulate-then-Route method which always gets a 100\% satisfaction rate. We hypothesize that performance degrades because higher values of $k$ require representing smaller groups (of size at least $n/k$), and the representation of these smaller groups may be more sensitive to prediction error between the proxy and ground-truth viewpoint embeddings.

\begin{figure}
    \centering
    \includegraphics[width=0.95\linewidth]{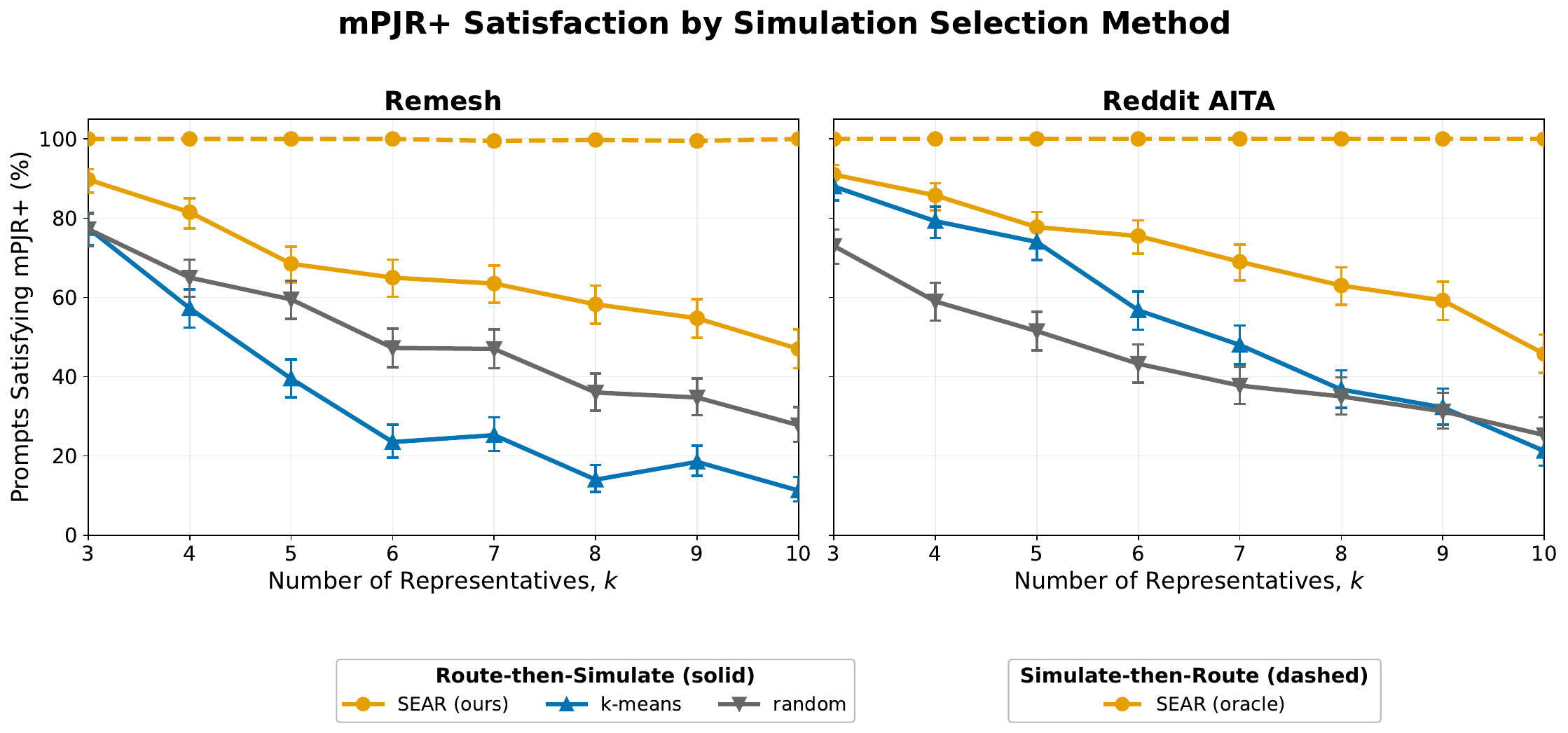}
    \caption{\textbf{mPJR+ Satisfaction Rate by Simulation Selection Method.}} 
    \label{fig:mpjrp-results}
\end{figure}

\textbf{Qualitative example.} Figure~\ref{fig:remesh-rank04-pca} shows a qualitative example from Remesh for the prompt, ``Should pension fund managers be legally allowed to consider ethical or political factors when making investment decisions, even if it might slightly reduce returns?" The figure shows the viewpoint embeddings of the $n$ simulated responses (gray) projected onto the first two PCA dimensions. For each method, the figure highlights which $k$ responses are selected by the method (orange), and if mPJR+ is violated, which users lack adequate representation (pink). 

In this example, Route-then-Simulate SEAR run on the proxy embeddings $\{\theta_i(p)\}$ satisfies mPJR+ and selects very similar responses to those of Oracle SEAR run on the ground-truth viewpoint embeddings $\psi(A_i(p))$. In contrast, Route-then-Simulate $k$-means and Random violate mPJR+. The PCA projection helps explain why: PC1 accounts for 62\% of the variance, while PC2 accounts for only 5\%, and about 80\% of the population lies on the negative side of PC1, corresponding to agreement that pension fund managers should be allowed to consider ethical or political factors. 

When selecting $k=4$ representatives, the groups that are large enough to warrant representation via proportionality must have size at least $1/4$ $(25\%)$ of the population. The disagreement cluster is smaller than this threshold, so it is not entitled to guaranteed representation. Nonetheless, \(k\)-means and Random select outliers with positive PC1 values (corresponding to disagreement). This leaves the dense agreement cluster underrepresented in the \(k=4\) selected responses, causing mPJR+ to fail. \Cref{tab:remesh-rank04-pca-slates} reports the text of the $k$ responses that each method selects. More qualitative examples for Remesh and Reddit are available in \Cref{app:qual-examples}.

\begin{figure*}[t]
    \centering
    \includegraphics[width=\textwidth]{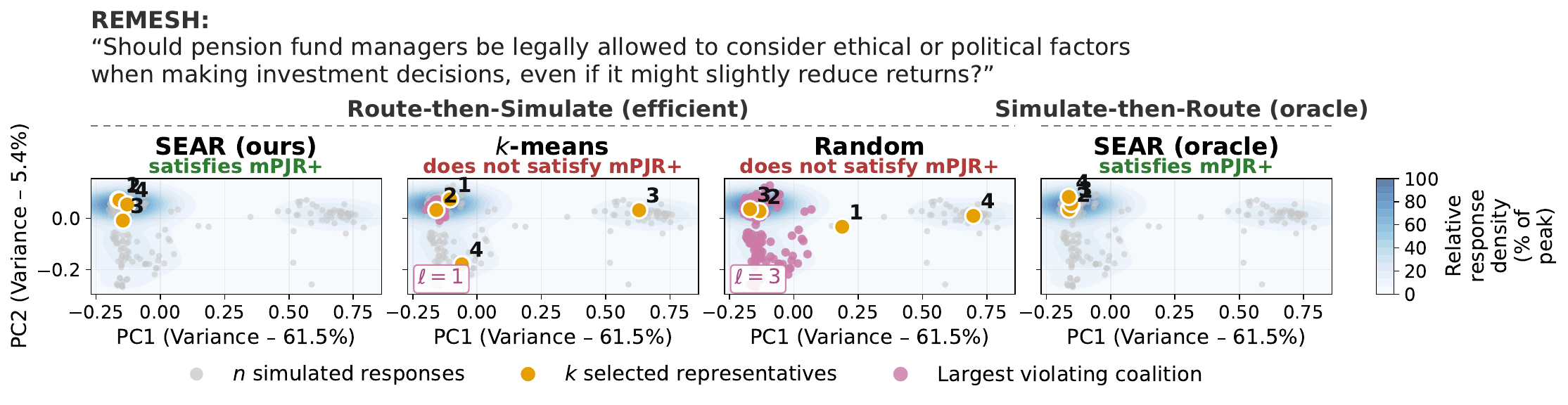}
    \caption{\textbf{Qualitative example on Remesh.} Response embeddings and selected representatives for the prompt ``Should pension fund managers be legally allowed to consider ethical or political factors when making investment decisions, even if it might slightly reduce returns?" For each method, \Cref{tab:remesh-rank04-pca-slates} reports the text of its $k=4$ selected responses.}
    \label{fig:remesh-rank04-pca}
\end{figure*}


\section{Discussion} \label{sec:disc}
In this work, we provided the first formalization of scalable and representative routing for simulation-augmented generation, grounded in social choice theory. We showed that to represent a population of $n_H$ humans, we need only create $n \ll n_H$ simulations of them, and need only dynamically query $k \ll n$ of those simulations at inference time, while still maintaining approximate proportional representation guarantees for the full population.

\textbf{Limitations.} Several limitations warrant discussion. Our empirical evaluation uses two datasets, leaving generalization to other domains and populations as future work. To systematically study the impact of simulation fidelity gap $\alpha$, it remains to be seen what levels of $\gamma$ are expected in the wild, although we generally expect fidelity to improve as simulation methods improve. Finally, as is standard in work on committee elections, the choice of $k$ is left open and must be tailored to the application context, since it directly determines the threshold group size required for representation. 

\textbf{Future work.} Several directions remain open. First, future work could further push the Pareto frontier between scalability and representativeness, including exploring other axioms beyond reward PJR+. Second, in practice, different simulations may be more reliable in different domains (e.g., politics versus interpersonal advice). How should routing account for heterogeneous fidelity, e.g., if certain subgroups of the population have systematically different fidelity? 

\textbf{Broader impacts.} Our work is situated within the broader context of simulation-augmented generation (SAGE). That said, the specific contribution of this paper addresses only one component of this vision: the routing problem of efficiently selecting representative simulations at inference time. A trustworthy implementation of SAGE requires several additional components—including high-fidelity simulation, confidence estimation, and appropriate user interfaces. We caution against deploying the routing methods developed here in isolation, without adequate attention to these complementary elements.

\newpage
\bibliographystyle{unsrtnat}
\bibliography{references}

\newpage
\appendix

   \clearpage

\section{Additional Material on \Cref{sec:reward-commutation}}

In this section, we provide additional details on and proofs omitted from \Cref{sec:reward-commutation}.

\subsection{Proofs Omitted from \Cref{sec:simulate-then-route}}
\gtsitprf*
\begin{proof}
Let \(C=(o_i)_{i\in\mathcal N}\) be the responses generated in the first
step of Simulate--then--Route. The second step applies
\(\textsc{ResponseRouter}\) to \(C\).

Fix \(\ell\in[k]\), an unselected response \(c\in C\setminus Y\), and an
\(\ell\)-large group \(S\subseteq\mathcal N\). Let
\[
    \tau:=\min_{i\in S} r_i(p,c).
\]
By threshold \(\tau\), every human in \(S\) approves \(c\). Since \(c\) is not
selected, the group \(S\) must have spent enough of its budget on previously
selected responses that it can no longer fund \(c\). Since
\(|S|\ge \ell n/k\), this implies that at least \(\ell\) selected responses
received positive budget from members of \(S\) at reward thresholds at least
\(\tau\). For every such selected response \(o\), there is some \(i\in S\)
such that
\[
    r_i(p,o)\ge \tau
    =\min_{j\in S}r_j(p,c).
\]
Hence at least \(\ell\) responses \(o\in Y\) satisfy
\[
    \max_{i\in S}r_i(p,o)
    \ge \min_{i\in S}r_i(p,c),
\]
so \(Y\) satisfies reward PJR+.
\end{proof}

\paragraph{Fast implementation of REAR and SEAR.}
We prove Theorem~2 using the common bookkeeping underlying REAR and SEAR,
and present the implementation below in the geometric SEAR notation for
convenience. For REAR, one simply orders humans for each candidate by
decreasing reward rather than increasing distance, and replaces the
minimum-radius threshold by the corresponding maximum reward threshold;
the budget updates and runtime analysis are otherwise identical. Thus,
the analysis below establishes the claimed $O(mn\log n)$ runtime for
REAR, while the corresponding runtime bound for SEAR follows in
Corollary~1.

For comparison, \citet{prf} analyze SEAR in the special case in which
the same $n$ points serve as both voters and candidates. Their proposed
implementation and accompanying analysis give the following
polynomial-time bound\footnote{Versions of this paper have been presented at several workshops, including EEAMO '25, IASEAI '26, and ICML '26. The cited result refers to the version of the paper available at the time we established our result. In July 2026, the authors updated their paper with a runtime of \(O(n^2(\log n + k))\).}.

\begin{theorem}[\citet{prf}]
\label{thm:slow}
In the setting with $n$ voters and the same $n$ points as candidates,
SEAR terminates in time $O(n^4k^2)$.
\end{theorem}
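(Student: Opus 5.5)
Theorem~\ref{thm:slow} is quoted from \citet{prf}, so I plan to re-derive the stated polynomial bound by a direct, admittedly crude, analysis of the natural event-driven implementation of SEAR, not to give a new proof.

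\textbf{Event structure.} With $n$ voters and the same $n$ points as candidates, the only radii at which the approval structure changes are the pairwise distances $\Delta(v_i,c_j)$. There are at most $n^2$ of them, and they can be sorted once in $O(n^2\log n)$ time. Between two consecutive critical distances the approval sets are fixed. The only other way a candidate can become fundable is through a budget change, and budgets change only when a candidate is selected. So the process is driven by at most $n^2$ radius events, interleaved with exactly $k$ selection events.

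\textbf{Cost per check.} At each radius event, the naive implementation rescans every unselected candidate $c_j$. For each one it recomputes $\sum_{i:\Delta(v_i,c_j)\le\rho} b_i$ from scratch in $O(n)$ time and tests whether this sum is at least $n/k$. One full scan therefore costs $O(n\cdot n)=O(n^2)$.

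\textbf{Counting.} After each of the $k$ selections, the procedure deducts $n/k$ in total from the approvers, which is an $O(n)$ update. The crude version then restarts its sweep over the sorted critical radii, performing a full fundability scan at each one. A cheap way to make the count mechanical is to charge every scan to a pair (selection phase, critical radius), together with the $O(k)$ candidates that may become fundable simultaneously at that radius and require tie-breaking and rechecking. That gives $k$ phases, times $n^2$ radii, times $O(k)$ rechecks, times $O(n^2)$ work per scan, for a total of $O(k^2n^4)$. The sorting and the budget updates are lower order, so this matches the claim.

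\textbf{Main obstacle.} The delicate step is justifying the extra factor of $k$, or equivalently showing that no single radius forces more than $O(k)$ cascaded rechecks. At one radius, several candidates can cross the threshold $n/k$ in sequence, because each selection lowers the budgets seen by the remaining candidates. My plan is to bound this cascade by $k$ directly: every cascaded recheck either selects a candidate, and there are only $k$ selections in total, or terminates the cascade at that radius. Since this is an upper bound, any slack in the argument is harmless. The later fast implementation (Theorem~\ref{thm:rear-runtime}) will improve on it anyway.
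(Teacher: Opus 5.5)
Your bound is correct, but the paper has no proof to compare it with. The statement is quoted from \citet{prf} only as the baseline to be improved. The paper's own treatment is Theorem~\ref{thm:rear-runtime}. It presorts voters by distance for each candidate and maintains a prefix pointer and partial budget sum for each unselected candidate. It charges voters in distance order, so every payer except the last in each round is exhausted. This caps positive payments at $n+k$ and pointer advances at $mn$, giving $O(mn\log n)$. For $m=n$ that is $O(n^2\log n)$, which trivially implies the quoted bound under the paper's tie-breaking and payment rules. Your naive sweep is much cruder, but it does not depend on how the $n/k$ is deducted. That is the right generality if the SEAR of \citet{prf} pays differently.

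Your accounting has more slack than you suggest, and the ``main obstacle'' does not exist. Budgets only decrease, so each candidate's threshold radius can only grow and the sweep never needs to restart. Each critical radius then costs one scan plus one scan per selection made there, so there are at most $n^2+k$ scans and the total is $O(n^4)$. Charging both a restart per phase and an $O(k)$ cascade per radius counts the same work twice. This is harmless for an upper bound, but the extra $k^2$ needs no justification.

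Two small fixes are needed. First, a budget change can never make a candidate fundable. You must rescan at the same radius because several candidates may be fundable at once, possibly far more than $O(k)$, while at most $k$ are ever selected. Second, you should justify that each phase ends in a selection before the sweep passes the largest critical radius. At that radius every ball contains all $n$ voters, whose remaining budget after $t<k$ selections is $n-tn/k\ge n/k$.
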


Our application uses the more general formulation with a voter multiset
$V=(v_1,\ldots,v_n)$ and a possibly different candidate multiset
$C=(c_1,\ldots,c_m)$. We treat voters and candidates as indexed
objects, so distinct voters or candidates may occupy the same point.
We assume, as usual, that $k\le \min\{m,n\}$.

Each voter $q\in[n]$ begins with budget $b_q=1$. In each round, every
unselected candidate expands a ball around itself until the voters in
the ball have total remaining budget at least $n/k$. The candidate
whose ball first reaches this threshold is selected, and its approving
voters pay a total of $n/k$. Ties between candidates and between
equidistant voters are resolved by index order.

\begin{algorithm} \caption{Spatial Expanding Approval Rule (SEAR)} \label{alg:sear-fast} \begin{algorithmic}[1] \Require Voters $V=(v_1,\ldots,v_n)$, candidates $C=(c_1,\ldots,c_m)$, slate size $k$ \State $X\gets ()$ \State $b_q\gets 1$ for every $q\in[n]$ \For{each $p\in[m]$} \State Let $L_p=(L_p[1],\ldots,L_p[n])$ be the voters ordered by increasing $\Delta(c_p,v_q)$, breaking ties by voter label \EndFor \For{$t=1,\ldots,k$} \State $A\gets\{p\in[m]:c_p\notin X\}$ \For{each $p\in A$} \State Let $i(p)$ be the smallest index such that \[ \sum_{j=1}^{i(p)} b_{L_p[j]} \ge \frac nk . \] \State $r_p\gets \Delta(c_p,v_{L_p[i(p)]})$ \EndFor \State Choose $p^*\in\arg\min_{p\in A}r_p$, breaking ties by label order \State Append $c_{p^*}$ to $X$ \State $\rho\gets n/k$ \Comment{remaining amount to collect from the selected ball} \For{$j=1,\ldots,i(p^*)$} \State $q\gets L_{p^*}[j]$ \State $x_q\gets\min\{b_q,\rho\}$ \Comment{amount paid by voter $q$} \State $b_q\gets b_q-x_q$ \Comment{update $q$'s remaining budget} \State $\rho\gets\rho-x_q$ \Comment{update remaining joint payment} \If{$\rho=0$} \State \textbf{break} \EndIf \EndFor \EndFor \State \Return $X$ \end{algorithmic} \end{algorithm}

The following theorem improves the $O(n^4k^2)$ bound and applies to the
general setting with $m$ candidates and $n$ voters. In particular, it
has no separate multiplicative dependence on $k$.

\searruntime*

\begin{proof}
We analyze the implementation in \Cref{alg:sear-fast}. First compute
all $mn$ candidate--voter distances. For every candidate $c_j$, sort
the $n$ voter labels by increasing distance from $c_j$, breaking ties
by voter index, and store both the resulting list $L_j$ and its inverse
rank map. This preprocessing takes $O(mn\log n)$ time.

The algorithm maintains the remaining voter budgets
$b_1,\ldots,b_n$. For each unselected candidate $c_j$, it also
maintains an index $i(j)$ and the partial sum
$B(j):=\sum_{h=1}^{i(j)}b_{L_j[h]}$. At the beginning of each round,
$i(j)$ is advanced until $B(j)\ge n/k$. Because $L_j$ is ordered by
distance from $c_j$, the quantity
$r_j:=\Delta(c_j,v_{L_j[i(j)]})$ is the smallest radius at which the
ball around $c_j$ contains total remaining budget at least $n/k$.

The algorithm selects an unselected candidate $c_{j^*}$ minimizing
$r_j$. It then charges the voters in the corresponding prefix of
$L_{j^*}$ in order until they have paid $n/k$ in total. Every
positive-budget voter before the final contributor spends its entire
remaining budget, while the final contributor pays exactly the
remaining amount. Since $i(j^*)$ is minimal, all charged voters lie in
the selected candidate's approval ball at radius $r_{j^*}$. Thus this is exactly a valid execution of SEAR under the stated
tie-breaking and payment rules. The corresponding statement for REAR
follows by replacing increasing distance with decreasing reward, as
described above.

We now bound the cost of maintaining the partial sums. Suppose voter
$q$ pays an amount $x_q>0$. For every unselected candidate $c_j$, voter
$q$ appears in the maintained prefix precisely when
$\operatorname{rank}_j(q)\le i(j)$. We can test this condition in
constant time and, when it holds, subtract $x_q$ from $B(j)$. Hence,
processing one positive contribution costs $O(m)$.

Every contributor other than the final contributor in a round spends
its entire remaining budget and can therefore be such a contributor at
most once during the execution. There are consequently at most $n$
positive non-final contributions. There is at most one final
contributor per round, giving at most $k$ further positive
contributions. Thus the total number of positive contributions is at
most $n+k\le 2n$, and all partial-sum updates take $O(mn)$ time.

After budgets decrease, some partial sums may fall below $n/k$. For
each fixed candidate $c_j$, the index $i(j)$ only increases and never
exceeds $n$. The total number of index increments over the entire
execution is therefore at most $mn$, so all threshold recomputations
take $O(mn)$ time.

Finding the minimum-radius unselected candidate by scanning the
candidate set costs $O(m)$ per round and $O(mk)\le O(mn)$ in total.
Collecting a payment scans at most $n$ voter positions per round, for
a total of $O(nk)\le O(nm)$. All work after preprocessing therefore
takes $O(mn)$ time.

Combining this with the $O(mn\log n)$ preprocessing cost, the total
amortized running time is
$O(mn\log n)+O(mn)=O(mn\log n)$.
\end{proof}
\subsection{Details Omitted from \Cref{sec:geom-rep}}\label{app:geom-rep}
 The following is an example demonstrating that k-means fails mPJR+.

\begin{example}
Consider six indexed voters and candidates in $\mathbb R^2$:
\[
v_1=v_2=v_3=v_4=(0,0),\qquad
v_5=(1,0),\qquad
v_6=(0,1),
\]
and let $k=3$. An optimal $k$-means solution places one center at each
of the three occupied locations and has objective value zero. Selecting
one candidate closest to each center therefore gives, up to
tie-breaking, the slate
$X=\{v_1,v_5,v_6\}$.

Let $S=\{v_1,v_2,v_3,v_4\}$ and $\ell=2$. Since
$|S|=4=2n/k$, the group $S$ is $\ell$-large. Consider the unselected
candidate $c=v_2$. Since all members of $S$ and $c$ are located at the
origin,
$\max_{i\in S}\Delta(v_i,c)=0$. Hence mPJR+ requires at least two
members $x\in X$ satisfying
$\min_{i\in S}\Delta(v_i,x)\le 0$. Only $v_1$ satisfies this
inequality, so the $k$-means slate violates mPJR+.
\end{example}

Moreover, balanced k-means also fails mPJR+.
\begin{example}
Balanced $k$-means does not necessarily satisfy mPJR+, even in
$\mathbb R^2$. Let $n=9$ and $k=3$, and define
\[
O=(0,0),\qquad
A=(1,0),\qquad
B=\left(-\frac12,\frac{\sqrt3}{2}\right),\qquad
C=\left(-\frac12,-\frac{\sqrt3}{2}\right).
\]
Consider three indexed voters and candidates at $O$ and two at each of
$A$, $B$, and $C$.

Balanced $k$-means partitions the points into three clusters of size
three. An optimal balanced clustering is
\[
\{O,A,A\},\qquad \{O,B,B\},\qquad \{O,C,C\}.
\]
Moreover, every optimal balanced clustering has this form, up to the
labels of coincident points. Indeed, for a cluster of three points, the
within-cluster sum of squares is one third of the sum of its pairwise
squared distances. The displayed clustering has total cost $2$. If the
three copies of $O$ are distributed among the clusters as $(2,1,0)$,
the cost is at least $10/3$, while the distribution $(3,0,0)$ has cost
at least $4$. Under the remaining distribution $(1,1,1)$, cost $2$ is
attained only when the two non-origin points in each cluster coincide.

The three cluster centroids are $2A/3$, $2B/3$, and $2C/3$. Selecting
a candidate nearest each centroid therefore selects one candidate at
each of $A$, $B$, and $C$, and no candidate at $O$.

Let $S$ be the group consisting of the three candidates at $O$, and
let $\ell=1$. Since $|S|=3=n/k$, the group is $\ell$-large. For an
unselected candidate $c$ at $O$, we have
$\max_{i\in S}\Delta(v_i,c)=0$. However, no selected candidate $x$
satisfies $\min_{i\in S}\Delta(v_i,x)\le 0$. Thus the balanced
$k$-means slate violates mPJR+.
\end{example}

\sampledpop*

\begin{proof}
It suffices to consider $\ell$ such that $\ell/k+\varepsilon\le 1$,
since otherwise no such group $S$ exists. For each such $\ell$, let
\(q_\ell
    :=
    \left\lceil(\ell/k+\varepsilon)n_H\right\rceil.\)
Write \(v_i:=\phi_i(p)=\psi(o_i^*(p)),\)
and let $\angle(v_i,v_j)$ denote angular distance. For each
$j\in\mathcal H$, let $S_j^\ell$ consist of the $q_\ell$ population
indices closest to $v_j$ in angular distance, with ties resolved by a
fixed rule independent of the sample, and set
$T_j^\ell:=S_j^\ell\cap\mathcal N.$
For every fixed pair $(j,\ell)$,
$\mathbb E\left[\frac{|T_j^\ell|}{n}\right]
    =
    \frac{q_\ell}{n_H}
    \ge
    \frac{\ell}{k}+\varepsilon.$
Hence Hoeffding's inequality for sampling without replacement gives
\[
    \Pr\left(
        |T_j^\ell|<\frac{\ell n}{k}
    \right)
    \le
    2e^{-2n\varepsilon^2}.
\]
Taking a union bound over at most $kn_H$ pairs $(j,\ell)$,
\[
    \Pr\left(
        \exists j,\ell:
        |T_j^\ell|<\frac{\ell n}{k}
    \right)
    \le
    2kn_H e^{-2n\varepsilon^2}
    \le\delta,
\]
by the choice of $n$. We condition on the complementary event for the
remainder of the proof.

Fix a size-$k$ slate
$Y\subseteq\mathcal C_{\mathcal N}$ satisfying rPJR+ on the sampled
instance. Fix $\ell\in[k]$, a comparison response
$c=o_j^*(p)\in\mathcal C_{\mathcal H}\setminus Y,$
and a group $S\subseteq\mathcal H$ with
$
    |S|\ge(\ell/k+\varepsilon)n_H,
    r_i(p,c)\ge\cos\beta$
    for all $i\in S.
$
If $\beta\ge\pi/4$, then $\gamma_\beta=\pi$ and the claim is trivial,
so suppose $\beta<\pi/4$.

By Assumption~1 and the definition of $o_j^*(p)$,
$r_i(p,c)
    =
    v_i^\top v_j
    =
    \cos\angle(v_i,v_j).$
Since cosine is decreasing on $[0,\pi]$,
$\angle(v_i,v_j)\le\beta$ for every $i\in S$.
Thus at least $q_\ell$ population members lie within angular distance
$\beta$ of $v_j$. By the definition of $S_j^\ell$,
$\angle(v_t,v_j)\le\beta$ for every $t\in S_j^\ell.$ Set $T:=T_j^\ell$. By the concentration event,
$
    |T|\ge\frac{\ell n}{k}.
$

If at least $\ell$ indices in $T$ have their corresponding responses
selected in $Y$, then for each such response $o_x^*(p)\in Y$ and any
$i\in S$,
$
    \angle(v_i,v_x)
    \le
    \angle(v_i,v_j)+\angle(v_j,v_x)
    \le2\beta,
$
so
$
    r_i(p,o_x^*(p))
    \ge\cos(2\beta),
$
and the claim follows. Otherwise, fewer than $\ell$ indices of $T$ correspond to selected
responses. Since $n\ge k$,
$|T|\ge\frac{\ell n}{k}\ge\ell,$
so there exists some $a\in T$ such that
$o_a^*(p)\notin Y$.

For all $t\in T$,
$\angle(v_t,v_a)
    \le
    \angle(v_t,v_j)+\angle(v_j,v_a)
    \le2\beta.$
Hence
$r_t(p,o_a^*(p))
    =
    \cos\angle(v_t,v_a)
    \ge\cos(2\beta)
    \qquad\text{for every }t\in T.$
Because $T$ is $\ell$-large in the sampled population and
$o_a^*(p)\in\mathcal C_{\mathcal N}\setminus Y$, sampled rPJR+ implies
that at least $\ell$ responses $o=o_x^*(p)\in Y$ satisfy
$\max_{t\in T} r_t(p,o)
    \ge
    \min_{t\in T}r_t(p,o_a^*(p))
    \ge
    \cos(2\beta).$
For each such $o=o_x^*(p)$, there therefore exists $t_x\in T$ such that
$\angle(v_{t_x},v_x)\le2\beta.$

Now fix any $i\in S$. Since $t_x\in T\subseteq S_j^\ell$,
\[
\begin{aligned}
    \angle(v_i,v_x)
    &\le
    \angle(v_i,v_j)
    +\angle(v_j,v_{t_x})
    +\angle(v_{t_x},v_x)\\
    &\le
    \beta+\beta+2\beta\\
    &=4\beta.
\end{aligned}
\]
Thus $r_i(p,o)
    =
    \cos\angle(v_i,v_x)
    \ge
    \cos(4\beta).$
Consequently, for each of these at least $\ell$ responses,
\(
    \max_{i\in S}r_i(p,o)\ge\cos(4\beta).
\)
\end{proof}

\rewardmetricequivalence*
\begin{proof}
Fix the prompt $p$, and write $v_i:=\phi_i(p)$ and
$c_j:=\psi(o_j)$. Let $X$ be the indexed embedding slate corresponding
to the response slate $Y$.

Since $v_i$ and $c_j$ are unit vectors,
$r_i(p,o_j)=v_i^\top c_j
=1-\frac{1}{2}\Delta(v_i,c_j)^2$. Thus, for any selected response
$o_a$, any unselected response $o_j$, and any group $S$,
\[
\max_{i\in S}r_i(p,o_a)\ge \min_{i\in S}r_i(p,o_j)
\quad\Longleftrightarrow\quad
\min_{i\in S}\Delta(v_i,c_a)
\le \max_{i\in S}\Delta(v_i,c_j).
\]
Therefore, at least $\ell$ selected responses satisfy the reward PJR+
inequality if and only if their corresponding indexed candidates
satisfy the mPJR+ inequality. Since selected and unselected responses
correspond index-by-index to selected and unselected candidates, $Y$
satisfies reward PJR+ if and only if $X$ satisfies mPJR+.
\end{proof}

\subsection{Proofs Omitted from \Cref{sec:route-then-simulate}}
\exactstg*
\begin{proof}
Write $v_i:=\phi_i(p)$ and $c_i:=\psi(A_i(p))$. By
Assumption~\ref{assump:factorized-reward}, the maximum possible reward
for human $i$ is $1$. Since $A_i(p)$ is reward-maximizing,
$v_i^\top c_i=1$, and hence $c_i=v_i$ because both vectors are unit
vectors.

Therefore, Simulate--then--Route and Route--then--Simulate both run SEAR
using the same indexed multiset $(v_i)_{i\in\mathcal N}$ as voters and
candidates. Under the common tie-breaking rule, they select the same
simulation indices. The final claim follows from
Theorem~\ref{thm:gts-it-prf}.
\end{proof}

\approxstg*
\begin{proof}
Write $v_i:=\phi_i(p)$, $o_i:=A_i(p)$, and
$z_i:=\psi(o_i)$ for each $i\in\mathcal N$. Let $R$ be the set of
indices selected by $SimulationRouter$, let
$X:=(v_r)_{r\in R}$ be the corresponding indexed embedding slate, and
let $Y:=(o_r)_{r\in R}$ be the slate returned by
Route--then--Simulate.

For unit vectors $x,y$, write $\angle(x,y)$ for their angular distance.
Since $x^\top y=\cos(\angle(x,y))$, the condition
$r_i(p,o)\ge\cos(\theta)$ is equivalent to
$\angle(v_i,\psi(o))\le\theta$, for $\theta\in[0,\pi]$. Moreover, the
assumption $Simgap(p)\le\alpha$ gives
$\angle(v_i,z_i)\le\alpha$ for every $i\in\mathcal N$. The slate $X$ satisfies mPJR+ with respect to
$(V,V)$, where $V:=(v_i)_{i\in\mathcal N}$. Although mPJR+ was stated
using Euclidean distance, we may equivalently apply it using angular
distance: for unit vectors,
$\lVert x-y\rVert_2=2\sin(\angle(x,y)/2)$, which is increasing in
$\angle(x,y)$ on $[0,\pi]$. Fix $\ell\in[k]$, an indexed unselected response
$o_j\in C\setminus Y$, equivalently $j\notin R$, and an $\ell$-large
group $S\subseteq\mathcal N$. Suppose that
$r_i(p,o_j)\ge\cos(\beta)$ for every $i\in S$. Equivalently,
$\angle(v_i,z_j)\le\beta$ for every $i\in S$. We first obtain a bound using the unselected candidate $v_j$. For every
$i\in S$, the angular triangle inequality gives
$\angle(v_i,v_j)\le\angle(v_i,z_j)+\angle(z_j,v_j)
\le\beta+\alpha$. Since $j\notin R$, the indexed candidate $v_j$ is
unselected from $X$. Applying mPJR+ to the group $S$ and candidate
$v_j$, there are at least $\ell$ selected indices $r\in R$ such that
$\min_{i\in S}\angle(v_i,v_r)
\le\max_{i\in S}\angle(v_i,v_j)\le\beta+\alpha$.

For each such $r$, choose $i_r\in S$ with
$\angle(v_{i_r},v_r)\le\beta+\alpha$. After generation,
$\angle(v_r,z_r)\le\alpha$, and hence
$\angle(v_{i_r},z_r)\le\beta+2\alpha$. Thus, at least $\ell$ responses
$o_r\in Y$ receive reward at least
$\cos(\min\{\pi,\beta+2\alpha\})$ from some member of $S$. We next obtain a second bound from the diameter of $S$. For every
$i,i'\in S$, both $v_i$ and $v_{i'}$ are within angle $\beta$ of
$z_j$, so $\angle(v_i,v_{i'})\le 2\beta$. If $|R\cap S|\ge\ell$, then the corresponding selected embeddings
already lie in $S$, and after generation their responses are within
angle $\alpha$, and therefore within angle $\alpha+2\beta$, of a member
of $S$.

Otherwise, $|R\cap S|<\ell$. Since $S$ is $\ell$-large and $n\ge k$,
we have $|S|\ge\ell n/k\ge\ell$, so there is some $s\in S\setminus R$.
Applying mPJR+ to $S$ and the unselected candidate $v_s$, there are at
least $\ell$ selected indices $r\in R$ such that
$\min_{i\in S}\angle(v_i,v_r)
\le\max_{i\in S}\angle(v_i,v_s)\le 2\beta$.
For each such $r$, generation adds at most $\alpha$ to this distance,
so $\angle(v_i,z_r)\le\alpha+2\beta$ for some $i\in S$. Thus, in either case, at least $\ell$ responses in $Y$ are within angle
$\alpha+2\beta$ of some member of $S$. Combining this with the first
bound and choosing the better of the two arguments, at least $\ell$
responses $o_r\in Y$ satisfy, for some $i\in S$,
\begin{align*}
\angle(v_i,z_r)
&\le
\min\{\pi,\beta+2\alpha,\alpha+2\beta\} \\
&=
\min\{\pi,\alpha+\beta+\min\{\alpha,\beta\}\}
=:\gamma_\beta.
\end{align*}
Consequently, each such response satisfies
$r_i(p,o_r)\ge\cos(\gamma_\beta)$, and therefore
$\max_{i\in S}r_i(p,o_r)\ge\cos(\gamma_\beta)$.

Since $\ell$, the unselected indexed response $o_j$, and the
$\ell$-large group $S$ were arbitrary, $Y$ satisfies
$\gamma_\beta$-approximate reward PJR+.
\end{proof}

\subsection{Proof Omitted from \Cref{sec:predicted-response-embeddings} }
\ptsgreward*
\begin{proof}
For each $i\in\mathcal N$, write $v_i:=\phi_i(p)$,
$c_i:=\psi(A_i(p))$, and $\theta_i:=\theta_i(p)$. Let
$R:=SimulationRouter((\theta_i)_{i\in\mathcal N},k)$ be the set of
selected indices, let $X:=(\theta_r)_{r\in R}$ be the corresponding
indexed predicted-embedding slate, and let
$Y:=(A_r(p))_{r\in R}$ be the indexed response slate returned by
Predictive Route--then--Simulate.

For unit vectors $x,y$, write $\angle(x,y)$ for their angular distance.
The assumptions $\operatorname{Simgap}(p)\le\alpha$ and VPE prediction
error at most $\eta$ imply
$\angle(v_i,c_i)\le\alpha$ and
$\angle(\theta_i,c_i)\le\eta$ for every $i\in\mathcal N$.

Since $R$ is returned by $SimulationRouter$ on the predicted
embeddings, $X$ satisfies mPJR+ with respect to the voter and candidate
multisets $(\Theta,\Theta)$, where
$\Theta:=(\theta_i)_{i\in\mathcal N}$. Although mPJR+ is stated using
Euclidean distance, it can equivalently be applied using angular
distance, since for unit vectors
$\lVert x-y\rVert_2=2\sin(\angle(x,y)/2)$, which is increasing in
$\angle(x,y)$ on $[0,\pi]$.

Fix $\ell\in[k]$, an indexed unselected response
$o_j:=A_j(p)$ with $j\notin R$, and an $\ell$-large group
$S\subseteq\mathcal N$. Suppose that
$r_i(p,o_j)\ge\cos(\beta)$ for every $i\in S$. Since
$r_i(p,o_j)=v_i^\top c_j$, this is equivalent to
$\angle(v_i,c_j)\le\beta$ for every $i\in S$.

For each $i\in S$, the angular triangle inequality gives
\begin{align*}
\angle(\theta_i,\theta_j)
&\le
\angle(\theta_i,c_i)
+\angle(c_i,v_i)
+\angle(v_i,c_j)
+\angle(c_j,\theta_j) \\
&\le
\eta+\alpha+\beta+\eta
=
\beta+\alpha+2\eta.
\end{align*}
Consequently,
$\max_{i\in S}\angle(\theta_i,\theta_j)
\le\beta+\alpha+2\eta$.

The indexed candidate $\theta_j$ is unselected from $X$. Applying
mPJR+ to the group $S$ and the unselected candidate $\theta_j$, there
are at least $\ell$ selected indices $r\in R$ such that
$\min_{i\in S}\angle(\theta_i,\theta_r)
\le\max_{i\in S}\angle(\theta_i,\theta_j)
\le\beta+\alpha+2\eta$.

For each such $r$, choose $i_r\in S$ satisfying
$\angle(\theta_{i_r},\theta_r)\le\beta+\alpha+2\eta$. Using the
simulation-fidelity and prediction-error bounds once more, we obtain
\begin{align*}
\angle(v_{i_r},c_r)
&\le
\angle(v_{i_r},c_{i_r})
+\angle(c_{i_r},\theta_{i_r})
+\angle(\theta_{i_r},\theta_r)
+\angle(\theta_r,c_r) \\
&\le
\alpha+\eta+(\beta+\alpha+2\eta)+\eta \\
&=
\beta+2\alpha+4\eta
=
\widetilde\gamma_\beta.
\end{align*}
It follows that
$r_{i_r}(p,A_r(p))
=v_{i_r}^{\top}c_r
\ge\cos(\widetilde\gamma_\beta)$.
Hence, for each of these at least $\ell$ selected responses,
$\max_{i\in S}r_i(p,A_r(p))
\ge\cos(\widetilde\gamma_\beta)$.

Since $\ell$, the unselected indexed response $o_j$, and the
$\ell$-large group $S$ were arbitrary, the slate returned by
Predict--route--then--simulate satisfies
$\widetilde\gamma_\beta$-approximate reward PJR+.
\end{proof}


\section{Additional Sampling Results}\label{app:sampling}
This section gives two additional population-transfer results for mPJR+ based
on sampling without replacement. First, under uniform sampling, we show that
in Euclidean dimension \(d\), mPJR+ guarantees transfer from a sample to the
full population with sample complexity depending on \(d\), \(\varepsilon\),
and \(\delta\), but not on the population size \(n_H\) or the slate size \(k\).
Second, we extend the population-transfer guarantee of
\Cref{thm:sampled-population} from uniform sampling to stratified sampling via
a conditioning argument.

\paragraph{Dimension-dependent population transfer.}
\begin{lemma}
\label{lem:witness-uniform-convergence}
Let $\varepsilon \in (0, 1/e]$ and $\delta \in (0,1)$.
Let \(\mathcal H\) be a finite population of size \(n_H\), and let
\(\mathcal W\subseteq 2^{\mathcal H}\) be a set system with VC dimension at most
\(h\). Let \(\mathcal N\subseteq\mathcal H\) be a uniformly
random sample of \(n\) agents drawn without replacement. If
\(n=\Omega\!\left(\varepsilon^{-2}(h\log(1/\varepsilon)+\log(1/\delta))\right)\),
then with probability at least \(1-\delta\), every \(W\in\mathcal W\) satisfies
\[
\left|
\frac{|W\cap\mathcal N|}{|\mathcal N|}
-
\frac{|W|}{|\mathcal H|}
\right|
\le \varepsilon .
\]
\end{lemma}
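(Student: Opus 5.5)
The plan is to reduce to the classical uniform-convergence theorem for VC classes with i.i.d. sampling (the $\varepsilon$-approximation theorem). We then handle sampling without replacement with a standard comparison argument.

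The route I would take is the double-sampling (symmetrization) proof adapted to a finite population. This avoids any appeal to i.i.d. sampling.

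\textbf{Step 1: reduce to a finite family of traces.} Since $\mathcal H$ is finite and $\mathcal W$ has VC dimension at most $h$, Sauer--Shelah bounds the number of distinct traces $\{W\cap A: W\in\mathcal W\}$ on any set $A$ of size $2n$ by $(2en/h)^h$.

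\textbf{Step 2: symmetrize.} Let $\mathcal N$ be a uniform $n$-subset, and draw a second uniform $n$-subset $\mathcal N'$ from $\mathcal H\setminus\mathcal N$ (assuming $2n\le n_H$; otherwise adjust by a trivial case). Suppose some $W$ has $\bigl||W\cap\mathcal N|/n-|W|/n_H\bigr|>\varepsilon$. Then, by the hypergeometric Chebyshev/Hoeffding bound (Hoeffding's inequality for sampling without replacement, already used in the proof of \Cref{thm:sampled-population}), with probability at least $1/2$ the ghost sample satisfies $\bigl||W\cap\mathcal N'|/n-|W|/n_H\bigr|\le\varepsilon/2$. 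This holds once $n\gtrsim\varepsilon^{-2}$, and yields
\[
\Pr[\exists W \text{ bad}]\le 2\Pr\bigl[\exists W: \bigl||W\cap\mathcal N|-|W\cap\mathcal N'|\bigr|>\varepsilon n/2\bigr].
\]

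\textbf{Step 3: bound the right-hand side.} Condition on the union $A=\mathcal N\cup\mathcal N'$. Its split into two halves is a uniformly random balanced partition of $A$. For each of the at most $(2en/h)^h$ traces, Hoeffding for sampling without replacement gives deviation probability at most $2e^{-c\varepsilon^2 n}$. A union bound then gives total failure probability $4(2en/h)^h e^{-c\varepsilon^2 n}$. We need this to be at most $\delta$, i.e.
\[
n\ge C\varepsilon^{-2}\bigl(h\log(n/h)+\log(1/\delta)\bigr).
\]
The self-referential $\log n$ is resolved by the standard lemma that $n\ge 2a\log a$ implies $n\ge a\log n$. With $a\asymp h/\varepsilon^2$ this gives $\log n\lesssim\log(h/\varepsilon)$. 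The hypothesis $\varepsilon\le 1/e$ ensures $\log(1/\varepsilon)\ge1$, so the constants can be absorbed.

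\textbf{Main obstacle.} The delicate step is this final absorption. Naively it yields $h\log(h/\varepsilon^2)$ rather than $h\log(1/\varepsilon)$. To get the sharper $h\log(1/\varepsilon)$ as stated, I would either invoke the chaining / Haussler packing bound (as in Li--Long--Srinivasan's $\varepsilon$-approximation bound, which gives exactly $\varepsilon^{-2}(h\log(1/\varepsilon)+\log(1/\delta))$), or cite it directly for i.i.d. samples. If citing the i.i.d. result, the transfer to sampling without replacement goes through the Hoeffding comparison, since the relevant sup-deviation functional is convex in the empirical measure. My fallback is to cite this known bound and only carefully justify the without-replacement transfer.
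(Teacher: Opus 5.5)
\textbf{How your route differs from the paper's.} The paper never symmetrizes inside the finite population. It quotes the i.i.d.\ VC inequality $\Pr\bigl(\sup_{W}|\hat p(W)-p(W)|>\varepsilon\bigr)\le 8\Pi_{\mathcal W}(2n)e^{-n\varepsilon^2/32}$ and asserts that Hoeffding's comparison theorem carries it over to sampling without replacement. It then applies Sauer--Shelah and spends its real effort resolving the implicit condition on $n$.

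A correct finite-population double-sampling argument, as you propose, is self-contained. It avoids the comparison step, which strictly controls expectations of convex functionals of the empirical measure rather than tail probabilities. So the i.i.d.\ tail bound transfers only after losing constants; this is harmless inside $\Omega(\cdot)$, but the paper and your fallback both gloss over it. The paper's route is shorter, given the citation.

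\textbf{Step 2 is false as stated.} Write $p(W):=|W|/n_H$ and $\hat p_A(W):=|W\cap A|/|A|$. Conditional on $\mathcal N$, the ghost sample $\mathcal N'$ is uniform in $\mathcal H\setminus\mathcal N$. Its $W$-density is $p(W)-\frac{n}{n_H-n}\bigl(\hat p_{\mathcal N}(W)-p(W)\bigr)$, not $p(W)$. For $n_H=2n$, $\mathcal N'$ is the complement of $\mathcal N$, so $|\hat p_{\mathcal N'}(W)-p(W)|=|\hat p_{\mathcal N}(W)-p(W)|>\varepsilon$ with certainty.

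The repair is that this shift points away from $\hat p_{\mathcal N}(W)$. A one-sided Hoeffding bound about the conditional mean therefore still gives $\Pr[\exists W\text{ bad}]\le 2\Pr\bigl[\exists W: |\hat p_{\mathcal N}(W)-\hat p_{\mathcal N'}(W)|>\varepsilon/2\bigr]$ once $n\ge 2\ln 2/\varepsilon^2$. The case $2n>n_H$ is not trivial either. It reduces to the other case via $\hat p_{\mathcal N}(W)-p(W)=-\frac{n_H-n}{n}\bigl(\hat p_{\mathcal H\setminus\mathcal N}(W)-p(W)\bigr)$.

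\textbf{Your ``main obstacle'' is not real.} Neither chaining nor a Li--Long--Srinivasan citation is needed, though citing it would also work. Resolving the implicit condition is exactly what the paper's internal Claim does. The $\log h$ leaks in only because you replace $\log(n/h)$ by $\log n$ when applying ``$n\ge 2a\log a\Rightarrow n\ge a\log n$''.

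Keep the $1/h$ that Sauer--Shelah provides. Set $M:=\log(1/\varepsilon)\ge 1$, $L:=\log(4/\delta)$ and $n_0:=C(hM+L)/\varepsilon^2$. Using $\log(a+b)\le\log a+b/a$ and $\log M\le M$,
\[
\log\frac{2en_0}{h}=\log(2eC)+2M+\log\Bigl(M+\frac{L}{h}\Bigr)\le \log(2eC)+3M+\frac{L}{hM}.
\]
Hence $h\log\frac{2en_0}{h}\le h\log(2eC)+3hM+L$. Here $M\ge 1$, that is $\varepsilon\le 1/e$, is exactly what lets the $\delta$-term be absorbed.

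Your Step 3 requires $n\ge C'\varepsilon^{-2}\bigl(h\log\frac{2en}{h}+L\bigr)$, with $C'=1/c$. For $C$ large relative to $C'$, this holds at $n_0$. It then holds for all $n\ge n_0$, because the right-hand side is concave in $n$ with slope $C'h/(\varepsilon^2 n)\le 1$ there. With these two repairs, your primary route proves the lemma as stated.
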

\begin{proof}

For \(W\in\mathcal W\), let \(p(W)=|W|/|\mathcal H|\) and
\(\hat p(W)=|W\cap\mathcal N|/|\mathcal N|\). By Hoeffding's comparison
theorem for sampling without replacement, the usual i.i.d. VC inequality also
upper-bounds the uniform deviation probability for the present sampling model.
Hence, for a sample of size \(n\),
\[
\Pr\left(\sup_{W\in\mathcal W}|\hat p(W)-p(W)|>\varepsilon\right)
\le
8\Pi_{\mathcal W}(2n)\exp(-n\varepsilon^2/32).
\]

Since \(\VCdim(\mathcal W)\le h\), Sauer--Shelah gives
\(\Pi_{\mathcal W}(2n)\le (2en/h)^h\). Hence the failure probability is at most
\[
8(2en/h)^h\exp(-n\varepsilon^2/32).
\]
Thus, it suffices that
\[
n\ge
\frac{32}{\varepsilon^2}
\left(h\log\frac{2en}{h}+\log\frac{8}{\delta}\right).
\]
\begin{claim}
Let \(L:=\log(8/\delta)\), and define
\(f(m):=\frac{32}{\varepsilon^2}\bigl(h\log\frac{2em}{h}+L\bigr)\).
There exists an absolute constant \(C\) such that if
\(m_0:=C\bigl(h\log(1/\varepsilon)+L\bigr)/\varepsilon^2\), then every
\(m\ge m_0\) satisfies \(m\ge f(m)\).
\end{claim}
\begin{proof}
First, $f$ is increasing and concave with derivative
$f'(m)=\frac{32}{\varepsilon^2}\frac{h}{m}$.
Hence for all $m\ge m_\ast:=\frac{64}{\varepsilon^2}h$
we have $f'(m)\le1$.

Choose $m_0:=C\,\bigl(h\log(1/\varepsilon)+L\bigr)/\varepsilon^2$
with $C$ large enough so that $m_0\ge m_\ast$.
We upper bound the logarithm at $m_0$ by
\begin{align*}
\log\!\frac{2em_0}{h}
&\le \log\frac{m_0}{h}+\log(2e)\\
&=\log(2Ce)+2\log(1/\varepsilon)
+\log(\log(1/\varepsilon)+L/h).
\end{align*}

Using that $\log(a+b)\leq\log(a)+b/a$ for $a,b>0$
and setting $M=\log(1/\varepsilon)$,
\begin{align}
\log(2Ce)+2\log(1/\varepsilon)+\log(M+L/h)
\leq \log(2Ce)+3M+\frac{L}{hM}.
\end{align}

Since $\log(1/\varepsilon)>0$ ($\varepsilon<1$) and
$\log(1/\delta)>0$ ($\delta<1$), we obtain
\[
\begin{aligned}
f(m_0)
&\le \frac{32}{\varepsilon^2}
\Bigl(h\bigl[\tfrac{L}{hM}+3M+\log(2Ce)\bigr]+L\Bigr)\\
&= \frac{32}{\varepsilon^2}
\Bigl(\tfrac{L}{M}+3hM+h\log(2Ce)+L\Bigr)\\
&\le \frac{32}{\varepsilon^2}
\Bigl(3hM+h\log(2Ce)+2L\Bigr).
\end{aligned}
\]

Comparing with $m_0=\frac{C}{\varepsilon^2}(hM+L)$,
choosing $C\geq64$ large enough ensures $f(m_0)\le m_0$
for all $h\ge1$ and $L\ge0$.

Finally, for any $m\ge m_0$, concavity with $f'(m_0)\le1$
gives the supporting-line bound
\[
f(m)\le f(m_0)+f'(m_0)(m-m_0)
\le f(m_0)+(m-m_0)\le m.
\]
Thus $m\ge f(m)$ for all $m\ge m_0$, as claimed.
\end{proof}

The claim implies that
\(n=\Omega(\varepsilon^{-2}(h\log(1/\varepsilon)+\log(1/\delta)))\)
is sufficient for the failure probability to be at most \(\delta\). This proves
the lemma.
\end{proof}

Given a finite population \(\mathcal H\subseteq\mathcal X\) and a range family
\(\mathcal R\subseteq 2^{\mathcal X}\), write
\(
\mathcal F_{\mathcal R}(\mathcal H)
:=
\{R\cap\mathcal H : R\in\mathcal R\}
\)
for the induced set system on the population. We will apply
\Cref{lem:witness-uniform-convergence} to such induced set systems. The next
proposition specializes this to Euclidean balls, whose induced set system has
VC dimension at most \(d+1\); the same argument applies to any range family
\(\mathcal R\) with VC dimension \(h\), replacing \(d\) in the sample bound by
\(h\).
\begin{restatable}{proposition}{clusterprop}
\label{prop:cluster}Let \(\varepsilon\in(0,1/e]\) and \(\delta\in(0,1)\). Let \(\mathcal H\subseteq\mathbb R^d\) be a population of size \(n_H\), and let
\(\mathcal N\subseteq\mathcal H\) be a uniformly random \(n\)-element subset,
where
\(n=\Omega\!\left(
\varepsilon^{-2}
(d\log(1/\varepsilon)+\log(1/\delta))
\right)\).
Then, with probability at least \(1-\delta\), every mPJR+ slate
\(X^*\) with respect to \(\mathcal N\) satisfies the following: for every
\(\ell\in[k]\), every unselected candidate
\(c\in\mathcal N\setminus X^*\), and every ball-induced subset
\(S\subseteq\mathcal H\) with
\(|S|\ge(\ell/k+\varepsilon)n_H\), at least \(\ell\) members \(x\in X^*\)
satisfy
\[
\min_{i\in S}\Delta(i,x)
\le
\max_{i\in S}\Delta(i,c).
\]
\end{restatable}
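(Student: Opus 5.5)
The plan is to reduce the statement to a single uniform-convergence event over the ball-induced set system, and then use the monotonicity of the mPJR+ inequality under passing to a subgroup. The transfer should then be exact in the distance threshold, with slack only in the group size.

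First, let \(\mathcal R\) be the family of closed Euclidean balls in \(\mathbb R^d\), and consider the induced set system \(\mathcal W:=\mathcal F_{\mathcal R}(\mathcal H)\). I will invoke the standard fact that balls in \(\mathbb R^d\) have VC dimension at most \(d+1\); by lifting \(x\mapsto(x,\lVert x\rVert^2)\), balls become halfspaces in \(\mathbb R^{d+1}\). Restricting a range family to \(\mathcal H\) cannot increase the VC dimension, so \(\VCdim(\mathcal W)\le d+1\). Applying \Cref{lem:witness-uniform-convergence} with \(h=d+1\), the stated sample size \(n=\Omega(\varepsilon^{-2}(d\log(1/\varepsilon)+\log(1/\delta)))\) suffices, since \(d+1\le 2d\) and this factor is absorbed into the constant. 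This gives an event of probability at least \(1-\delta\) on which every \(W\in\mathcal W\) satisfies \(|W\cap\mathcal N|/n\ge |W|/n_H-\varepsilon\). I condition on this event. It does not depend on \(\ell\), \(c\), or the slate \(X^*\), so the conclusion will hold simultaneously for every mPJR+ slate, as the statement requires.

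Next, fix an mPJR+ slate \(X^*\) with respect to \(\mathcal N\), an \(\ell\in[k]\), an unselected \(c\in\mathcal N\setminus X^*\), and a ball-induced \(S\subseteq\mathcal H\) with \(|S|\ge(\ell/k+\varepsilon)n_H\). Set \(T:=S\cap\mathcal N\). On the good event, \(|T|\ge(\ell/k+\varepsilon-\varepsilon)n=\ell n/k\), so \(T\) is \(\ell\)-large in the sampled instance; in particular \(T\) is nonempty. Applying mPJR+ (\Cref{def:prf}) on \(\mathcal N\) to the group \(T\) and the candidate \(c\) yields at least \(\ell\) members \(x\in X^*\) with
\[
\min_{i\in T}\Delta(i,x)\le\max_{i\in T}\Delta(i,c).
\]
Since \(T\subseteq S\), we have \(\min_{i\in S}\Delta(i,x)\le\min_{i\in T}\Delta(i,x)\) and \(\max_{i\in T}\Delta(i,c)\le\max_{i\in S}\Delta(i,c)\). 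Chaining these gives the claimed inequality for the same \(\ell\) members.

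The substantive step is the uniform convergence. Unlike \Cref{thm:sampled-population}, we cannot union bound over groups, because the ball-induced groups are not indexed by a small family. The VC bound for balls together with \Cref{lem:witness-uniform-convergence} is what removes the dependence on \(n_H\) and \(k\). Once that event holds, the rest is the one-line monotonicity argument above. The only care needed is that \(T\) is a subset of the sampled voter set, so that sampled mPJR+ applies to it, which holds by construction.
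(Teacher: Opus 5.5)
Your proposal is correct and follows the paper's proof essentially step for step. Both use a single uniform-convergence event over the ball-induced set system via \Cref{lem:witness-uniform-convergence} with $h=d+1$, then apply sampled mPJR+ to $S\cap\mathcal N$ and use $S\cap\mathcal N\subseteq S$ to transfer the inequality to $S$. One minor point: the lifting $x\mapsto(x,\lVert x\rVert^2)$ into arbitrary halfspaces of $\mathbb R^{d+1}$ by itself only gives the bound $d+2$; getting $d+1$ uses that the coefficient on the lifted coordinate is fixed. This is harmless, since any $O(d)$ bound is absorbed into the $\Omega$.
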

\begin{proof}
Let \(\mathcal N\) be drawn uniformly without replacement from \(\mathcal H\).
Let \(\mathcal R\) be the range family of Euclidean balls in \(\mathbb R^d\),
and let \(\mathcal F:=\{R\cap\mathcal H:R\in\mathcal R\}\) be the induced
family of ball-induced subsets of the population. 
For \(S\in\mathcal F\), define
\(p(S):=\frac{|S|}{n_H}\) and
\(\hat p_{\mathcal N}(S):=\frac{|S\cap\mathcal N|}{n}\).

Since \(\mathcal F\) is induced by Euclidean balls and
\(\VCdim(\mathcal R)=d+1\), the same VC bound applies to \(\mathcal F\).
Thus, by the finite-population VC bound \Cref{lem:witness-uniform-convergence}, for
\[
n=\Omega\!\left(
\frac{1}{\varepsilon^2}
\left(d\log\frac{1}{\varepsilon}+\log\frac{1}{\delta}\right)
\right),
\]
we have, with probability at least \(1-\delta\),
\[
\sup_{S\in\mathcal F}|\hat p_{\mathcal N}(S)-p(S)|\le \varepsilon .
\]
On this event, if \(S\in\mathcal F\) satisfies \(p(S)\ge \ell/k+\varepsilon\),
then \(\hat p_{\mathcal N}(S)\ge \ell/k\). Fix an unselected candidate
\(c\in\mathcal N\setminus X^*\). Since \(X^*\) is an mPJR+ slate with
respect to \(\mathcal N\), applying mPJR+ to \(S\cap\mathcal N\) and \(c\)
gives at least \(\ell\) members \(x\in X^*\) such that
\[
\min_{i\in S\cap\mathcal N}\Delta(i,x)
\le
\max_{i\in S\cap\mathcal N}\Delta(i,c).
\]
Because \(S\cap\mathcal N\subseteq S\), we have
\(\min_{i\in S}\Delta(i,x)\le
\min_{i\in S\cap\mathcal N}\Delta(i,x)\) and
\(\max_{i\in S\cap\mathcal N}\Delta(i,c)\le
\max_{i\in S}\Delta(i,c)\) for each such \(x\). Hence
\[
\min_{i\in S}\Delta(i,x)
\le
\max_{i\in S}\Delta(i,c),
\]
proving the claim.
\end{proof}

\Cref{prop:cluster} gives population transfer for ball-induced groups.
We now extend this guarantee to arbitrary cohesive groups, at the cost of a
factor-\(3\) approximation in distance. This yields an alternative to
\Cref{thm:sampled-population} whose sample complexity depends on
\(d\), \(\varepsilon\), and \(\delta\), but is independent of \(k\) and \(n_H\).

\begin{theorem}
\label{thm:arbitrary-cohesive-transfer}Let \(\varepsilon\in(0,1/e]\) and \(\delta\in(0,1)\). Let \(\mathcal H\subset\mathbb R^d\) be a population of size
\(n_H\), and let \(k\) be the slate size. Let \(\varepsilon,\delta>0\), and let
\(\mathcal N\subseteq\mathcal H\) be a uniformly random \(n\)-element subset,
where
\[
n
=
\Omega\!\left(
\frac{1}{\varepsilon^2}
\left(
d\log\frac{1}{\varepsilon}
+
\log\frac{1}{\delta}
\right)
\right).
\]
Then, with probability at least \(1-\delta\), every mPJR+ slate
\(X^*\) with respect to \(\mathcal N\) satisfies the following: for every
\(\ell\in[k]\), every unselected candidate \(c\in\mathcal N\setminus X^*\), every
\(S\subseteq\mathcal H\), and every \(\tau>0\) such that
\(|S|\ge(\ell/k+\varepsilon)n_H\) and
\(\max_{i\in S}\Delta(i,c)\le \tau\),
\[
\left|
\left\{
x\in X^* \mid \exists i\in S,\ 
\Delta(i,x)\le 3\tau
\right\}
\right|
\ge \ell .
\]

\end{theorem}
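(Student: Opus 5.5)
The plan is to reduce to the ball-induced case of \Cref{prop:cluster}, using the same uniform-convergence event. We condition on the event from \Cref{lem:witness-uniform-convergence}, applied to the set system of ball-induced subsets of \(\mathcal H\), which has VC dimension at most \(d+1\). On this event, every ball-induced \(B\subseteq\mathcal H\) with \(|B|\ge(\ell/k+\varepsilon)n_H\) satisfies \(|B\cap\mathcal N|\ge \ell n/k\). This event has probability at least \(1-\delta\) for the stated \(n\), and it depends on neither \(k\) nor \(n_H\). Everything afterwards is deterministic on this event.

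Next we fix \(\ell\), an unselected \(c\in\mathcal N\setminus X^*\), a group \(S\), and \(\tau\) with \(|S|\ge(\ell/k+\varepsilon)n_H\) and \(\max_{i\in S}\Delta(i,c)\le\tau\). The key construction is the ball \(B:=\{i\in\mathcal H:\Delta(i,c)\le\tau\}\), which is centered at the candidate \(c\) itself (a point of \(\mathcal N\subseteq\mathcal H\subset\mathbb R^d\)). Since \(S\subseteq B\), the ball is large enough, so \(T:=B\cap\mathcal N\) is \(\ell\)-large in the sample. Every member of \(T\) lies within \(\tau\) of \(c\), and \(c\) is unselected. Hence mPJR+ of \(X^*\) with respect to \(\mathcal N\), applied to \(T\) and \(c\), gives at least \(\ell\) members \(x\in X^*\) with \(\min_{t\in T}\Delta(t,x)\le\max_{t\in T}\Delta(t,c)\le\tau\). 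For each such \(x\) we pick \(t_x\in T\) with \(\Delta(t_x,x)\le\tau\).

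It remains to transfer closeness from \(T\) to \(S\). We take any \(i\in S\) (nonempty, since \(|S|>0\)). The triangle inequality gives \(\Delta(i,x)\le\Delta(i,c)+\Delta(c,t_x)+\Delta(t_x,x)\le\tau+\tau+\tau=3\tau\). So each of the at least \(\ell\) selected \(x\) is within \(3\tau\) of some member of \(S\), which is exactly the claim. (Centering at \(c\) actually gives \(2\tau\) via \(\Delta(i,x)\le \Delta(i,c)+\Delta(c,x)\) only if \(\Delta(c,x)\) were controlled, which it is not directly; the three-term chain is what we use.)

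The main obstacle is making sure the uniform convergence applies to the ball \(B\), which is constructed after the sample is drawn and is centered at a sampled point. This is precisely why we use the VC-based uniform bound over all balls rather than a union bound over fixed groups. We also need the balls to be closed balls with arbitrary real centers, so that the VC bound \(d+1\) covers them. A second subtlety is that \(S\) itself need not be ball-induced, which is why we pass to the enclosing ball \(B\) and pay the factor \(3\) rather than invoking \Cref{prop:cluster} directly.
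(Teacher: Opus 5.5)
Your proposal is correct and takes essentially the paper's route: enclose \(S\) in the ball \(B(c,\tau)\cap\mathcal H\), use uniform convergence over ball-induced sets to make its sampled part \(\ell\)-large, apply mPJR+ with the unselected candidate \(c\), and finish with the same three-term triangle inequality to get \(3\tau\). The only difference is cosmetic: the paper invokes \Cref{prop:cluster} as a black box and picks its witness \(z\) in the full ball, whereas you apply mPJR+ directly to \(B\cap\mathcal N\) and pick the witness \(t_x\) in the sample.
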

\begin{proof}
By the assumed lower bound on \(|\mathcal N|\), \Cref{prop:cluster}
applies to the sampled population \(\mathcal N\). Hence, with probability at
least \(1-\delta\), every ball-induced subset \(S'\subseteq\mathcal H\) with
\(|S'|/n_H\ge \ell/k+\varepsilon\), and every unselected candidate
\(c\in\mathcal N\setminus X^*\), has at least \(\ell\) members \(x\in X^*\)
such that
\[
\min_{z\in S'}\Delta(z,x)
\le
\max_{z\in S'}\Delta(z,c).
\]
We condition on this event for the remainder of the proof.

Now consider a candidate \(c\), a group \(S\), and a value \(\tau\) as in the
statement of the theorem. Let \(B(c,\tau)\) be the Euclidean ball of radius \(\tau\)
centered at \(c\), and let \(S_B:=B(c,\tau)\cap\mathcal H\). Since
\(S\subseteq S_B\) and \(|S|\ge(\ell/k+\varepsilon)n_H\), we have
\(\frac{|S_B|}{n_H}\ge\frac{\ell}{k}+\varepsilon\). Moreover,
\(S_B\) is ball-induced. Therefore, by \Cref{prop:cluster}, there are at least
\(\ell\) members \(x\in X^*\) such that
\[
\min_{z\in S_B}\Delta(z,x)
\le
\max_{z\in S_B}\Delta(z,c)
\le \tau.
\]
For each such \(x\), choose \(z\in S_B\) with \(\Delta(z,x)\le \tau\), and fix
any \(i\in S\). Since both \(z\) and \(i\) lie in \(B(c,\tau)\), the triangle
inequality gives
\[
\Delta(i,x)
\le
\Delta(i,c)+\Delta(c,z)+\Delta(z,x)
\le 3\tau.
\]
Thus at least \(\ell\) members of \(X^*\) are within the claimed distance of
some member of \(S\).
\end{proof}

\paragraph{Stratified sampling.}
We next extend \Cref{thm:sampled-population} to stratified sampling, as used in
our experiments in \Cref{sec:exps}. The proof is a black-box conditioning
argument: any high-probability guarantee for a uniformly random sample can be
converted into the corresponding guarantee for a uniformly random stratified
sample, at the cost of a polynomial factor in the failure probability.

Suppose that, for some constant $t\geq 1$, we have a partition
$(P_1,\ldots,P_t)$ of $\mathcal H$ such that $|P_i|=r_i n_H$, where
$n_H:=|\mathcal H|$, $\sum_i r_i=1$, and
\(\rho:=\min_i r_i>0.
\)

\begin{definition}
We say that a subset $\mathcal N$ of $\mathcal H$ is stratified if
\[
|\mathcal N\cap P_i|
\in
\{\lceil r_i|\mathcal N|\rceil,\lfloor r_i|\mathcal N|\rfloor\}
\quad\text{for all }1\leq i\leq t.
\]
\end{definition}
Using \Cref{prop:reward-metric-equivalence} together with the monotone
relationship between angular and Euclidean distance on the unit sphere,
the uniform-sampling guarantee of \Cref{thm:sampled-population} implies
a \(4\)-approximate mPJR+ population-transfer guarantee in the induced
metric space. We now show that the same guarantee survives conditioning
on a stratified sample.
\begin{theorem}
Let $\varepsilon \in (0, 1/e]$ and $\delta \in (0,1)$. Let \(\mathcal N\) be sampled
uniformly at random without replacement among stratified
samples from \(\mathcal H\) of size \(n\), with
\[
n=\Omega\!\left(
\frac{1}{\varepsilon^2}
\left[
\log\frac{kn_H}{\delta}
+t\log\!\left(
\frac{t}{\varepsilon^2}\log\frac{kn_H}{\delta}
\right)
\right]
\right)\footnote{The hidden constant in the $\Omega(\cdot)$ notation may depend on $\rho$}.
\]
Then, with probability at least \(1-\delta\), every mPJR+ slate
\(X^*\) with respect to \(\mathcal N\) satisfies the following: for every
\(\ell\in[k]\), every unselected candidate
\(c\in\mathcal N\setminus X^*\), and every group
\(S\subseteq\mathcal H\) with
\(|S|\ge(\ell/k+\varepsilon)n_H\), at least \(\ell\) members
\(x\in X^*\) satisfy
\[
\min_{i\in S}\Delta(i,x)
\le 4\max_{i\in S}\Delta(i,c).
\]
\end{theorem}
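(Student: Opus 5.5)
The plan is a conditioning argument: a uniformly random stratified sample is a uniformly random $n$-subset conditioned on being stratified, so the failure probability grows by at most the inverse probability of that event. Let $E$ be the bad event for an $n$-subset $\mathcal N$: some slate that is mPJR+ w.r.t.\ $\mathcal N$ violates the displayed $4$-approximate guarantee. The proof of \Cref{thm:sampled-population}, translated to the metric setting via \Cref{prop:reward-metric-equivalence} and the monotone map $\Delta=2\sin(\angle/2)$, shows $E$ is contained in the concentration-failure event
\[
F:=\{\exists j,\ell: |T_j^\ell|<\ell n/k\}, \qquad \Pr_{\mathrm{unif}}(F)\le 2kn_H e^{-2n\varepsilon^2}.
\]
The factor $4$ needs a short check. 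The angular argument yields angle $4\beta$, and on the sphere the chord of $4\beta$ is at most $4$ times the chord of $\beta$, since $\sin(2\beta)\le 4\sin(\beta/2)$ by subadditivity of $\sin$ on $[0,\pi]$. If angles exceed the trivial regime, the bound holds because chords are at most $2$. Alternatively, I would redo the triangle-inequality chain directly in Euclidean distance, which gives $4\tau$ immediately (the same $\beta+\beta+2\beta$ chain). I expect this direct route to be cleaner, with $S_j^\ell$ defined as Euclidean nearest neighbours.

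Let $\mathcal U$ be the uniform distribution on $n$-subsets and $\mathrm{Strat}$ the event that the sample is stratified. Since the stratified distribution is $\mathcal U$ conditioned on $\mathrm{Strat}$, we get
\[
\Pr_{\mathrm{strat}}(E)\le \Pr_{\mathcal U}(F)/\Pr_{\mathcal U}(\mathrm{Strat}).
\]
The key remaining step, and the main obstacle, is a lower bound on $\Pr_{\mathcal U}(\mathrm{Strat})$. Under uniform sampling the counts $(|\mathcal N\cap P_i|)_i$ are multivariate hypergeometric. The stratified vectors are those with each count equal to the floor or ceiling of $r_in$ and summing to $n$, and at least one such vector exists. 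I would lower bound the probability of a single such vector as
\[
\prod_i\binom{r_in_H}{m_i}\Big/\binom{n_H}{n}.
\]
Using Stirling's formula, or the local CLT for the multivariate hypergeometric with covariance of order $n\,r_i(1-r_i)$, the probability mass at the mode is $\Omega_\rho\big((c n)^{-(t-1)/2}\big)\ge (Cn)^{-t}$. If $n$ is large relative to $n_H$, finite-population correction only increases concentration. A cruder fallback is to compare with the multinomial, where the mode has probability at least $(n+1)^{-t}$ by the method-of-types bound, and to control the hypergeometric-to-multinomial ratio with constants depending on $\rho$.

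It then suffices that $2kn_He^{-2n\varepsilon^2}(Cn)^{t}\le\delta$, that is,
\[
n\ge \tfrac{1}{2\varepsilon^2}\big(\log\tfrac{2kn_H}{\delta}+t\log(Cn)\big).
\]
This is an implicit inequality in $n$, which I would solve with the same concave-fixed-point trick as the Claim in \Cref{lem:witness-uniform-convergence}. Plugging in $n_0=\Theta\big(\varepsilon^{-2}\log(kn_H/\delta)\big)$ inside the logarithm gives exactly the stated $\Omega\big(\varepsilon^{-2}[\log\frac{kn_H}{\delta}+t\log(\frac{t}{\varepsilon^2}\log\frac{kn_H}{\delta})]\big)$, with $\rho$-dependence absorbed into the constant.
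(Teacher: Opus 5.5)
Your proposal is correct and follows essentially the paper's route. You treat the stratified sample as a uniform sample conditioned on the event $\mathsf A$ of being stratified, and bound the bad event through \Cref{thm:sampled-population}; the paper invokes that theorem as a black box at failure level $\delta'=\delta\Pr(\mathsf A)$, while you open it to use the concentration-failure event directly. As in the paper, you then lower-bound $\Pr(\mathsf A)$ by the hypergeometric mass of a single stratified count vector via Stirling, obtaining order $n^{-(t-1)/2}$, and bootstrap the implicit condition on $n$.

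Two small differences in care. The paper makes your remark about large $n$ rigorous by treating $\min\{n,n_H-n\}<2/\rho$ separately: it uses a crude $(\rho/n)^n$ bound when $n$ is small, and the complement symmetry $c_i\mapsto n_{H,i}-c_i$ when $n$ is close to $n_H$. In the other direction, your explicit check that the angular bound $4\beta$ yields the Euclidean factor $4$ fills in a step the paper only asserts.
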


\begin{proof}
We will prove the result by using \Cref{thm:sampled-population} in a black-box
fashion.

Let $U\subseteq\mathcal H$ be a uniformly random sample of size $n$. 
Let \(\mathsf B\) be the event that there exists an mPJR+ slate for \(U\)
that fails the stated \(4\)-approximate population-transfer guarantee.
 Let
$\mathsf A$ be the event that $U$ is stratified with respect to the partition
$(P_1,\ldots,P_t)$.
Set $n_{H,i}:=|P_i|$, so that $\sum_i n_{H,i}=n_H$
and $r_i=n_{H,i}/n_H$. For a draw of size $n$,
write feasible count vectors as
\[
\mathcal S
=
\Bigl\{
\mathbf c=(c_1,\dots,c_t)\in\mathbb Z_{\ge0}^t:
\sum_i c_i=n,\ 0\le c_i\le n_{H,i}
\Bigr\}.
\]

Let $\mathcal C\subseteq\mathcal S$ be the set of \emph{stratified} vectors:
$c_i\in\{\lfloor nr_i\rfloor,\lceil nr_i\rceil\}$ for all $i$.

The multivariate hypergeometric (sampling without replacement) probability of a count vector
\(\mathbf c\in\mathcal S\) is

\[
H(\mathbf c)
=
\frac{\prod_{i=1}^t\binom{n_{H,i}}{c_i}}
{\binom{n_H}{n}}.
\]
We claim that
\[
\Pr(\mathsf A)
\ge H(\mathbf c^\star)
\ge \Omega\!\left(n^{-t}\right)
\]
for some stratified vector $\mathbf c^\star$, provided $n$ is large enough.

\medskip

\begin{claim}\label{clm:hyper-lb-explicit}
Fix $t\ge2$ and $\rho\in(0,1)$ and assume
$r_i:=n_{H,i}/n_H\ge\rho$ for all $i$.
Let $\mathbf c=(c_1,\dots,c_t)$ satisfy\(
c_i\in\{\lfloor nr_i\rfloor,\lceil nr_i\rceil\}
\quad\text{for all }i,\)
where $n=\sum_i c_i\in\{1,\dots,n_H-1\}$.
Then, for all $n$ with $\min\{n,n_H-n\}\ge2/\rho$,
\[
H(\mathbf c)
\ge
Z'_t\,n^{-\frac{t-1}{2}},
\qquad
Z'_t
:=
(2\pi)^{\frac{1-t}{2}}
\rho^{t/2}
\exp\!\left(-\frac{t}{6}-\frac1{12}\right).
\]
For $\min\{n,n_H-n\}<2/\rho$, there is a constant
$Z''_{t,\rho}>0$, depending only on $t$ and $\rho$,
such that $H(\mathbf c)\ge Z''_{t,\rho}$.
\end{claim}

\begin{proof}
Recall
\[
H(\mathbf c)
=
\frac{\prod_{i=1}^t\binom{n_{H,i}}{c_i}}{\binom{n_H}{n}}
=
\frac{\prod_{i=1}^t\frac{n_{H,i}!}{c_i!(n_{H,i}-c_i)!}}
{\frac{n_H!}{n!(n_H-n)!}}
=
\frac{\bigl(\prod_i n_{H,i}!\bigr)n!(n_H-n)!}
{n_H!\bigl(\prod_i c_i!\bigr)
\bigl(\prod_i(n_{H,i}-c_i)!\bigr)}.
\tag{$\dagger$}
\]
We use Robbins' version of Stirling's bounds~\citep{robbins1955stirling}:
for every integer $q\ge1$,
\[
\sqrt{2\pi q}\left(\frac qe\right)^q
\le
q!
\le
\sqrt{2\pi q}\left(\frac qe\right)^q e^{1/(12q)}.
\]
Apply the \emph{lower} bound to $n_{H,i}!,n!,(n_H-n)!$
and the \emph{upper} bound to $n_H!,c_i!,(n_{H,i}-c_i)!$
in~$(\dagger)$. Plugging in and cancelling the
common $(q/e)^q$-terms, we obtain
\[
H(\mathbf c)
\ge
(2\pi)^{\frac{1-t}{2}}
\frac{\sqrt n\sqrt{n_H-n}\prod_i\sqrt{n_{H,i}}}
{\sqrt{n_H}\prod_i\sqrt{c_i}\prod_i\sqrt{n_{H,i}-c_i}}
\exp(-E_t),
\tag{$\ast$}
\]
where the error exponent is
\[
E_t
:=
\frac1{12n_H}
+
\sum_{i=1}^t
\left(
\frac1{12c_i}
+
\frac1{12(n_{H,i}-c_i)}
\right).
\]

We first lower-bound
\[
R
:=
\frac{\sqrt n\sqrt{n_H-n}\prod_i\sqrt{n_{H,i}}}
{\sqrt{n_H}\prod_i\sqrt{c_i}\prod_i\sqrt{n_{H,i}-c_i}}.
\]
Use the assumptions $r_i=n_{H,i}/n_H\ge\rho$,
$\sum_i n_{H,i}=n_H$, and $\sum_i c_i=n$.
For the numerator,
\[
\prod_i\sqrt{n_{H,i}}
=
\sqrt{\prod_i n_{H,i}}
\ge
\sqrt{(n_H\rho)^t}
=
n_H^{t/2}\rho^{t/2}.
\]
For the denominators, we use only crude upper bounds.
Because $c_i\le n$ and $\sum_i c_i=n$, we have
$\prod_i\sqrt{c_i}\le n^{t/2}$.
Because $0\le n_{H,i}-c_i$ and
$\sum_i(n_{H,i}-c_i)=n_H-n$, each
$n_{H,i}-c_i\le n_H-n$, so
$\prod_i\sqrt{n_{H,i}-c_i}\le(n_H-n)^{t/2}$.
Thus
\[
R
\ge
\frac{\sqrt n\sqrt{n_H-n}\,n_H^{t/2}\rho^{t/2}}
{\sqrt{n_H}\,n^{t/2}(n_H-n)^{t/2}}
=
\rho^{t/2}
\frac{n_H^{\frac{t-1}{2}}}
{n^{\frac{t-1}{2}}(n_H-n)^{\frac{t-1}{2}}}.
\]
Since $n_H-n\le n_H$, we have
$n_H^{(t-1)/2}/(n_H-n)^{(t-1)/2}\ge1$,
and hence $R\ge\rho^{t/2}n^{-(t-1)/2}$.

We now impose $\min\{n,n_H-n\}\ge2/\rho$ as in the statement.
For a stratified vector
$c_i\in\{\lfloor nr_i\rfloor,\lceil nr_i\rceil\}$
with $r_i\ge\rho$, we have
\[
c_i\ge nr_i-1\ge n\rho-1\ge\tfrac12n\rho,
\]
because $n\rho\ge2$. Similarly,
\[
n_{H,i}-c_i
=
n_H r_i-c_i
\ge
n_H r_i-(nr_i+1)
=
(n_H-n)r_i-1
\ge
\tfrac12(n_H-n)\rho.
\]
Therefore,
\[
\frac1{12c_i}\le\frac1{6n\rho},
\qquad
\frac1{12(n_{H,i}-c_i)}
\le\frac1{6(n_H-n)\rho},
\]
and hence
\[
\sum_{i=1}^t
\left(
\frac1{12c_i}
+
\frac1{12(n_{H,i}-c_i)}
\right)
\le
\frac{t}{6n\rho}
+
\frac{t}{6(n_H-n)\rho}
\le
\frac{t}{3\min\{n,n_H-n\}\rho}
\le
\frac t6.
\]
Finally, since $n_H\ge n\ge2/\rho$, we also have
$1/(12n_H)\le1/12$.
Altogether, $E_t\le1/12+t/6$, and therefore
\[
\exp(-E_t)
\ge
\exp\!\left(-\frac t6-\frac1{12}\right).
\]

Substituting these bounds into~$(\ast)$ yields
\[
H(\mathbf c)
\ge
(2\pi)^{\frac{1-t}{2}}
\rho^{t/2}
\exp\!\left(-\frac t6-\frac1{12}\right)
n^{-\frac{t-1}{2}},
\]
which is the claimed bound.
It remains to consider $\min\{n,n_H-n\}<2/\rho$.
Suppose first that $n<2/\rho$.
Since $c_i\le n$ and $n_{H,i}\ge\rho n_H$, the standard bounds
$\binom{a}{b}\ge(a/b)^b$ and $\binom{n_H}{n}\le n_H^n$ give
\[
H(\mathbf c)
\ge
\frac{\prod_{i:c_i>0}(n_{H,i}/c_i)^{c_i}}{n_H^n}
\ge
\left(\frac{\rho}{n}\right)^n,
\]
which is bounded below by a positive constant depending only on $\rho$.
If instead $n_H-n<2/\rho$, apply the same argument to the complement count
vector $(n_{H,i}-c_i)_{i=1}^t$, which is stratified for a draw of size
$n_H-n$ and has the same hypergeometric probability.
This gives the claimed constant lower bound in the remaining regime.

\end{proof}

Let $Z_t:=\min\{Z'_t,Z''_{t,\rho}\}$. We thus get the direct lower bound
\[
\Pr_{\mathrm{WOR}}(\mathsf A)
=
\sum_{\mathbf c\in\mathcal C}H(\mathbf c)
\ge
H(\mathbf c^\star)
\ge
Z_t n^{-\frac{t-1}{2}}.
\tag{$\dagger$}
\]

\medskip
Let $\delta'>0$ be a target failure probability for
\Cref{thm:sampled-population} under sampling without replacement. Applying
\Cref{thm:sampled-population} with failure probability $\delta'$, if
\[
n=\Omega\!\left(
\frac1{\varepsilon^2}\log\frac{kn_H}{\delta'}
\right),
\]
then a uniformly random sample $U$ of size $n$ is bad with probability at most
$\Pr_{\mathrm{WOR}}(\mathsf B)\le\delta'$.

Set $\delta'=\delta\Pr(\mathsf A)$ and use~$(\dagger)$ to bound
$\Pr(\mathsf A)$ from below. This yields
\[
n=\Omega\!\left(
\frac1{\varepsilon^2}
\left(
\log\frac{kn_H}{\delta}+t\log n
\right)
\right).
\]
Equivalently, by a standard bootstrapping argument, this is implied by
\[
n=\Omega\!\left(
\frac1{\varepsilon^2}
\left[
\log\frac{kn_H}{\delta}
+
t\log\!\left(
\frac{t}{\varepsilon^2}\log\frac{kn_H}{\delta}
\right)
\right]
\right).
\]
With this choice, $\Pr(\mathsf B)\le\delta'$. Therefore,
\[
\Pr(\mathsf B\mid\mathsf A)
=
\frac{\Pr(\mathsf B\cap\mathsf A)}{\Pr(\mathsf A)}
\le
\frac{\Pr(\mathsf B)}{\Pr(\mathsf A)}
\le
\frac{\delta'}{\Pr(\mathsf A)}
=
\delta.
\]
Thus, conditional on the sample being stratified
(equivalently, sampling uniformly without replacement among stratified
samples), every mPJR+ slate for the sample satisfies the stated $4$-approximate
mPJR+ guarantee for the full population with probability at least $1-\delta$.
\end{proof}

   \section{Experimental Details}

\subsection{Remesh Dataset Details} \label{app:remesh-dataset}

\subsubsection{Topic Generation Prompt}
\label{app:topic-gen-prompt}

The following prompt was given to Claude Opus 4.6\footnote{Claude Opus 4.6~\citep{anthropic2024claude} is released under a commercial license by Anthropic.} to generate ten diverse topics for evaluation:

\begin{promptbox}
Here are four questions that a group of participants was asked about:

What are your impressions of the campus protests?

What are your impressions of how campus administrators handled the protests?

What are your thoughts on the way university campus administrators should approach the issue of Israel/Gaza demonstrations?

What should guide university campus administrators handling of protests?

Given a person's answer to these questions, what are ten other diverse topics would be predictable*?

Return only a JSONL, where each row is formatted as \{``topic": topic\}. The topics you generated will be given to an LLM to generate a variety of questions on that topic, so give enough information about what each topic represents in the JSONL.
\end{promptbox}

\subsubsection{Prompt Generation Prompt}
\label{app:prompt-gen-prompt}

For each of the ten generated topics, the following prompt template was given to Claude Opus 4.6 to generate 100 diverse opinion questions:

\begin{promptbox}
Generate 100 diverse and open-ended questions asking people for their opinions on the following topic: \{topic\}. Choose a mix of contentious questions that people are likely divided on and non-contentious questions that most people would be in agreement on. You should make each question stand-alone, so that a person can understand what you are referring to without additional context. Your output format should be a JSONL, where each line is formatted as \{``prompt\_id": prompt\_id, ``prompt": prompt\}
\end{promptbox}

\subsubsection{Generated Topics}
\label{app:generated-topics}

The following ten topics were generated by Claude Opus 4.6:

\begin{enumerate}
    \item Views on free speech and the First Amendment, including where the line should be drawn between protected expression and harassment or disruption.
    \item Opinions on the Israel--Palestine conflict, including views on Israeli military operations in Gaza and Palestinian statehood.
    \item Attitudes toward higher education institutions, including trust in university leadership, the purpose of universities, and whether colleges are too liberal or too conservative.
    \item Views on student activism and civil disobedience throughout history, including whether disruptive protest is an effective or justified means of social change.
    \item Opinions on police use of force in crowd control and protest situations, including whether law enforcement intervention on campuses is appropriate.
    \item Attitudes toward U.S. foreign policy and military aid, particularly regarding funding and arms sales to allied nations in conflict zones.
    \item Views on antisemitism and Islamophobia, including how to define hate speech, whether certain political criticisms cross into bigotry, and how institutions should address discrimination.
    \item Opinions on generational differences in political engagement, including whether younger generations are more or less informed, idealistic, or effective in their activism compared to older generations.
    \item Attitudes toward media coverage and bias, including whether mainstream and social media fairly represent protest movements, campus issues, and the Israel--Gaza conflict.
    \item Views on corporate and institutional divestment as a political tool, including whether universities, pension funds, and businesses should divest from industries or nations involved in controversial practices.
\end{enumerate}

\subsubsection{Simulated Agent Prompt}
\label{app:agent-prompt}

Each simulated agent is implemented via a roleplay prompt to Llama-3.1-8B-Instruct~\citep{grattafiori2024llama}\footnote{Llama-3.1-8B-Instruct~\citep{grattafiori2024llama} is released under a commercial license from Meta.} with the following system and user prompts.

\paragraph{System prompt.}
\begin{promptbox}
You are roleplaying as the following person:

\textit{\{agent description\}}

Generate a genuine response that reflects this person's personal perspective. Do not hedge or speak in generalities --- respond as this specific person would.
\end{promptbox}

The participant description is constructed from the participant's demographic attributes in the Remesh survey (e.g., gender, age, political affiliation, religious affiliation, education level, urbanicity, household income).

\paragraph{User prompt.}
The user prompt is composed of three optional sections, joined by blank lines:

\begin{enumerate}
\item \textbf{Previous responses} (if available):
\begin{promptbox}
Here are questions you answered previously:

Question: \textit{\{question text\}} \\
Response: \textit{\{participant's response\}}

\textit{[...repeated for each prior question...]}
\end{promptbox}

\item \textbf{Voting context} (if available):
\begin{promptbox}
Here are other people's statements that you agreed with:

Question: \textit{\{question text\}} \\
Statement: \textit{\{statement text\}}

\textit{[...repeated for each agreed statement...]}

Here are other people's statements that you disagreed with:

Question: \textit{\{question text\}} \\
Statement: \textit{\{statement text\}}

\textit{[...repeated for each disagreed statement...]}
\end{promptbox}

\item \textbf{Target question}:
\begin{promptbox}
Now answer this new question in 2-3 sentences.

Question: \textit{\{target question text\}} \\
Response:
\end{promptbox}
\end{enumerate}

\subsubsection{Viewpoint Summary Prompt}
\label{app:viewpoint-summary-prompt}

For each participant, we summarize their simulated agent's responses on the training prompts into a concise viewpoint summary using Claude Opus 4.6. Responses are grouped by topic, and each topic is summarized separately using the following prompt:

\paragraph{System prompt.}
\begin{promptbox}
You are a helpful assistant that summarizes a person's viewpoints. Given a set of questions and the person's responses on a particular topic, write a concise summary ($<=3$ sentences) capturing their key views, attitudes, and reasoning on this topic. Do not list individual responses --- synthesize them into a coherent summary.
\end{promptbox}

\paragraph{User prompt.}
\begin{promptbox}
Topic: \textit{\{topic\}}

Here are the person's responses:

Q: \textit{\{question text\}} \\
A: \textit{\{viewpoint text\}}

\textit{[...repeated for each response on this topic...]}

Summarize this person's views on this topic in three sentences or less.
\end{promptbox}

The per-topic summaries are then concatenated into a single viewpoint summary per participant, formatted as \texttt{**<topic>**: <summary>} for each topic. An example viewpoint summary for one simulated participant is shown below.

\paragraph{Example viewpoint summary.}
\begin{promptbox}
\textbf{Views on free speech and the First Amendment:} This person supports broad free speech protections in principle but draws the line at speech that incites violence, constitutes harassment, or promotes hateful extremist ideologies like white supremacy—favoring practical measures like buffer zones, anti-mask laws, and free speech zones to balance expression with safety and order. They are skeptical of cancel culture, speech codes, and NDAs that silence misconduct reporting, viewing these as threats to open discourse, while also opposing corporate free speech rights and believing the digital age requires updated legal frameworks. Their views reflect a conservative Christian perspective that values individual liberty and democratic institutions but accepts significant exceptions when speech causes tangible harm, disrupts communities, or undermines democratic values.

\textbf{Opinions on the Israel-Palestine conflict:} This person admits to having limited knowledge about the Israel-Palestine conflict and approaches it primarily through their Christian faith, emphasizing prayer, peace, and treating all people with dignity. They hold generally balanced views, believing both sides share responsibility though leaning slightly toward seeing Hamas as more aggressive, while expressing sympathy for Palestinian civilians and opposing the destruction of civilian infrastructure. They favor a two-state solution and peaceful dialogue, are skeptical of Netanyahu's commitment to peace, oppose embassy moves to Jerusalem and the BDS movement, and are cautiously optimistic that future generations may achieve peace.

\textbf{Attitudes toward higher education institutions:} This person is skeptical of university leadership, viewing administrators as more focused on money, image, and progressive politics than on delivering quality education, and believes campuses have a liberal bias with DEI initiatives and speech codes that have gone too far. They value practical, affordable education—supporting community colleges and trade alternatives—while questioning whether four-year degrees are worth the cost, and they strongly favor free speech, professor neutrality, and accountability over tenure protections and coddling measures like safe spaces. Overall, they hold a conservative-leaning, populist perspective that universities should return to their core educational mission, be more accessible to working-class and rural students, and stop prioritizing institutional interests and ideological agendas over student learning.

\textbf{Views on student activism and civil disobedience throughout history:} This person generally favors working within the system and peaceful, non-disruptive protest, expressing frustration when activism inconveniences others or involves property destruction, while believing protesters should face legal consequences for lawbreaking. However, they hold somewhat contradictory views—admiring historical civil disobedience movements like the civil rights sit-ins and freedom riders as courageous and justified, while simultaneously maintaining that law-breaking is never morally justified and that students today should use less disruptive methods. Overall, they value order and respect for authority but acknowledge that history often vindicates protesters, and they draw a distinction between morally-motivated activism targeting genuine injustice versus ideologically-driven disruption, while expressing skepticism about young people's experience and the effectiveness of confrontational tactics.

\textbf{Opinions on police use of force in crowd control and protest situations:} This person supports a graduated, de-escalation-first approach to police use of force at protests, believing force should only be used when situations turn violent or destructive, and strongly opposes heavy-handed tactics like tear gas, pepper spray on non-violent protesters, militarization, and surveillance technologies like drones and facial recognition. They believe universities can call in outside law enforcement as a last resort but oppose external pressure (like donors) driving that decision, support students' right to protest without academic penalty, and think clear policies, transparency, and independent oversight are essential for accountability. While generally moderate and pragmatic—accepting protest permits, predetermined routes, and confiscation of potential weapons as reasonable—they consistently prioritize free speech protections, fairness regardless of political ideology, and holding individual officers personally liable for excessive force.

\textbf{Attitudes toward U.S. foreign policy and military aid:} This person favors a cautious, values-driven approach to U.S. military aid, believing it should prioritize democratic allies who share American values while requiring greater transparency, congressional oversight, and public accountability. Rooted in Christian faith and conservative principles, they are skeptical that military aid promotes stability—viewing it as often escalating conflicts and primarily benefiting defense industry elites rather than ordinary Americans—and strongly prefer diplomatic solutions over military ones. While they support maintaining key alliances (Israel, Taiwan, Ukraine) and military deterrence through bases and defense systems, they want stricter conditions on aid, including caps on spending, humanitarian considerations, and recipient accountability.

\textbf{Views on antisemitism and Islamophobia:} This person takes a moderate, faith-informed approach to antisemitism and Islamophobia, believing both are serious forms of hatred that deserve equal attention and should be addressed through education, dialogue, and mutual respect rather than censorship. They consistently distinguish between legitimate political criticism (of Israel or Islamic practices) and genuine bigotry, while acknowledging the "fine lines" and "gray areas" involved, and they favor contextualization over censorship in most cases. They express some skepticism about the effectiveness of institutional solutions—whether interfaith programs, corporate statements, or AI moderation—preferring personal relationships and individual accountability as paths toward reducing prejudice, while supporting enhanced hate crime penalties and platform responsibility for removing clear hate speech.

\textbf{Opinions on generational differences in political engagement:} This person generally believes younger generations are too idealistic, impatient, and lacking in real-world experience to be as politically effective as older generations, whose consistent voting, strategic patience, and willingness to compromise they view as more impactful. They see young people as overly influenced by social media and peer pressure rather than critical thinking, and expect most will naturally become more conservative with age and life experience. However, they maintain a degree of fairness—acknowledging young people's passion deserves respect, defending equal voting rights regardless of age, and recognizing that older generations can sometimes be dismissive or stuck in the past.

\textbf{Attitudes toward media coverage and bias:} This person believes mainstream and social media both fail to provide balanced coverage, with mainstream outlets sensationalizing protests and prioritizing spectacle over substance, while social media amplifies misinformation and creates echo chambers that leave people less informed. They view media bias as largely inevitable—driven by corporate ownership, political funding, and the 24-hour news cycle—but strongly advocate for transparency, clear labeling of opinion versus news, and disclosure of affiliations and biases. They generally trust traditional reputable news sources over social media or independent media, favor hiring regional journalists and teaching media literacy, and believe coverage of the Israel-Gaza conflict focuses too heavily on politics and extreme voices while neglecting humanitarian impacts and moderate perspectives.

\textbf{Views on corporate and institutional divestment as a political tool:} This person is generally skeptical of divestment as a political tool, viewing it as largely symbolic and ineffective, and believes institutions like universities and pension funds should prioritize their fiduciary responsibilities—maximizing returns for students and retirees—over making political statements. However, they hold nuanced exceptions: they support divestment from industries that directly contradict an institution's mission (like hospitals divesting from tobacco or municipalities from private prisons), from authoritarian regimes on moral grounds, and they value transparency in investment decisions. Their overall stance reflects a pragmatic, conservative-leaning perspective shaped by Christian values and rural life, emphasizing that grassroots activism and engagement are more effective than financial gestures, while expressing concern about slippery slopes, unintended harm to ordinary people, and the elitist nature of divestment campaigns.
\end{promptbox}

\subsection{Proxy Embedding Training Details} \label{app:training-details}
For the viewpoint embedding model $\psi$, we use the \texttt{embeddings-for-preferences-st5-xl} created by \citet{prefembeddings2026}, which is an embedding model fine-tuned so that distances between embeddings are reflective of preference rather than semantic similarity. To train the proxy embedding model $\theta$,  we fine-tune a copy of \texttt{embeddings-for-preferences-st5-xl} to take the user context and prompt as input and predict that user's ground-truth response embedding for that prompt.

\paragraph{Input format.}
For each participant--prompt pair, the encoder input is:
\begin{promptbox}
\small
\texttt{User:} \textit{\{user context\}} \\[4pt]
\texttt{Question:} \textit{\{prompt\}}
\end{promptbox}
For Remesh, the participant description is a summary of the participant's viewpoints on the training prompts (see \Cref{app:viewpoint-summary-prompt}); for Reddit AITA, it is the summary of the user's Reddit history from \citet{wu2026humanlm}.

\paragraph{Loss function.}
The training loss combines (i) elementwise mean-squared error between predicted and target embeddings and (ii) a Pearson correlation loss that encourages the predictor to preserve the pairwise distance structure between embeddings. For a prompt $p$, let $\hat{\mathbf{e}}_i = \theta(u_i,p)\in\mathbb{R}^d$ be the predicted embedding for participant $i$, where $u_i$ is the user context, and let $\mathbf{e}_i$ be the corresponding ground-truth response embedding for that user on that prompt. We sample a batch of $s$ users for the prompt and optimize the loss:
\begin{align}
    \mathcal{L}(\hat{\mathbf{E}},\mathbf{E}) ={}&
    \underbrace{\frac{1}{sd}\sum_{i=1}^{s}\|\hat{\mathbf{e}}_i-\mathbf{e}_i\|_2^2}_{\mathcal{L}_{\mathrm{MSE}}} \\
    &+\lambda\underbrace{\left[1-\operatorname{Pearson}\!\left(\mathbf{d}(\hat{\mathbf{E}}),\mathbf{d}(\mathbf{E})\right)\right]}_{\mathcal{L}_{\mathrm{Pearson}}},
\end{align}
where $\hat{\mathbf{E}}=(\hat{\mathbf{e}}_1,\ldots,\hat{\mathbf{e}}_s)$, $\mathbf{E}=(\mathbf{e}_1,\ldots,\mathbf{e}_s)$, and $\mathbf{d}(\mathbf{E})\in\mathbb{R}^{\binom{s}{2}}$ contains all pairwise Euclidean distances. We train with $\lambda=1$.

\paragraph{Optimization.}
We fine-tune the embedding model using LoRA~\citep{hu2022lora} adapters on the query and value attention modules (rank $r=8$, $\alpha=16$, and dropout $0.05$). We use AdamW~\citep{loshchilov2018decoupled} with learning rate $10^{-5}$, weight decay $0.01$, no gradient accumulation, and gradient clipping at norm $1.0$. We train for one epoch with seed 42 and a maximum sequence length of 2,048 tokens. Each training batch consists of 16 user responses to the same prompt. For both datasets, we train using distributed data parallelism over eight H100 GPUs and bfloat16 autocasting, with gradient checkpointing enabled.

\subsection{Qualitative Examples}
\label{app:qual-examples}

Here, we present three qualitative examples for each dataset.

\begin{enumerate}
    \item \textbf{Remesh:} ``Should pension fund managers be legally allowed to consider ethical or political factors when making investment decisions, even if it might slightly reduce returns?'' See \Cref{fig:remesh-rank04-pca} and \Cref{tab:remesh-rank04-pca-slates}.
    \item \textbf{Remesh:} ``Should students be allowed to form encampments on university property as a form of extended protest?'' See \Cref{fig:remesh-rank06-pca} and \Cref{tab:remesh-rank06-pca-slates}.
    \item \textbf{Remesh:} ``Should high school students be encouraged or discouraged from participating in political activism and civil disobedience?'' See \Cref{fig:remesh-rank01-pca} and \Cref{tab:remesh-rank01-pca-slates}.
    \item \textbf{Reddit:} ``AITA for my reaction when I learned that my fiance returned my wedding dress and replaced it with the one his mom picked for me?'' See \Cref{fig:reddit-rank01-pca} and \Cref{tab:reddit-rank01-pca-slates}.
    \item \textbf{Reddit:} ``AITA for making a dad joke?'' See \Cref{fig:reddit-rank03-pca} and \Cref{tab:reddit-rank03-pca-slates}.
    \item \textbf{Reddit:} ``AITA for telling a friend's boyfriend her intentions of getting pregnant?'' See \Cref{fig:reddit-rank06-pca} and \Cref{tab:reddit-rank06-pca-slates}.
\end{enumerate}

To obtain the most informative 2D projections, we focused on prompts for which the first two principal components (PC1 and PC2) explain comparatively high variance—specifically, those in the top ten within each dataset. Even among these, the first two PCs explain up to 76\% of the variance for Remesh prompts, but only up to 38\% for Reddit. This could be because both the Reddit dataset (\texttt{humanual-opinion}) and the HumanLM simulator trained on it are strongly skewed toward ``Not the Asshole" (NTA) judgments (see \Cref{fig:reddit-confusion-matrix} below). Lack of diversity in judgments makes the differences between viewpoints less meaningful, whereas, on Remesh, the first PC dimension typically corresponds to agreement vs disagreement with the premise of the question (e.g., see the example explained in \Cref{sec:exps}). Overall, these characteristics generally make the Remesh examples easier to interpret than the Reddit ones.

\begin{figure}[h]
    \centering
    \includegraphics[width=0.45\linewidth]{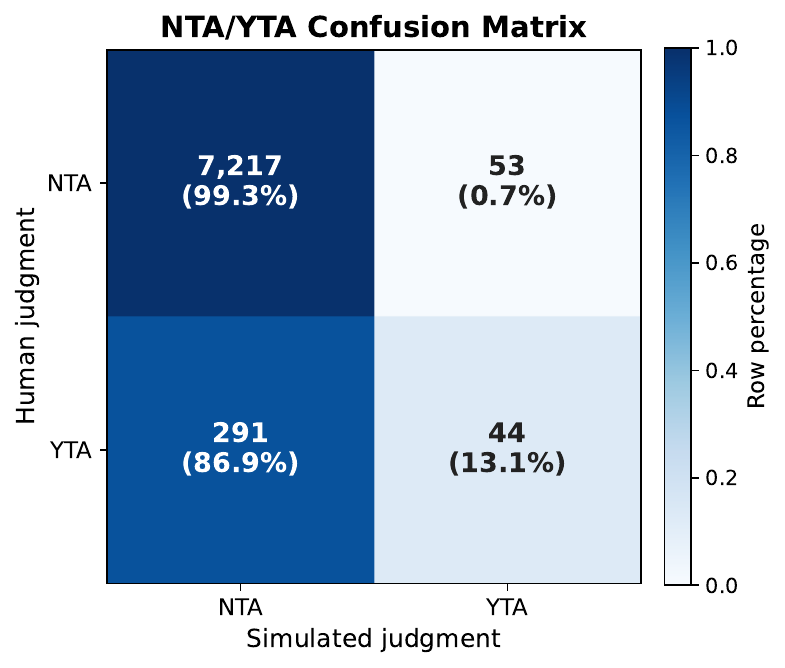}
    \caption{\textbf{Confusion matrix of NTA/YTA judgments for Reddit.} On the $y$-axis is the ground-truth human judgment from the Reddit dataset (\texttt{humanual-opinion}) and on the $x$-axis is HumanLM's simulated judgment. The dataset has a heavy skew towards human judgments of ``Not the Asshole'' (NTA), but for the judgments where the ground-truth human judgment is ``You're the Asshole'' (YTA), the HumanLM simulator also gets it wrong and predicts NTA $87\%$ of the time. This confusion matrix includes only response pairs where both the human and simulator outputs contained either ``NTA'' or ``YTA.''}
    \label{fig:reddit-confusion-matrix}
\end{figure}

\subsubsection{Remesh Qualitative Examples}

{\footnotesize

}

\begin{figure*}[h]
    \centering
    \includegraphics[width=\textwidth]{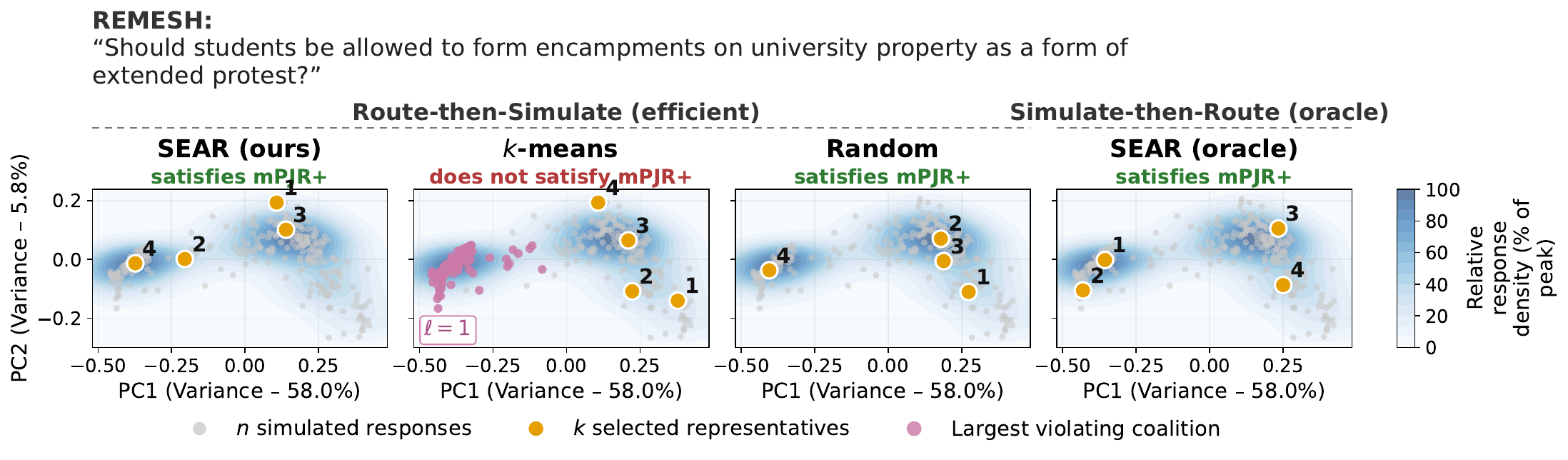}
    \caption{\textbf{Response embeddings and selected representatives for the Remesh prompt ``Should students be allowed to form encampments on university property as a form of extended protest?" ($k=4$).} The text of chosen responses is shown in \Cref{tab:remesh-rank06-pca-slates}.}
    \label{fig:remesh-rank06-pca}
\end{figure*}

{\footnotesize
%
}

\begin{figure*}[h]
    \centering
    \includegraphics[width=\textwidth]{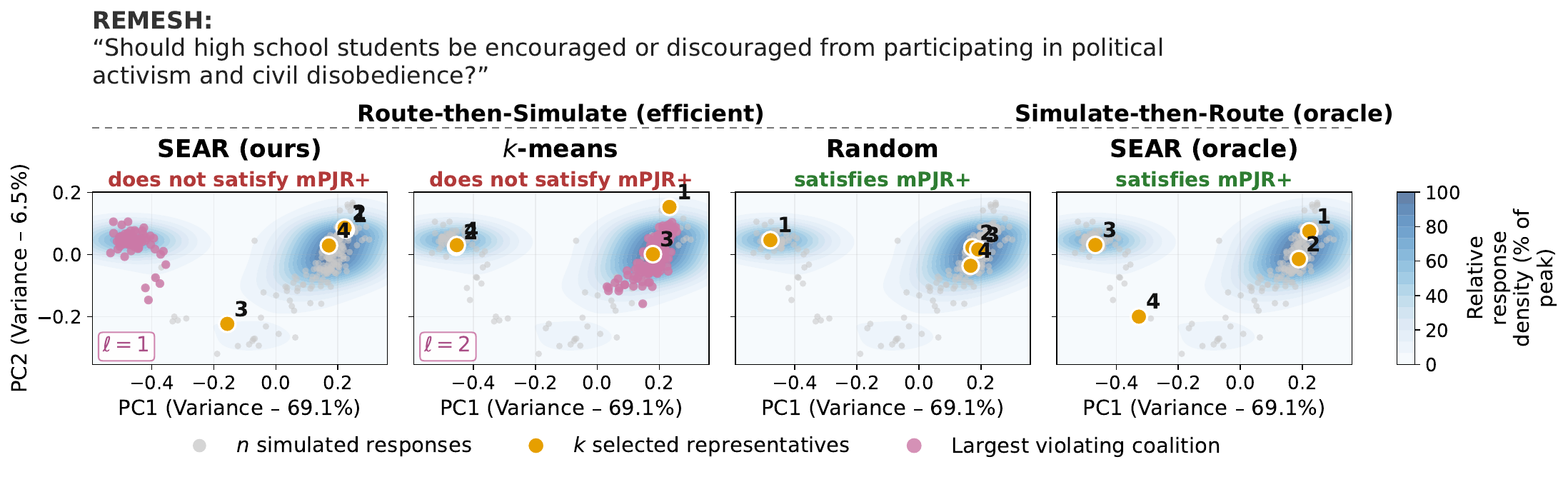}
    \caption{\textbf{Response embeddings and selected representatives for the Remesh prompt ``Should high school students be encouraged or discouraged from participating in political activism and civil disobedience?" ($k=4$).} The text of chosen responses is shown in \Cref{tab:remesh-rank01-pca-slates}.}
    \label{fig:remesh-rank01-pca}
\end{figure*}

{\footnotesize
%
}

\subsubsection{Reddit Qualitative Examples}

\begin{figure*}[h]
    \centering
    \includegraphics[width=\textwidth]{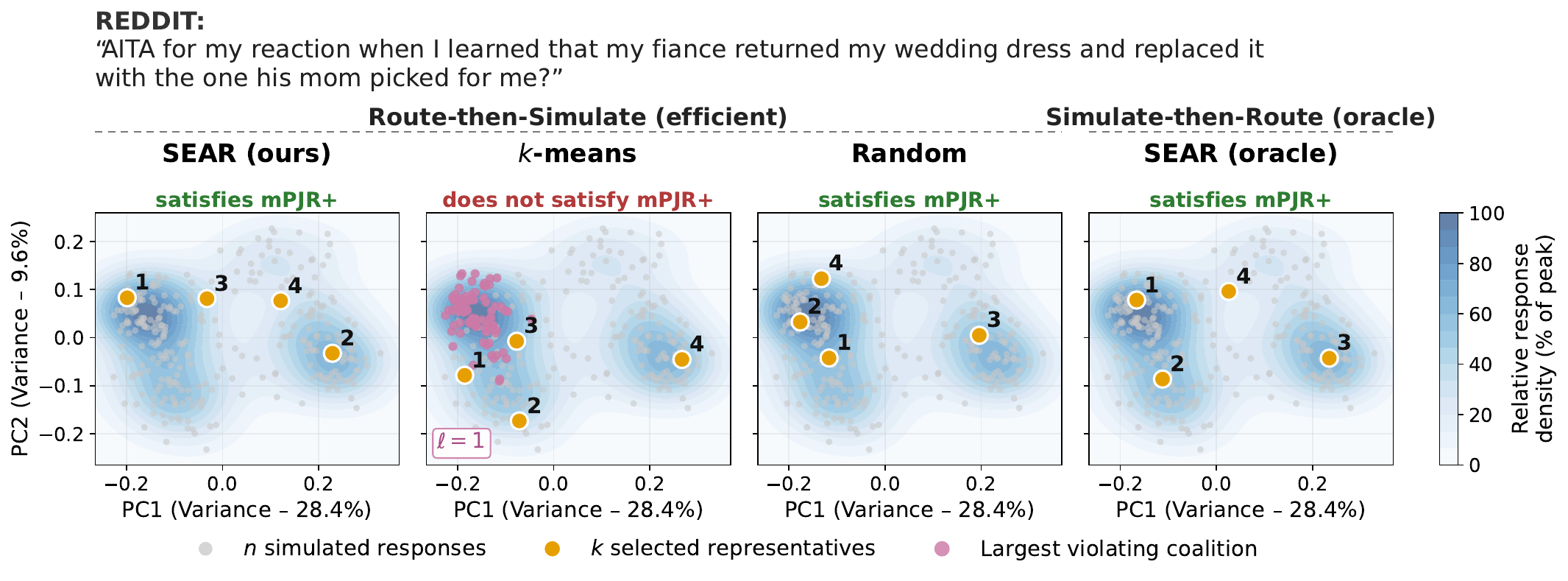}
    \caption{\textbf{Response embeddings and selected representatives for the Reddit prompt ``AITA for my reaction when I learned that my fiance returned my wedding dress and replaced it with the one his mom picked for me?'' ($k=4$).} The text of chosen responses is shown in \Cref{tab:reddit-rank01-pca-slates}.}
    \label{fig:reddit-rank01-pca}
\end{figure*}

{\footnotesize
%
}

\begin{figure*}[h]
    \centering
    \includegraphics[width=\textwidth]{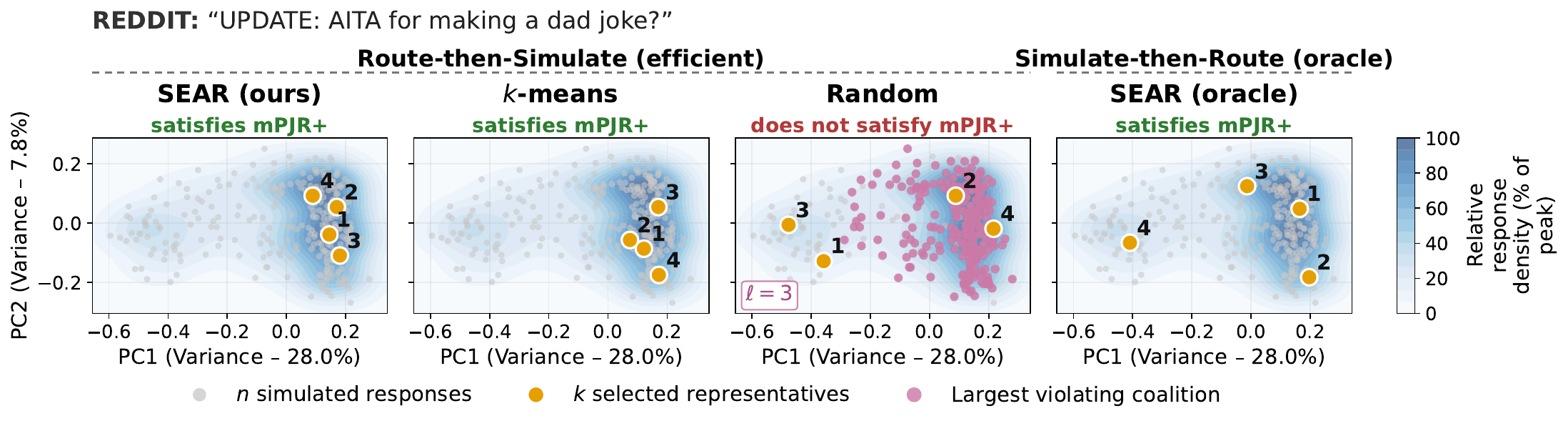}
    \caption{\textbf{Response embeddings and selected representatives for the Reddit prompt ``AITA for making a dad joke?'' ($k=4$).} The text of chosen responses is shown in \Cref{tab:reddit-rank03-pca-slates}.}
    \label{fig:reddit-rank03-pca}
\end{figure*}

{\footnotesize
%
}

\begin{figure*}[h]
    \centering
    \includegraphics[width=\textwidth]{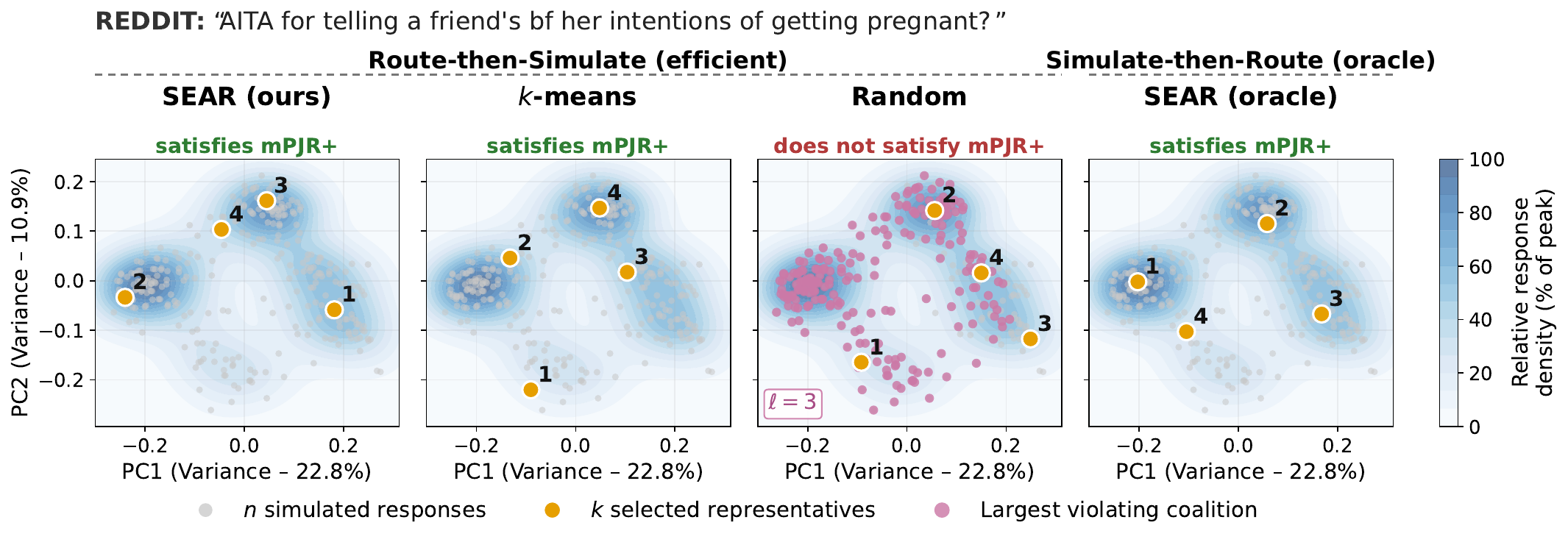}
    \caption{\textbf{Response embeddings and selected representatives for the Reddit prompt ``AITA for telling a friend's boyfriend her intentions of getting pregnant?'' ($k=4$).} The text of chosen responses is shown in \Cref{tab:reddit-rank06-pca-slates}.}
    \label{fig:reddit-rank06-pca}
\end{figure*}

{\footnotesize
%
}

\end{document}